\documentclass[11pt]{article}
 
\usepackage{fullpage,appendix}
\usepackage[bookmarks,colorlinks,breaklinks]{hyperref}  
\hypersetup{linkcolor=blue,citecolor=blue,filecolor=blue,urlcolor=blue} 
\usepackage{amsmath,amsfonts,amsthm,amssymb,xspace,bm}


\usepackage{amsmath,amsfonts,amssymb,xspace,bm}

\newcommand{\newref}[2][]{\hyperref[#2]{#1~\ref*{#2}}}
\numberwithin{equation}{section}

\newcommand{\sref}[1]{\newref[Section]{#1}}
\newcommand{\dref}[1]{\newref[Definition]{#1}}
\newcommand{\tref}[1]{\newref[Theorem]{#1}}
\newcommand{\lref}[1]{\newref[Lemma]{#1}}

\newcommand{\cref}[1]{\newref[Corollary]{#1}}

\newcommand{\eref}[1]{\newref[Equation]{#1}}

\newcommand{\clref}[1]{\newref[Claim]{#1}}
\newcommand{\gsp}{G_{sp}}
\newcommand{\spn}{S^{n-1}}
\newcommand{\gnn}{G_{\mathcal{N}}}


\newtheorem{theorem}{Theorem}[section]
\newtheorem{lemma}[theorem]{Lemma}
\newtheorem{claim}[theorem]{Claim}

\newtheorem{corollary}[theorem]{Corollary}

\newtheorem{definition}[theorem]{Definition}
\newtheorem{fact}[theorem]{Fact}

\renewcommand{\epsilon}{\varepsilon}

\DeclareMathOperator*{\pr}{\mathsf{Pr}} 

\DeclareMathOperator*{\ex}{\mathbb{E}}
\newcommand{\rgta}{\rightarrow}
\newcommand{\lfta}{\leftarrow}
\newcommand{\iprod}[2]{\langle #1,#2\rangle}

\newcommand{\rn}{\mathbb{R}^n}
\newcommand{\rnn}{\mathbb{R}^{n \times n}}

\newcommand{\reals}{\mathbb{R}}

\newcommand{\dpm}{\{1,-1\}}
\newcommand{\oo}{\dpm}
\newcommand{\dpmn}{\dpm^n}
\newcommand{\sign}{\mathsf{sign}}

\newcommand{\hh}{\mathcal{H}}

\newcommand{\gsn}{\mathcal{N}(0,1)^n}
\newcommand{\calC}{{\mathcal C}}
\newcommand{\ugs}{\mathcal{N}}

\newcommand{\NS}{\mathbb{NS}}

\newcommand{\eps}{\epsilon}

\newcommand{\note}[1]{\marginpar{\tiny *note in TeX*}}
\newcommand{\ignore}[1]{}

\newcommand{\calD}{{\cal D}}
\newcommand{\calN}{{\cal N}}

\renewcommand{\phi}{\varphi}

 \newcommand{\R}{\mathbb{R}}

 \newcommand{\NN}{{{\cal N}^n}}
\newcommand{\E}{E}

 \newcommand{\poly}{\mathrm{poly}}

 \newcommand{\eqdef}{\stackrel{\textrm{def}}{=}}

 \newcommand{\C}{{{\cal C}}}
 \newcommand{\D}{{{\cal D}}}
 
 \newcommand{\ptp}{\mathcal{K}}


 \newcommand{\fak}{f^{\wedge k}}
 \newcommand{\psil}{\|\psi^{(4)}\|_1}
 
 \newcommand{\rnk}{\reals^{n\times k}}
 
 \newcommand{\greg}{G}
 \newcommand{\zo}{\{0,1\}}

\title{An Invariance Principle for Polytopes}
\author{PRAHLADH HARSHA\\Tata Institute of Fundamental Research, Mumbai\and
ADAM KLIVANS\\University of Texas, Austin \and 
RAGHU MEKA\\University of Texas, Austin}
\begin{document}
\maketitle
  \begin{abstract}
    Let $X$ be randomly chosen from $\{-1,1\}^n$, and let $Y$ be randomly
   chosen from the standard spherical Gaussian on $\R^n$.  For any (possibly unbounded) polytope $P$
   formed by the intersection of $k$ halfspaces, we prove that 
 $$\left|\Pr\left[X  \in P\right] - \Pr\left[Y \in P\right]\right| \leq \log^{8/5}k \cdot \Delta,$$ where $\Delta$ is
 a parameter that is small for polytopes formed by the intersection of
 ``regular'' halfspaces (i.e., halfspaces with low influence). The
 novelty of our invariance principle is the polylogarithmic dependence
 on $k$. Previously, only bounds that were at least linear in $k$ were
 known.  The proof of the invariance principle is based on a
 generalization of the Lindeberg method for proving central limit
 theorems and could be of use elsewhere.
 
 We give two important applications of our invariance principle, one from learning theory and the other from pseudorandomness:
\begin{enumerate}
\item A bound of $\log^{O(1)}k \cdot {\epsilon}^{1/6}$ on the Boolean
  noise sensitivity of intersections of $k$ ``regular'' halfspaces
  (previous work gave bounds linear in $k$). This gives a
  corresponding agnostic learning algorithm for intersections of
  regular halfspaces.
 
\item A pseudorandom generator (PRG) for estimating the Gaussian
  volume of polytopes with $k$ faces within error $\delta$ and
  seed-length $O(\log n\,\poly(\log
  k,1/\delta))$. 
 \end{enumerate}
 We also obtain PRGs with similar parameters that fool polytopes
 formed by intersection of regular halfspaces over the
 hypercube. Using our PRG constructions, we obtain the first
 deterministic quasi-polynomial time algorithms for approximately
 counting the number of solutions to a broad class of integer
 programs, including dense covering problems and contingency tables.
 \end{abstract}

\section{Introduction: Invariance Principles in Theoretical Computer Science}
An important theme in theoretical computer science over the last two
decades has been the usefulness of translating a combinatorial problem
over a discrete domain (e.g., $\{-1,1\}^n$) to a problem in continuous
space.  The notion of convex relaxation, for example, is now a
standard technique in the design of algorithms for optimization
problems.  More recently, the study of analytic properties of Boolean
functions (e.g., Fourier spectra and sensitivity) has been a
fundamental tool for proving results in hardness of approximation
\cite{deWolf2008,ODonnell2008} and learning theory \cite{Mansour1994}.

The influential work of Mossel, O'Donnell, and Oleszkiewicz
\cite{MosselOO2005} proving the ``Majority Is Stablest''
conjecture has led to a rich collection of hardness results for
constraint satisfaction problems, most notably for the Max-Cut
problem.  The crux of their work is an invariance principle relating
the behavior of low-degree polynomials over the uniform measure on
$\{-1,1\}^n$ to their behavior with respect to Gaussians:

 \begin{theorem}[invariance
   principle for polynomials~\cite{MosselOO2005}]\footnote{Similar invariance
     principles were also shown by Chatterjee \cite{Chatterjee2005}
     and Rotar \cite{Rotar1979}.}
 Let $P$ be a multilinear polynomial such that $\|P\| = 1$. Then, for any $t\in \reals$,

 \[ \left| \Pr_{x \in_u \{-1,1\}^n}\left[P(x) > t\right] - \Pr_{x \lfta \calN^n}\left[P(x) > t\right] \right| \leq \tau .\]
 \end{theorem}

 Here $\calN^n$ is the standard multivariate spherical Gaussian distribution on $\R^n$; the
 parameter $\tau$ depends on the coefficients of $P$ and is small if $P$ is ``regular'' in the sense that the ``influence'' of each variable in $P$ is small.

 Roughly speaking, the above invariance principle says that the cumulative distribution function (cdf) of a
 polynomial over $\{-1,1\}^n$ is close to the cdf of a polynomial
 over $\calN^n$ if the coefficients of the polynomial are sufficiently
 regular.  

 Additionally, the above invariance principle and its generalizations in \cite{Mossel2008} have had a wealth of powerful applications in the following areas: hardness of approximation (see \cite{DinurMR2009,Austrin2007,AustrinM09,Raghavendra2008,ODonnellW2009,BansalK2010}, hardness of learning (see \cite{FeldmanGRW2009}), social choice theory (see \cite{Mossel2011}), testing (see \cite{BlaisO2010}), graph products (see \cite{DinurFR2008}) and the analysis of Boolean functions (see \cite{DiakonikolasHKMRST2010}) among others.  Invariance principles are now widely considered to be powerful tools in computational complexity theory.  As such, it is important to continue to understand, quantitatively, how a function's cumulative density function changes when translating from one underlying distribution to another.




\subsection{An Invariance Principle for Polytopes}
 The main result of this paper is an invariance principle for
 characteristic functions of polytopes.  Recall that a polytope $\ptp$ is
 a (possibly unbounded) convex set in $\R^n$ formed by the intersection
 of some finite number of supporting halfspaces.  We refer to $\ptp$ as a
 $k$-polytope if it is equal to the intersection of $k$ halfspaces.  To state our result we need the notion of {\it regularity}.

\begin{definition}[regularity]\label{def:regularity}
A vector $u \in \reals^n$ is $\epsilon$-regular if $\sum_i u_i^4 \leq \epsilon^2 \|u\|_2^2$. A matrix $W \in \reals^{n \times k}$ is $\epsilon$-regular if every column of $W$ is $\epsilon$-regular. A polytope $\ptp = \{x: W^Tx \leq \theta\}$ is $\epsilon$-regular if $W$ is $\epsilon$-regular~\footnote{``Regular polytopes'' have a different meaning in combinatorics, but for the purpose of this paper, we will abuse notation and say a polytope is $\epsilon$-regular if it is formed by the intersection of $\epsilon$-regular halfspaces as in \dref{def:regularity}.}. 
\end{definition}

We will require our polytopes to be sufficiently regular to apply our invariance principle.  This is necessary even in the case of a single halfspace; the function $1x_{1} + 0x_{2} + \ldots + 0x_{n} = x_{1}$ where each $x_{i} \in \{-1,1\}$ will never converge to a Gaussian (this linear function is highly non-regular). Note that regularity does not depend on the threshold vector $\theta$.

Our main theorem is as follows (see \tref{th:ipmain} for exact statement):

 \begin{theorem}[invariance principle for polytopes]\label{thm:main}
For $\ptp$ a $\epsilon$-regular $k$-polytope,
 \[ \left| \Pr_{x \in_u \{-1,1\}^n} \left[x \in \ptp\right] - \Pr_{x \lfta \calN^n}\left[x \in \ptp\right] \right| \leq C \log^{8/5}k\cdot \epsilon^{1/6} .\]

 \end{theorem}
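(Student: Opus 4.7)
My plan is to use a \emph{smoothing-plus-invariance} strategy adapted to polytopes. Writing $\ptp = \{x : L_i(x) \geq 0,\; i=1,\dots,k\}$ for unit-normalized linear forms $L_1,\dots,L_k$, I would pick a smooth one-dimensional approximator $\phi:\reals\to[0,1]$ of the step function $\ind{\cdot\geq 0}$ with transition width $\eta$ and controlled derivatives $|\phi^{(j)}|\lesssim \eta^{-j}$ for $j\le 3$, and define the test function
\[
\Psi_\eta(x) \;=\; \prod_{i=1}^{k} \phi\!\bigl(L_i(x)/\eta\bigr).
\]
By the triangle inequality,
\[
\bigl|\ex\ind{\ptp}(X)-\ex\ind{\ptp}(Y)\bigr|
\;\le\; \bigl|\ex[\Psi_\eta(Y)-\ind{\ptp}(Y)]\bigr|
\;+\; \bigl|\ex[\Psi_\eta(X)-\ind{\ptp}(X)]\bigr|
\;+\; \bigl|\ex\Psi_\eta(X)-\ex\Psi_\eta(Y)\bigr|,
\]
which splits the task into two \emph{smoothing errors} and one \emph{invariance error} for the smooth function $\Psi_\eta$.

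For the Gaussian smoothing error I would invoke the Nazarov-type bound on the Gaussian measure of an $\eta$-neighborhood of $\partial\ptp$: such a tube has mass only $O(\eta\sqrt{\log k})$, and it is precisely this $\sqrt{\log k}$ (rather than the naive linear-in-$k$ union bound) that drives the final polylogarithmic factor. For the Bernoulli smoothing error, one needs the analogous anti-concentration statement on the hypercube; this is where the regularity parameter $\Delta$ enters, via a Berry--Esseen style transport of the Gaussian tube estimate to the cube, losing only an additive $\Delta\cdot\poly\log k$ term.

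For the invariance error I would run a Lindeberg/Mossel exchange, replacing $x_j$ by $g_j$ one coordinate at a time and Taylor-expanding $\Psi_\eta$ in that coordinate to third order. The resulting bound involves third mixed derivatives of $\Psi_\eta$, and a chain-rule expansion of the product produces a sum over ordered triples $(i_1,i_2,i_3)$ of halfspaces of terms of the form
\[
\Bigl(\prod_{i\notin\{i_1,i_2,i_3\}}\phi(L_i(x)/\eta)\Bigr)
\cdot \phi^{(a_1)}(L_{i_1}(x)/\eta)\,\phi^{(a_2)}(L_{i_2}(x)/\eta)\,\phi^{(a_3)}(L_{i_3}(x)/\eta)
\;\cdot\; L_{i_1,j}L_{i_2,j}L_{i_3,j}.
\]
The key observation is that the derivative factors are supported on thin boundary strips $|L_{i_r}(x)|\lesssim\eta$, while the ``bulk'' product factor restricts attention close to $\ptp$; combining these restrictions lets one bound each triple-sum term by a Nazarov-type strip estimate, collapsing the naive $k^3$ dependence to $\poly\log k$. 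Combining with regularity controls the sum over coordinates $j$ by $\Delta/\eta^{c}\cdot\poly\log k$. Finally, optimizing $\eta$ to balance the smoothing error $\eta\sqrt{\log k}$ against this invariance error yields the claimed $\log^{8/5}k\cdot\Delta$ bound; the fractional exponent $8/5$ emerges precisely from this optimization.

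The main obstacle is the invariance step: a purely naive bound on the third derivatives of $\Psi_\eta$ would cost a factor of $k^3$, obliterating any hope of polylogarithmic dependence. Avoiding this requires proving a joint anti-concentration statement for several halfspaces simultaneously---essentially a smoothed, higher-order variant of Nazarov's Gaussian surface-area estimate---so that the Lindeberg triple sum can be re-expressed as an expectation of a Nazarov-controlled strip indicator with the remaining $\phi$ factors providing the necessary localization. The interplay between this multi-halfspace anti-concentration and the exchange lemma is the technical heart of the theorem.
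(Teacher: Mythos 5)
Your skeleton captures some of the right ingredients (Lindeberg exchange, a smooth multivariate test function, Nazarov's Gaussian surface-area bound) but misses the two technical devices that actually make the polylogarithmic-in-$k$ bound work, and on inspection both of the places where you gesture at ``localization saving the day'' are precisely where the naive argument breaks.

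\textbf{The coordinate-wise Lindeberg exchange gives the wrong regularity parameter.} Your plan replaces $x_j$ by $g_j$ one coordinate at a time, which, after Taylor expansion, yields an error scaling like $\sum_{j\in[n]}\bigl(\max_{p\in[k]}|W_{jp}|\bigr)^4$ times the relevant derivative norm. This quantity is \emph{not} small for regular polytopes: with $n=km$ and each of the $k$ unit columns supported on a disjoint block of size $m$ with entries $1/\sqrt m$, every column is $(1/\sqrt m)$-regular but $\sum_j(\max_p|W_{jp}|)^4 = k/m$, a factor $k$ worse than $\max_p\sum_j W_{jp}^4=1/m$. No amount of cleverness in bounding the derivatives can undo this mismatch because it lives in the coefficient sum, not the derivative factor. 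The paper replaces the coordinate-by-coordinate hybrid with a \emph{randomized block-wise hybrid}: it partitions $[n]$ into $1/\epsilon$ blocks using a $\Theta(\log k)$-wise independent hash family and swaps whole blocks at once. Averaging over the hash has a smoothing effect on the coefficients that brings the error down to $(\log k)\cdot\max_p\sum_j W_{jp}^4$, i.e.\ down to the correct (column-wise) regularity.

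\textbf{The product test function and the localization argument do not work in the Lindeberg step.} Your $\Psi_\eta=\prod_i\phi(L_i/\eta)$ has fourth-derivative sum $\sup_x\sum_{p,q,r,s}|\partial_p\partial_q\partial_r\partial_s\Psi_\eta(x)|$ growing polynomially in $k$, which the paper explicitly identifies as fatal. You propose to recover this via the support restrictions on $\phi',\phi'',\dots$ plus a Nazarov-type strip estimate, but the Lindeberg error is an expectation over \emph{hybrid} distributions (partly Gaussian, partly Bernoulli, partly pseudorandom), and Nazarov's theorem only bounds the Gaussian measure of boundary strips. You would need a uniform ``joint anti-concentration at every hybrid step'' statement, which is essentially what you are trying to prove. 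The paper sidesteps this entirely by importing Bentkus's (1990) construction of a smooth $\ell_\infty$-indicator approximant for which $\sup_x\sum_{p,q,r,s}|\partial_p\partial_q\partial_r\partial_s\psi(x)|\le C\log^3 k/\lambda^4$ \emph{deterministically}, so no anti-concentration is needed inside the Lindeberg argument at all. Relatedly, your triangle decomposition requires bounding $|\ex[\Psi_\eta(X)-\ind\ptp(X)]|$ over the hypercube, which requires a Bernoulli analogue of the Nazarov tube estimate; that is circular, since it is a corollary of the very invariance you are proving. The paper's Lemma~\ref{lm:smtopoly} is built asymmetrically so that anti-concentration is needed only for the Gaussian variable.

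In short, Nazarov is necessary but far from sufficient; without the randomized block-wise hybrid you cannot get the right dependence on the regularity parameter, and without Bentkus's $\ell_1$-controlled test function (or something equally sharp) you cannot avoid a polynomial-in-$k$ loss in the smoothing step.
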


Our invariance principle also holds more generally for any product distribution that is hypercontractive and whose first four moments are appropriately bounded (often in this paper we focus on the special case of uniform on $\{-1,1\}^n$). 

 The novelty of our theorem is the dependence of the error on $k$.
 Applying a recent result due to Mossel \cite{Mossel2008}, it is possible to obtain a statement similar to \tref{thm:main} with an error term that has a polynomial dependence on $k$.  Achieving polylogarithmic dependence on $k$, however, is much harder, and we need to use some nontrivial results from the analysis of convex sets in Gaussian space. We remark that our result is optimal up to polylogarithmic factors: any invariance principle as above cannot have an error bound of $o(\epsilon \cdot \sqrt{\log k})$ (see \sref{sec:lbound}).

 The case $k=1$, a single halfspace, is equivalent to the classical Berry-Ess\'een theorem \cite{Feller1}, a fundamental theorem from probability and statistics giving a quantitative version of the Central Limit Theorem.  We can therefore view our principle as a generalization of the Berry-Ess\'een theorem for polytopes.

\subsection{Applications of Our Invariance Principle}

While we believe the statement of our main theorem is interesting in and of itself, we apply our invariance principle to obtain striking new results in various subfields of theoretical computer science:

\begin{itemize}

\item The Analysis of Boolean Functions:  we give new bounds on the {\em Noise Sensitivity} of characteristic functions of polytopes.

\item Learning Theory:  we give the best known algorithms for (agnostically) learning intersections of halfspaces with respect to the uniform distribution on $\{-1,1\}^n$.

\item Pseudorandomness: we build pseudorandom generators for polytopes and give the first deterministic algorithms for approximately counting the number of solutions to broad classes of integer programs.

\end{itemize}

We elaborate on these applications below.  More generally, our main theorem gives new insight on the structure of integer points in polytopes (that is, solutions to integer programs).  Understanding this structure is an important topic in computer science \cite{BarvinokV2008}, optimization \cite{Ziegler-polytope}, and combinatorics \cite{BeckRobins}, and we believe our invariance principle will find many future applications. 

\subsection{Application: Bounding the Noise Sensitivity of Intersections of Halfspaces}

The noise sensitivity of Boolean functions, introduced in the seminal
works of Kahn, Kalai and Linial \cite{KahnKL1988} and Benjamini, Kalai and Schramm \cite{BenjaminiKS1999}, is an important notion in the {\sl analysis} of Boolean functions. Roughly speaking, the noise sensitivity of a Boolean function $f$ measures the probability over a randomly chosen input $x$ that $f$ changes sign if each bit of $x$ is flipped independently with probability $\delta$.

Bounds on the noise sensitivity of Boolean functions have direct applications in hardness of approximation \cite{Hastad2001,KhotKMO2007}, hardness amplification \cite{ODonnell2004}, circuit complexity \cite{LinialMN1993}, the theory of social choice \cite{Kalai2005}, and quantum complexity \cite{Shi2000}. Here, we focus on applications in learning theory, where it is known that bounds on the noise sensitivity of a class of Boolean functions yield learning algorithms that succeed in harsh noise models such as the agnostic model of learning \cite{KalaiKMS2008}.  

A direct application of our invariance principle \tref{thm:main} gives the following new bound on the noise sensitivity of intersections of regular halfspaces:

 \begin{theorem}[noise sensitivity of intersections of halfspaces] \label{thm:mainlearn}
 Let $f$ be computed by the intersection of $k$, $\eps$-regular halfspaces.  Then the Boolean noise sensitivity of $f$ for noise rate $\epsilon$ is at most $(\log k)^{O(1)} \cdot \epsilon^{1/6}$.
 \end{theorem}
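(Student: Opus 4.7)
The plan is to reduce Boolean noise sensitivity to its Gaussian counterpart by a double application of \tref{thm:main}, then bound the Gaussian noise sensitivity of a $k$-polytope via its Gaussian surface area. Let $P$ be the intersection of the $k$ regular halfspaces, so $f = \mathbf{1}_P$, and let $(X,Y) \in \{-1,1\}^n \times \{-1,1\}^n$ be an $\epsilon$-noise pair (each coordinate independently satisfies $Y_i = X_i$ with probability $1-\epsilon$). By the symmetry of this joint distribution under swapping $X \leftrightarrow Y$,
\[
\mathrm{NS}_\epsilon(f) \;=\; \Pr[f(X) \neq f(Y)] \;=\; 2\Pr[X \in P] \;-\; 2\Pr[(X,Y) \in P \times P],
\]
a difference of two polytope probabilities: the first against $P \subseteq \mathbb{R}^n$ under the uniform measure on $\{-1,1\}^n$, the second against the $2k$-polytope $P \times P \subseteq \mathbb{R}^{2n}$ under the product measure on pairs $(X_i,Y_i)$ valued in $\{-1,1\}^2$.

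I would then apply \tref{thm:main} to each term. For the first, the theorem applies in the stated form, with $\epsilon$-regularity giving an error of at most $\log^{8/5}(k)\cdot \epsilon^{1/6}$. For the second, I would invoke the generalization (mentioned in the paragraph after \tref{thm:main}) to any hypercontractive product distribution with bounded first four moments; the pair distribution on $\{-1,1\}^2$ has bounded moments, and the halfspaces defining $P \times P$ inherit $\epsilon$-regularity simply by padding each original weight vector with zeros. This produces an error of $\log^{8/5}(2k)\cdot \epsilon^{1/6}$ when replacing $(X,Y)$ by the corresponding correlated Gaussian pair $(X',Y')$ with $\mathbb{E}[X'_i Y'_i] = 1 - 2\epsilon$. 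After both substitutions, $\mathrm{NS}_\epsilon(f)$ lies within $\log^{O(1)}(k)\cdot \epsilon^{1/6}$ of the Gaussian noise sensitivity of $P$ at rate $\epsilon$, namely $2\Pr[X'\in P] - 2\Pr[(X',Y')\in P\times P]$.

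It remains to bound this Gaussian noise sensitivity. I would use the standard estimate that the Gaussian noise sensitivity of a set $A$ at rate $\epsilon$ is $O(\sqrt{\epsilon})\cdot \Gamma(A)$, combined with the known fact that a $k$-polytope has Gaussian surface area $\Gamma = O(\sqrt{\log k})$ (Nazarov / Ball-type bound). This yields a Gaussian noise sensitivity of at most $O(\sqrt{\epsilon\log k})$, which for $\epsilon \in [0,1]$ is dominated by the invariance-principle error since $\epsilon^{1/2} \leq \epsilon^{1/6}$; summing the two contributions gives the claimed $\log^{O(1)}(k)\cdot \epsilon^{1/6}$ bound. The main obstacle is the second invariance-principle application, because the underlying product distribution is over pairs rather than single coordinates, and the hypercontractivity constants of this pair distribution degrade as $\epsilon \to 0$; careful bookkeeping is needed to show that the stated extension of \tref{thm:main} absorbs the pair structure without damage, or equivalently that the error still bounds out at $\log^{O(1)}(k) \cdot \epsilon^{1/6}$ in the regimes of interest. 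The remaining ingredients---the direct invariance step and the Gaussian surface area estimate for $k$-polytopes---are essentially off-the-shelf.
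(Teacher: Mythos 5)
Your proposal takes a genuinely different route from the paper's. The paper's proof (\tref{thm:ns} with $\delta=\epsilon$) never forms the correlated pair $(X,Y)$ as a single object. Instead, \clref{clm:ns5} shows that with probability $1-2\delta$ the perturbation moves $W^TX$ by at most $\lambda$ in $\ell_\infty$ (with $\lambda\approx\sqrt{\delta\log(k/\delta)}+\sqrt{\epsilon}\log^{3/4}(k/\delta)$), so $f$ can flip only if $W^TX$ already lies in the $\lambda$-annulus around $\partial\,\rect(\theta)$; \clref{clm:ns6} then bounds the \emph{Boolean} measure of that annulus by one application of \tref{th:ipmain} plus Nazarov. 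Your decomposition $\NS_\epsilon(f)=2\Pr[X\in P]-2\Pr[(X,Y)\in P\times P]$, with two invariance steps and the standard estimate $\GNS_{2\epsilon}(P)\leq O(\sqrt{\epsilon})\,\Gamma(P)\leq O(\sqrt{\epsilon\log k})$, is a classic and attractive alternative, and the arithmetic does close ($\sqrt{\epsilon}\leq\epsilon^{1/6}$ and the $\log(1/\epsilon)$ factor in \tref{th:ipmain} is absorbed into $\epsilon^{1/6}$ for small $\epsilon$).

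The genuine gap is the second invariance step. \tref{th:ipmain} is stated for $\mu^n$ with $\mu$ a \emph{one-dimensional} proper hypercontractive distribution, compared against $\calN(0,1)^n$. You need to compare the product pair distribution $\mu_2^n$ on $(\dpm^2)^n$ to $\calN(0,\Sigma)^n$ with $\Sigma$ the $2\times2$ covariance matrix with off-diagonal entry $1-2\epsilon$; that is a two-dimensional-marginal, non-isotropic version of the invariance principle that the paper neither states nor proves, so you would have to re-run \lref{lm:foolsmooth} and the rectangle translation in this setting. Interestingly, the specific worry you flag --- hypercontractivity of $\mu_2$ degrading as $\epsilon\to 0$ --- is actually a non-issue, for a structural reason you did not note: the $2k$ weight vectors of $P\times P$ are each supported entirely on the $X$-block or entirely on the $Y$-block of $\reals^{2n}$, so every linear form $\iprod{W^p_1}{U}$ appearing in the Taylor/hybrid step sees only one marginal of the pair, which is plain uniform on $\dpm$ with its usual $(2,q)$-hypercontractivity. (Degradation would occur only for forms like $X_i-Y_i$, which never arise.) The mixed third moments of $\mu_2$ all vanish and match the correlated Gaussian, so the ``proper'' condition also extends. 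The extension is therefore plausible and not endangered by the issue you raised, but it is real work and not an off-the-shelf application of the theorem as stated.
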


The current best bound for the noise sensitivity of intersection of $k$ arbitrary halfspaces is $O(k\sqrt{\epsilon})$. This bound is obtained by starting with the $\sqrt{\epsilon}$ noise sensitivity bound for a single halfspace due to Peres \cite{Peres2004} and applying a union bound over $k$ halfspaces. On the other hand, optimal bounds of $\Theta(\sqrt{\log k} \sqrt{\epsilon})$ for the related Gaussian noise sensitivity were obtained recently by Klivans, O'Donnell and Servedio \cite{KlivansOS2008}.  Our result is an important step towards improving noise sensitivity bounds for intersections of arbitrary (not necessarily regular) halfspaces.  We believe that the right order for Boolean noise sensitivity of intersection of $k$ halfspaces is $\Theta(\sqrt{\log k} \sqrt{\epsilon})$ as well. 

\subsection{Application: Learning Intersections of Halfspaces}

We give new result for agnostically learning intersections of halfspaces with respect to the uniform distribution on $\{-1,1\}^n$.  Learning intersections of halfspaces (i.e., convex sets) is a fundamental challenge from learning theory.  Distribution-free learning of even an intersection of two halfspaces remains a challenging open problem.  A natural restriction of the problem is to assume the underlying distribution is uniform over $\{-1,1\}^n$ (this can be seen to be more difficult than the case where the underlying distribution is Gaussian). 

Applying a result of Kalai et al.~\cite{KalaiKMS2008} and Klivans et al.~\cite{KlivansOS2004},  \tref{thm:mainlearn} implies the following:

\begin{theorem}[learning intersections of halfspaces] \label{thm:mainlearnstate}
The conept class of intersections of $k$ halfspaces are agnostically  learnable with respect to the uniform distribution on $\{-1,1\}^n$ in time $n^{(\log^{O(1)} k)}$ for any constant error parameter.
\end{theorem}

Agnostic learning corresponds to learning with adversarial noise (see \sref{sec:ag} for a precise definition). In particular, intersections of $\{-1,1\}$ halfspaces (oriented majorities) are $\eps$-regular and fall into this class. The previous best algorithm for learning these concept classes, even in the easier PAC model, ran in time $n^{O(k^2)}$ (\cite{KlivansOS2004,KalaiKMS2008}).

The obvious remaining open problem here is to agnostically learn intersections of arbitrary (not necessarily regular) halfspaces with respect to the uniform distribution while preserving the quasipolynomial-time dependence on the number of halfspaces.  Typically, handling the regular case is the first step towards such a result, and we have accomplished that here for the first time.
  
\ignore{
Agnostic learning is known to be a challenging learning model in which a learner must succeed with respect to adversarial noise and is at least as hard as PAC learning. 

Learning convex sets is an important learning theory topic, and previous work on learning intersections of halfspaces was based on a noise sensitivity bound of $k \sqrt{\epsilon}$ for the intersection of $k$ halfspaces \cite{KlivansOS2004,KalaiKMS2008}.  The upper bound $k \sqrt{\epsilon}$ is obtained by starting with the $\sqrt{\epsilon}$ noise sensitivity bound for a single halfspace due to Peres \cite{Peres2004} and applying a union bound over $k$ halfspaces. 

 We feel that our result is an important step towards improving noise sensitivity bounds for intersections of arbitrary (not necessarily regular) halfspaces (we conjecture that $\poly\log k \cdot \epsilon^{\Omega(1)}$ is the correct answer).
}

 \subsection{Application: Pseudorandomness for Polytopes} \label{sec:prgthms}

 Our invariance principle also yields new results for several problems in derandomization.  In particular, we give the first deterministic algorithms for approximately counting the number of solutions to broad classes of integer programs.  Recall the following definition of pseudorandom generators (PRGs):
 \begin{definition}
 Let $\mu$ be a distribution over $\reals$. A function $G:\zo^r \rgta \dpm^n$ is said to $\delta$-fool a polytope $\ptp$ with respect to $\mu$ if the following holds.
 \[ \left|\pr_{y \in_u \zo^r}\left[G(y) \in \ptp\right] - \pr_{X \lfta \mu^n}\left[X \in \ptp\right]\right| \leq \delta.\]
 \end{definition}

Combining our invariance principle with a PRG similar to a recent construction of Meka and Zuckerman \cite{MekaZ2010}, we obtain the following pseudorandom generator:


 \begin{theorem}[PRGs for regular polytopes]\label{th:prgmain}
 For all $\delta \in(0,1)$, there exists an explicit PRG $G:\zo^r \rgta \dpm^n$ with $r = O((\log n \log k)/\epsilon)$ that $\delta$-fools all polytopes formed by the intersection of $k$ $\epsilon$-regular halfspaces with respect to all proper and hypercontractive distributions $\mu$ for $\epsilon = \delta^5/(\log^{8.1} k)(\log(1/\delta))$.
 \end{theorem}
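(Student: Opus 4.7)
The plan is to adapt the pseudorandom generator construction of Meka and Zuckerman~\cite{MekaZ2009} for polynomial threshold functions to the polytope setting, using our new polytope invariance principle (\tref{thm:main}) as the key analytic tool. The generator $G$ first draws a hash function $h:[n]\to[t]$ from an $O(\log k)$-wise independent family, where $t=\Theta((\log k)/\epsilon)$; then, within each bucket $h^{-1}(j)$, it independently samples the bucket coordinates from a $4$-wise independent distribution over $\mu$. The seed length is $O(\log n)$ for the hash plus $O(\log n)$ per bucket for the $4$-wise independent generators, giving $O((\log n)(\log k)/\epsilon)$ bits total, as required.

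The first step of the analysis is a regularity preservation lemma: with probability at least $1-\delta/2$ over $h$, for every halfspace weight vector $w^{(i)}$ and every bucket $j$, the restriction $w^{(i)}|_{h^{-1}(j)}$ remains $O(\epsilon)$-regular in the sense needed by \tref{thm:main}. This follows from standard concentration for $\ell_2$ and $\ell_4$ norms under random partitioning, and the $O(\log k)$-wise independence of $h$ is exactly what enables a union bound over all $kt$ (halfspace, bucket) pairs to succeed.

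The second step is a bucket-by-bucket hybrid argument. Define hybrid distributions $H_0,H_1,\ldots,H_t$, where $H_j$ draws truly $\mu$-distributed samples within each bucket $\ell\le j$ and $4$-wise independent samples within each bucket $\ell>j$. Conditioning on the values outside bucket $j$ reduces $\{x\in\ptp\}$ to membership in a polytope on the bucket-$j$ variables whose halfspace normals are the restrictions $w^{(i)}|_{h^{-1}(j)}$ with appropriately shifted thresholds. By the regularity established in the first step this restricted polytope satisfies the hypothesis of our invariance principle, so \tref{thm:main} — applied both to the true $\mu$-measure and (via a moment-matching extension of its proof) to the $4$-wise independent measure — bounds each by $\log^{8/5}k\cdot\epsilon^{1/6}$ from the common Gaussian reference distribution on the bucket. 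Combining these per-bucket bounds across the $t$ hybrids by a triangle inequality and tuning parameters yields the trade-off $\epsilon=\delta^5/(\log^{8.1}k\,\log(1/\delta))$ claimed in the statement.

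The main obstacle will be invoking the invariance principle for the $4$-wise independent measure rather than for the truly uniform one for which \tref{thm:main} is stated. We expect that its proof — which follows the Mossel--O'Donnell--Oleszkiewicz replacement paradigm, controlling errors via matched low-order moments and hypercontractivity — extends cleanly to any hypercontractive product measure that agrees with $\mu$ on its first four moments; the ``proper and hypercontractive'' hypothesis on $\mu$ in the theorem statement is precisely what makes this extension go through. A second, more quantitative, difficulty is to show that the polylogarithmic $\log^{8/5}k$ factor is not eroded when summed across $t=\Theta((\log k)/\epsilon)$ hybrid steps and combined with the union bound cost of the regularity lemma, which is what ultimately forces the specific polynomial relationship between $\epsilon$ and $\delta$ in the statement.
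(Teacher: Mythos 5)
Your construction is close in spirit to the paper's but has two substantive problems, one in the generator and one in the analysis.

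\paragraph{Per-block independence.}
You propose using $4$-wise independence within each bucket. The paper uses $(4\log k)$-wise independence. The difference matters: the argument needs the block distribution to satisfy a $(2,\,2\log k)$-moment/hypercontractivity inequality of the form $\ex[|\iprod{u}{X_{h^{-1}(j)}}|^{4\log k}] \leq (C\log k)^{2\log k}\|u\|^{4\log k}$, which is a statement about the first $4\log k$ moments. Four-wise independence gives you control of only the fourth moment, which is not enough to control the $\ell_\infty$ (equivalently $\ell_{\log k}$) norm appearing in the Taylor remainder term across $k$ halfspaces simultaneously. This is why the seed length in the paper already has an implicit $\log k$ from the per-block generators.

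\paragraph{The hybrid argument.}
Your plan is to do a bucket-by-bucket hybrid and apply the \emph{cdf-distance} invariance principle (\tref{thm:main}) at each step, conditioned on the other buckets, to the restricted polytope on bucket $j$. This route fails for two reasons. First, the ``regularity preservation lemma'' you posit is quantitatively false: if $w$ is $\epsilon$-regular with $\|w\|=1$, then after restricting to a random block of size $n/t$ the block norm is $\Theta(1/\sqrt t)$, so after renormalization the restricted vector is only about $\epsilon\sqrt{t}$-regular, a loss that is exactly of the order you cannot afford. Second, even if the restriction stayed $\epsilon$-regular, each application of \tref{thm:main} contributes an additive cdf error on the order of $\log^{8/5}k\cdot\epsilon^{1/5}$, and summing this over $t=\Theta(1/\epsilon)$ buckets gives something $\gg 1$. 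Cdf errors do not contract across the hybrid.

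The paper avoids both issues by not re-invoking the cdf invariance per bucket. Instead, it observes that the \emph{smooth-function} invariance principle (\lref{lm:foolsmooth}) is itself already proved by a blockwise hybrid, and that that proof uses only (i) independence of the blocks given $h$ and (ii) the $(2,2\log k)$-moment inequality for each block, both of which the PRG's per-block $(4\log k)$-wise independent outputs satisfy. So one gets, directly, that $\ex_y[\psi(W^T G(y))]$ is within $C\log^3 k\cdot(\epsilon\log(1/\epsilon))\cdot\psil$ of $\ex_Y[\psi(W^TY)]$ for every smooth $\psi$, with the per-block contributions aggregated through the quantity $\hh(W)$ rather than added up as cdf errors. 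The conversion from smooth-function distance to cdf distance (via Bentkus's approximation and Nazarov's surface-area bound) is then done exactly once, at the end, which is how the single polylogarithmic factor and the stated $\epsilon$--$\delta$ relationship emerge. You should restructure the analysis around reusing \lref{lm:foolsmooth}'s proof verbatim with the PRG samples in place of $\mu^n$, rather than re-deriving a cdf bound per bucket.
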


 The constants above depend on the hypercontractivity constants of $\mu$.  We define proper and hypercontractive
 distributions in the next section and remark that the uniform distribution over $\{-1,1\}^n$ and the Gaussian
 distribution are examples of such distributions.

This pseudorandom generator gives an algorithm for approximately counting the number of $\{-1,1\}^n$ points in polytopes formed by the intersection of regular halfspaces.  Put another way, given an integer program whose constraints are sufficiently regular, we give a quasi-polynomial time, deterministic algorithm for approximately counting the number of $\{-1,1\}^n$ solutions:

\begin{corollary}[Approximate counting for regular integer programs] \label{cor:approx}
Let $A$ be an integer program with $n$ variables and $k$ constraints where each constraint is an $\epsilon$-regular halfspace (we define regularity precisely in \sref{sec:notation}).  For $\epsilon = \delta^{5}/(\log^{8.1} k)(\log(1/\delta))$ there exists a deterministic algorithm that runs in time $\mathsf{exp}(O(\log n \log k)/\epsilon)$ for estimating the number of $\{-1,1\}^n$ points satisfying $A$ to within an additive $\epsilon 2^{n}$.
\end{corollary}


\cref{cor:approx} implies quasi-polynomial time, deterministic,
approximate counting algorithms for a broad class of integer programs.
For example, dense covering programs such as dense set-cover, and
$\{0,1\}$-contingency tables correspond to polytopes formed by the
intersection of $\eps$-regular halfspaces.  For these types of integer
programs, we can deterministically approximate, to within an additive
error $\epsilon$, the fraction of the hypercube that are integer
solutions in quasi-polynomial time (i.e., we obtain an {\em additive}
approximation of the number of integer solutions to within $\epsilon 2^n$).

While there has been much work on approximately counting solutions to
integer programs using randomized algorithms, we are unaware of
results giving {\em deterministic} algorithms for these tasks (even
for the case of regular integer programs) that run in subexponential
time in the number of constraints.  Very recently, there has been work
on deterministic approximate counting for the multidimensional
knapsack problem and the contingency table
problem~\cite{GopalanKM2010}, but these algorithms still run in time
exponential in the number of constraints.  Another difference between
these results and the algorithm of Gopalan et al.~\cite{GopalanKM2010} is that
\cite{GopalanKM2010} give (stronger) relative-error guarantees, while here we give additive approximations.  For a further discussion of related work see \sref{sec:relatedwork} and \sref{sec:prgreg}.

\subsection{Additional Invariance Principles}
 As stated, our invariance principle applies to polytopes whose bounding hyperplanes have coefficients that are sufficiently regular.  In some cases, however, we can randomly rotate an arbitrary polytope so that all the bounding hyperplanes become regular.   As such, after applying a suitable random transformation (which we derandomize), we can build PRGs for {\em arbitrary} polytopes if the underlying distribution is spherically symmetric (e.g., Gaussian):

 \begin{theorem}[PRGs for Polytopes in Gaussian space] \label{th:prgnormal}
 For a universal constant $c > 0$ and all $\delta > c \log^2 k/n^{1/11}$, there exists an explicit PRG $\gnn:\zo^r \rgta \reals^n$  with $r = O((\log n)(\log^{9.1}k)/\delta^{5.1})$ that $\delta$-fools all $k$-polytopes with respect to $\mathcal{N}$.
 \end{theorem}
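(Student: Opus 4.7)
My plan is to exploit the rotational invariance of the spherical Gaussian to reduce fooling an arbitrary $k$-polytope to fooling one whose bounding hyperplanes are all $\epsilon$-regular, at which point \tref{th:prgmain} (applied with $\mu$ the Gaussian, which is proper and hypercontractive) can finish the job. The PRG will output
\[
\gnn(s_1, s_2) \;=\; R(s_1)\, b(s_2),
\]
where $R(s_1)$ is a pseudorandom $n\times n$ orthogonal matrix (a ``regularizer'') and $b(s_2)\in\{-1,1\}^n$ is drawn from the PRG of \tref{th:prgmain} with $\epsilon$ chosen as prescribed there so the bit-PRG error is $\le\delta/2$.

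Fix a $k$-polytope $\ptp=\{x:W^Tx\le\theta\}$. Because $R(s_1)$ is orthogonal, $\gnn(s_1,s_2)\in\ptp$ iff $b(s_2)\in\ptp_{s_1}$, where $\ptp_{s_1}:=\{y:W^TR(s_1)y\le\theta\}$ is the polytope whose hyperplanes are the columns of $R(s_1)^TW$. The analysis has three steps. \emph{(a)} With probability at least $1-\delta/2$ over $s_1$, all $k$ columns of $R(s_1)^TW$ are simultaneously $\epsilon$-regular. \emph{(b)} For each such good $s_1$, \tref{th:prgmain} gives
\[
\bigl|\Pr_{s_2}[b(s_2)\in\ptp_{s_1}] - \Pr_{Y\sim\calN^n}[Y\in\ptp_{s_1}]\bigr| \;\le\; \delta/2.
\]
\emph{(c)} Orthogonality of $R(s_1)$ together with rotational invariance of $\calN^n$ give $\Pr_Y[Y\in\ptp_{s_1}]=\Pr_Y[R(s_1)Y\in\ptp]=\Pr_Y[Y\in\ptp]$ exactly. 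Averaging over $s_1$ and charging the bad-$s_1$ event trivially to $\delta/2$ produces total error $\le\delta$, and the seed lengths for $R$ and for the bit PRG add up to the claimed $O((\log n)(\log^{9.1}k)/\delta^{5.1})$.

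For step \emph{(a)}, I will take $R(s_1)=DH$, a randomized Hadamard transform, where $H$ is the deterministic normalized $n\times n$ Hadamard matrix and $D$ is a diagonal $\pm1$ matrix whose entries are drawn from a $t$-wise independent distribution on $\{-1,1\}^n$ for $t=\Theta(\log(nk/\delta))$. For any fixed unit column $w$ of $W$, the $i$th coordinate of $R(s_1)^T w=HDw$ is a Rademacher-like sum $\sum_j H_{ij}D_jw_j$ which, by a Khintchine-type bound for $t$-wise independent signs, has sub-Gaussian tails of width $O(1/\sqrt n)$. A union bound over the $n$ coordinates gives $\|R(s_1)^Tw\|_\infty\le O(\sqrt{\log(nk/\delta)/n})$ with failure probability $\ll\delta/k$, whence $\|R(s_1)^Tw\|_4^4\le\|R(s_1)^Tw\|_\infty^2\cdot\|R(s_1)^Tw\|_2^2=O(\log(nk/\delta)/n)$; this yields $\epsilon$-regularity with $\epsilon=\tilde O(n^{-1/2})$, comfortably within the slack afforded by the hypothesis $\delta>c\log^2k/n^{1/11}$. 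A second union bound over the $k$ columns of $W$ completes step \emph{(a)} with a seed of $O(\log n\cdot\log(nk/\delta))$ bits, dominated by the bit-PRG seed.

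The main obstacle is step \emph{(a)}: constructing an explicit orthogonal transform with polylogarithmic-in-$n$ seed length that simultaneously regularizes all $k$ hyperplanes with failure probability strictly less than $\delta/k$ per hyperplane, so the union bound closes. A fully random orthogonal matrix trivializes this but costs $\Omega(n^2)$ random bits; the randomized Hadamard construction with $t$-wise independent signs is what keeps the seed polylogarithmic, and the technical heart is quantifying the sub-Gaussian tails of Rademacher sums under limited independence and propagating this uniformly across all $k$ directions chosen by the adversary.
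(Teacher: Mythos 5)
Your overall architecture is the same as the paper's (Section~\ref{sec:prgnormal}): you apply a derandomized Ailon--Chazelle randomized Hadamard transform $HD$ with $D$ a diagonal of $t$-wise independent signs, use rotational invariance of $\calN^n$ to transfer the target probability exactly, and feed the rotated polytope to the bit-PRG of \tref{th:prgmain}. Steps~(b) and~(c) and the final error bookkeeping are correct, and the decomposition $W^T\gnn = (HDW)^T b$ is exactly the paper's.

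The gap is in step~(a) and it affects the claimed seed length. You bound each of the $n$ coordinates of $HDw$ via a Khintchine tail, union-bound over $n$ coordinates to get $\|HDw\|_\infty$, and only then pass to $\|HDw\|_4^4 \le \|HDw\|_\infty^2 \|HDw\|_2^2$. Because you union-bound over $n$ coordinates, each coordinate must fail with probability $\ll \delta/(nk)$, which forces the signs to be $t$-wise independent with $t = \Theta(\log(nk/\delta))$. This gives a regularizer seed of $O(\log n \cdot \log(nk/\delta)) = \Omega(\log^2 n)$, and that is \emph{not} dominated by the bit-PRG seed $O((\log n)(\log^{9.1}k)/\delta^{5.1})$: for instance with $k = O(1)$ and $\delta = \Omega(1)$ (still consistent with $\delta > c\log^2 k/n^{1/11}$ for large $n$), the bit-PRG seed is $O(\log n)$ while your regularizer alone needs $O(\log^2 n)$. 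So the theorem's stated $r = O((\log n)(\log^{9.1}k)/\delta^{5.1})$ is not achieved by your argument.

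The paper's \lref{lm:prgn1} avoids this by never controlling the coordinates of $HDw$ individually. It treats $Z = \|HD(x)w\|_4^4 = \sum_i(\sum_l H_{il}x_l w_l)^4$ as a degree-$4$ multilinear polynomial in the signs, computes $\ex[Z^2] = O(1/n^2)$ via Cauchy--Schwarz and $(2,8)$-hypercontractivity, and then applies $(2,q)$-hypercontractivity (\lref{lm:hypercon}) to $Z$ itself to get $\ex[|Z|^q] \le q^{2q}(\ex[Z^2])^{q/2}$, followed by Markov with $q = 2\log(k/\delta)$. The union bound is then only over the $k$ columns, not the $n$ coordinates, so $\Theta(\log(k/\delta))$-wise independence suffices, giving a regularizer seed of $O(\log(k/\delta)\log n)$, which \emph{is} dominated by the bit-PRG seed since $\log(k/\delta) = O(\log^{9.1}k/\delta^{5.1})$. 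To repair your argument, replace the $\ell_\infty$-then-$\ell_4$ step by a direct moment bound on the degree-$4$ polynomial $\|HDw\|_4^4$ as above.
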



 Additionally, we prove an invariance principle for polytopes with respect to the uniform distribution over the $n$-dimensional sphere $S^{n-1}$.  This allows us to easily modify our PRG for polytopes in Gaussian space and build PRGs for intersections of spherical caps:

 \begin{theorem}[PRGs for intersections of spherical caps]\label{th:prgspherical}
 For a universal constant $c > 0$ and all $\delta > c \log^2 k/n^{1/11}$, there exists an explicit PRG $\gsp:\zo^r \rgta \spn$ with $r =  O((\log n) (\log^{9.1}k)/\delta^{5.1})$ that $\delta$-fools all $k$-polytopes with respect to the uniform distribution over $S^{n-1}$.
 \end{theorem}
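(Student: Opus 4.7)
The natural construction is $\gsp(y) \eqdef \gnn(y)/\|\gnn(y)\|_2$, where $\gnn$ is the Gaussian PRG of \tref{th:prgnormal}; the seed length and admissible range of $\delta$ then carry over verbatim. The analysis reduces the spherical setting to the Gaussian one by going through the rescaled polytope $\sqrt{n}\,\ptp$, which is itself a $k$-polytope.

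The first step is a spherical-versus-Gaussian invariance principle: for any $k$-polytope $\ptp$,
\[ \left|\pr_{u \lfta \calN_{S^{n-1}}}[u \in \ptp] - \pr_{Y \lfta \calN^n}[Y \in \sqrt{n}\,\ptp]\right| \;\leq\; O(\log^2 k / n^{1/11}), \]
where $\calN_{S^{n-1}}$ denotes the uniform distribution on $S^{n-1}$. This is precisely the ``invariance principle for polytopes with respect to the uniform distribution over $S^{n-1}$'' promised in the introduction. Its proof uses the sharp concentration $|\|Y\|_2 - \sqrt{n}| = O(1)$ with high probability together with Gaussian anti-concentration near each face of $\sqrt{n}\,\ptp$; the polylogarithmic (rather than linear) dependence on $k$ comes from applying the main invariance principle \tref{thm:main} to an auxiliary polytope obtained by ``thickening'' each of the $k$ faces into a slab, so that one avoids a factor of $k$ from a naive union bound over faces. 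Given this, applying \tref{th:prgnormal} directly to $\sqrt{n}\,\ptp$ gives $\bigl|\pr_y[\gnn(y) \in \sqrt{n}\,\ptp] - \pr_Y[Y \in \sqrt{n}\,\ptp]\bigr| \leq \delta/3$. A triangle-inequality argument through $\pr_Y[Y \in \sqrt{n}\,\ptp]$ then yields the theorem, provided one can bound the ``normalization error'' $\bigl|\pr_y[\gsp(y) \in \ptp] - \pr_y[\gnn(y) \in \sqrt{n}\,\ptp]\bigr|$.

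The main obstacle is this last normalization step. It would be zero if $\|\gnn(y)\|_2 = \sqrt{n}$ held identically, but since $\|x\|_2^2$ is quadratic, such control does not follow from the polytope-fooling property of $\gnn$ alone. The cleanest workaround is to tailor the construction so that $\|\gnn(y)\|_2 = \sqrt{n}$ by design, for example by taking $\gnn$ to be a (pseudorandom) rotation applied to a $\dpm^n$-valued distribution, whose Euclidean norm is always $\sqrt{n}$. With this in hand the normalization error reduces to an anti-concentration argument of the same flavour as in the first step, and summing the three pieces gives the claimed $\delta$-fooling; the floor $\delta > c\log^2 k/n^{1/11}$ is dictated entirely by the spherical invariance principle.
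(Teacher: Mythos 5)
Your construction and high-level plan match the paper's: set $\gsp(x,y) = \gnn(x,y)/\sqrt{n}$ where $\gnn$ is the PRG of \tref{th:prgnormal}, using the observation you arrive at in your final paragraph that $\gnn(x,y) = D(G_1(x))HG(y)$ always has Euclidean norm exactly $\sqrt{n}$ (since $G(y)\in\dpm^n$ and $D(G_1(x))H$ is orthogonal), so that normalization incurs no error; then combine \tref{th:prgnormal} with a spherical-to-Gaussian comparison lemma by the triangle inequality. Your identification and resolution of the normalization pitfall is precisely the right move.

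However, your account of the spherical-to-Gaussian comparison lemma (\lref{lm:ipspherical} in the paper) misidentifies both its mechanism and its quantitative role. You attribute the $\poly\log k$ dependence to ``applying the main invariance principle \tref{thm:main} to an auxiliary polytope obtained by thickening each of the $k$ faces into a slab,'' and you assert that this lemma dictates the floor $\delta > c\log^2 k/n^{1/11}$. Neither is correct. The paper's proof of \lref{lm:ipspherical} never invokes \tref{thm:main}: it (i) reduces to $\|\theta\|_\infty = O(\sqrt{\log(k/\delta)/n})$ via Gaussian and spherical tail bounds; (ii) uses that $X\in_u\spn$ is distributed as $Y/\|Y\|$ with $Y\lfta\NN$, so the discrepancy is controlled by $|\|Y\|-\sqrt{n}|$ (via \lref{lm:ldrandvec}) together with the Gaussian measure of an $l_\infty$-annulus of width $\lambda$ around $\rect(\sqrt{n}\theta)$; and (iii) bounds that annulus probability by $O(\lambda\sqrt{\log k})$ using Nazarov's surface-area estimate (\lref{lm:acrect}) — this, not any application of \tref{thm:main}, is where the polylogarithmic dependence on $k$ comes from. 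The resulting error is $O(\log n\,\log k/\sqrt{n})$, substantially smaller than the $O(\log^2 k/n^{1/11})$ you posit and dominated by the error of \tref{th:prgnormal}; the constraint $\delta > c\log^2 k/n^{1/11}$ is inherited entirely from \tref{th:prgnormal}, contrary to your last sentence. As written, your ``thickened polytope'' route to the comparison lemma is a gap: it is not fleshed out and does not obviously yield the needed $l_\infty$-annulus anti-concentration with only a $\sqrt{\log k}$ factor.
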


An immediate consequence of the above PRG construction is a polynomial time derandomization of the Goemans-Williamson
approximation algorithm for Max-Cut~\cite{GoemansW1995} and other similar hyperplane based randomized rounding
schemes. Observe that this derandomization is a {\em black-box} derandomization as opposed to some earlier
derandomizations of the Goemans-Williamson algorithm, which are instance-specific (e.g., \cite{MahajanH1999}).

\ignore{
\subsubsection{Estimating the Supremum of Gaussian Processes}
Consider the following question: Given vectors $w_1,\ldots,w_k \in \reals^n$, compute 
\[ \gamma(w_1,\ldots,w_k) = \ex_{x \lfta \gsn}\left[\, \sup_{i=1}^k \iprod{w_i}{x}\,\right].\]

The collection of random variables $\{ \eta_l = \iprod{w_l}{x}: 1\leq k \leq k, x \lfta \calN\}$ form a Gaussian
process. The quantity $\gamma(\;)$ above has been well studied in the context of Gaussian processes and plays a crucial role in
the theory of majorizing measures (see \cite{Talagrand} and references therein), which can be viewed as a generic
framework for obtaining tight bounds on $\gamma(\;)$. In a recent work, developing on Talagrand's powerful majorizing
measures and generic chaining arguments \cite{Talagrand1996}, Ding, Lee and Peres 
showed a strong connection between various graph theoretic notions such as cover times, blanket times and majorizing
measures. In lieu of establishing the above connections, Ding et al. also gave a deterministic polynomial time algorithm
for computing $\gamma(\;)$ to within a factor $\alpha$, for $\alpha$ a universal constant.    

Our pseudorandom generator for polytopes in Gaussian space starightforwardly implies a
deterministic quasi-polynomial time algorithm for approximating $\gamma(\;)$. Though the resulting run-time is only
quasi-polynomial as opposed to the polynomial time algorithm of Ding et al., we can obtain arbitrarily good
approximation factors ($(1+\delta)$ for $\delta > 0$) and our approach does not make use of the theory of majorizing
measures.    
}

\subsection{Proof Outline of the Main Theorem}

In this section, we give a high level outline of the proof of our
invariance principle and contrast it with the techniques of Mossel et al.~\cite{MosselOO2005} and Mossel \cite{Mossel2008}.  The proof proceeds in two steps.  \\ 

\noindent {\bf Step One:} As in \cite{MosselOO2005} and
\cite{Mossel2008}, we first use the Lindeberg (or ``replacement'')
method\footnote{The Lindeberg method entails replacing each of the
  $X_{i}$'s with $Y_{i}$'s one step at a time and bounding the error
  in each step. This is more commonly referred to as the hybrid
  argument in theoretical computer science literature since the
  intermediate random variables are a hybrid of both $X$ and $Y$.} (see \cite{PR-clt}) to prove an invariance principle for smooth functions.  By this we mean proving that
\begin{equation}\label{eq:ipsmoothtoy}
 \left| \ex_{X \in \{-1,1\}^n}\left[\Psi(\ell_{1}(X),\dots,\ell_{k}(X))\right] - \ex_{Y \in \calN^n}\left[\Psi(\ell_{1}(Y),\dots,\ell_{k}(Y))\right] \right| \leq \gamma,  
\end{equation}
where $\ell_{1},\ldots,\ell_{k}$ are linear functions (corresponding to the normals of the faces of the $k$-polytope) and $\Psi$ is a smooth function.  The value $\gamma$ will depend on $k$,  the coefficients of the $\ell_p$'s and the derivatives of $\Psi$.  The function $\Psi$ is often called a ``test'' function and is smooth if there is a uniform bound on its fourth derivative.  Notice here that $\Psi$ maps $\R^{k}$ to $\R$; in \cite{MosselOO2005}, they were concerned with the value $\Psi(Q(X))$ for a low-degree polynomial $Q$ and a univariate test function $\Psi$.

At this point, we could take $\Psi$ to be the $k$-wise product of a test function constructed by Mossel et al.~to
approximate the logical AND function. Further, Mossel provides a very general framework for obtaining multivariate test functions and gives bounds for the overall error incurred by the hybrid argument.  Here we run into our first difficulty:  the standard hybrid argument as used by Mossel et al.~and Mossel results in a bad dependence on the coefficients of the $\ell_p$'s. In particular, the resulting error term is not small even for polytopes formed by the intersection of regular halfspaces. 

To solve this problem, we use a non-standard hybrid argument that groups the input variables into blocks.  We observe
that in the Lindeberg method it is irrelevant in which order we replace $X_{i}$'s with $Y_{i}$'s -- in fact a random order would suffice.   Further, we can group the $X_{i}$'s into blocks and proceed blockwise with the hybrid argument. To implement this intuition, we partition $[n]$ randomly into a set of blocks and replace all the $X_{i}$'s within a block by the corresponding $Y_{i}$'s one block at a time. Proceeding in this fashion with a random partitioning has a ``smoothing effect'' on the coefficients of the linear functions resulting in a much better bound on the error in terms of the coefficients.

Roughly speaking, if $\ell_{pi}$ denotes the $i$'th coefficient of $\ell_p$, then the standard hybrid arguments of \cite{PR-clt}, \cite{MosselOO2005}, \cite{Mossel2008} incur an error proportional to $\sum_{i \in [n]} \left(\max_{p \in [k]} |\ell_{pi}|^{4}\right)$, which can be as large as $\Omega(k)$ even for regular functions $\ell_p$. In contrast, our {\sl randomized-blockwise-hybrid} argument only suffers an error of  $(\log k)\cdot \max_{p \in [k]} \sum_i |\ell_{pi}|^4$, which is small for regular functions. It turns out that in the above analysis, we can choose the random partitioning into blocks in a $\Theta(\log k)$-wise independent manner, instead of uniformly at random, and this is crucial for our PRG constructions.\\

\noindent{\bf Step Two:}  Given the above invariance principle for smooth functions, we now aim to translate the closeness in expectation for smooth functions to closeness in cdf distance.  Here the smoothness of the test function $\Psi$ becomes important, and we run into our second problem: the natural choice of test function $\Psi$ (the multivariate version of the test function from Mossel et al.) leads to an error bound on the order of $k$, rather than $\poly(\log k)$.  To get around this problem,  we first observe that in Mossel's proof of the multivariate invariance principle as in our {\sl randomized-blockwise-hybrid} argument, it suffices to bound the `$l_1$-norm' of the fourth derivative $\sup_{x \in \reals^k}(\sum_{p,q,r,s\in[k]} |\partial_p\partial_q\partial_r\partial_s \Psi (x)|)$, instead of uniformly bounding the fourth derivative $\sup_{x \in \reals^k,p,q,r,s \in [k]}(|\partial_p\partial_q\partial_r\partial_s \Psi (x)|)$. Thus, it suffices to obtain a smooth approximation of the AND function for which the former quantity is small. Fortunately for us, we have uncovered a beautiful result due to Bentkus \cite{Bentkus1990}, who constructs a smooth approximation of the AND function with precisely this property. 

The final difficulty for translating closeness in expectation as in
\eref{eq:ipsmoothtoy} to closeness in cdf distance is to prove that
$\Psi$ differs from the characteristic function only on a set of small
Gaussian measure.  To this end, we show that it suffices to bound the
Gaussian measure of $l_\infty$-neighborhoods around the boundary of
$k$-polytopes.  For an $l_\infty$-neighborhood of width $\lambda$, a
union bound would imply Gaussian measure on the order of $k \lambda$.
At this point, however, we can apply a 
result due to Nazarov \cite{Nazarov2003} on the Gaussian surface area of $k$-polytopes to get the much better bound of $\sqrt{\log k}\,\lambda$.  This result of Nazarov was used before by Klivans, O'Donnell and Servedio \cite{KlivansOS2008} in the context of learning intersections of halfspaces with respect to Gaussian distributions.\\

We give an outline of the proofs of the applications of the invariance principle to noise sensitivity and PRGs in the corresponding sections. 

\ignore{
 probability that it takes on large values to be small.

Unfortunately, we run into two difficulties: 1) the natural choice for $\Psi$ results in a polynomial factor of $k$ in the error term (recall we want a polylogarithmic dependence on $k$), and 2)
To solve problem 1, we notice that in Mossel's proof of the multivariate invariance principle it suffices to bound the quantity $\sup (\sum_{p,q,r,s\in[k]} |\partial_p\partial_q\partial_r\partial_s \psi (x)|)$ instead of obtaining an uniform bound on the fourth derivative $\psi''''$. Thus, it suffices to obtain a smooth approximation of the ``and'' function for which this quantity is small. Fortunately for us, we uncovered a beautiful result due to Bentkus \cite{Bentkus1990}, who constructs a smooth approximation of the ``and'' function with precisely this property. 
}

\ignore{
 irrelevant in what order we perform this replacement, in fact even a random order would suffice. Furthermore, we can do the replacement blockwise (i.e., divide $[n]$ into a set of blocks and replace all the $X_i$'s within a block by the corresponding $Y_i$'s one block at a time) as in the construction of PRGs for polynomial threshold functions by Meka and Zuckerman \cite{MekaZ2010}. We then show that if we both choose the blocks randomly and perform the replacement blockwise according to a random sequence of the blocks, then the corresponding $\ex \left[\tau'(W)\right]$ is bounded above by $\tau(W)$. We thus obtain an invariance principle for polytopes with error at most $\poly(\log k \cdot \tau(W))$ where $\tau(W)=\max_{p \in [k]}\left(\sum_{i \in [n]} |\ell_{pi}|^4\right)$ where $\ell_{pi}$ is the $i$th coefficient of $\ell_{p}$. 
}

\ignore{

Invariance principles are typically proved by Lindeberg's method of showing two distributions are equal (or close) by comparing consecutive moments. To understand the proof of the invariance principle for polytopes, it would be instructive to recall the proof of \tref{thm:mooimplicit} (invariance principle for cdf-distance of low-degree polynomials) as proved by Mossel et al.~\cite{MosselOO2005}. This proof proceeds in two steps.
\begin{enumerate}
\item The first step involves proving an invariance principle for smooth functions (as opposed to cdf distance) of low-degree polynomials as the cdf distance is not a nicely behaved function (in particular, it is not smooth). More precisely, this step involves proving a statement of the following form: for all smooth functions $\psi:\reals\to\reals$,
\[ \left| \ex_{(X_1,\dots,X_n) \in \{-1,1\}^n}\left[\psi(P(X_1,\dots,X_n))\right] - \ex_{(Y_1,\dots,Y_n) \in \calN^n}\left[\psi(P(Y_1,\dots,Y_n))\right] \right| \leq \tau \cdot |\psi''''|_\infty,\]
where $|\psi''''|_\infty$ denotes the supremum of the fourth derivative of $\psi$ and $\tau$ is a regularity parameter which is small if the influences of each variable in $P$ is small. This step is proved by a hybrid argument in which each of the $X_i$'s is replaced by the $Y_i$'s one step at a time. Each step of the hybrid argument is further proved by expanding $\psi(P(\cdot))$ using Taylor's series, comparing the first three terms and bounding the error term using the uniform bound on the fourth derivative and hypercontractivity.

\item In the second step, the invariance principle for smooth functions is transferred to cdfs by using a smooth approximation to the cdf. This smooth approximating function to the cdf needs to satisfy two properties, (1) its fourth derivative is uniformly bounded so that we can apply the above invariance principle to it and (2) the smooth function equals the cdf everywhere but for a small ball. Finally, we require that one of the random variables $P(X)$ or $P(Y)$ obeys the small-ball probability phenomenon. The small-ball probability phenomenon (also called anti-concentration property) states that for every small ball, the probability that the random variable lies in the ball is small. In particular, low-degree polynomials of the multivariate Gaussian distribution obeys such a property (see anti-concentration inequalities of Carbery and Wright \cite{CarberyW2001}), which completes the proof of \tref{thm:mooimplicit}.
\end{enumerate}

Let us now try to imitate the above steps to prove an invariance principle for the characteristic function of polytopes (\tref{thm:main}). The characteristic function of a $k$-faced polytope is equivalently the logical ``and'' of $k$ linear threshold functions. We can now redo the above argument by using Mossel's multivariate version of the invariance principle for smooth functions over $\reals^k$ (instead of over $\reals$)~\cite{Mossel2008} in step 1 and then use the natural $k$-wise product of the smoothing function used in step 2 to approximate ``and''. Both these steps can be performed by incurring error, that grows polynomially with $k$, the number of faces. Finally, as in step 2 above, we need to bound the small ball probability, which in this case is the probability that $k$ linear functions of the multivariate Gaussian distribution lies in a small $l_\infty$-neighborhood of a $k$-faced polytope. For this, we use a result due to Nazarov that bounds this probability by a quantity that surprisingly only grows polylogarithmically (and not linearly) with $k$~\cite{Nazarov2003}. Combining all these components together in the obvious fashion, one can easily derive a weaker version of \tref{thm:main} where the dependence on $k$ is polynomial and not polylogarithmic as stated. Observe that the polynomial dependence comes from the error incurred in Mossel's multivariate invariance principle and the smoothening function used in the first part of step 2. Given the exponentially better performance of Nazarov's result compared to the earlier steps, it is natural to ask if the dependence of the error on $k$ can be improved in these steps too. In fact, this is the question that led to this work, in which we show (to our own surprise) that one can in fact improve the dependence of the error on $k$ from polynomial to polylogarithmic, at least in the context of an ``and'' of $k$ linear threshold functions. 

We now outline how these improvements can be obtained. A careful reading of Mossel's proof of the multivariate invariance principle reveals that to obtain an invariance principle for smooth functions of $k$ variables, it suffices to bound the quantity $\sup (\sum_{p,q,r,s\in[k]} |\partial_p\partial_q\partial_r\partial_s \psi (x)|)$ instead of obtaining an uniform bound on the fourth derivative $\psi''''$. Thus, it suffices to obtain a smooth approximation of the ``and'' function for which this quantity is small. Fortunately, for us, Bentkus \cite{Bentkus1990} constructed a smooth approximation of the ``and'' function with precisely this property. 

We can now plug each of the above improvements to the two steps outlined above and obtain an invariance principle for polytopes where the error is $\poly(\log k \cdot \tau'(W))$ where $\tau'(W) = \sum_{i\in [n]} \left(\max_{p\in[k]} |W_{ip}|^4\right)$ where $W \in \reals^{n\times k}$ is the $k\times n$ matrix (with each column normalized) that defines the $k$ faces of the polytope. This implies that if the $k$ faces are ``uniformly regular'' in the sense that $\tau'(W)$ is small, then the corresponding polytope satisfies the invariance principle. However, it might be the case that the $k$ faces of the polytope are individually regular (i.e., for each $p \in [k]$, $\sum |W_{ip}|^4$ is small) and yet the above quantity is large. Observe that individual regularity is necessary since even the univariate invariance principle of \tref{thm:mooimplict} is not true if the corresponding polynomial (or even linear function) is not regular. To improve the error from $\poly(\log k \cdot \tau'(W))$ to $\poly(\log k \cdot \tau(W))$ where $\tau(W)=\max_{p \in [k]}\left(\sum_{i \in [n]} |W_{ip}|^4\right)$, we go back to step 1 of the above outline in which we obtain an invariance principle for smooth functions using a hybrid argument replacing the $X_i$'s by $Y_i$'s one step at a time. We observe that it is irrelevant in what order we perform this replacement, in fact even a random order would suffice. Furthermore, we can do the replacement blockwise (i.e., divide $[n]$ into a set of blocks and replace all the $X_i$'s within a block by the corresponding $Y_i$'s one block at a time) as in the construction of PRGs for polynomial threshold functions by Meka and Zuckerman \cite{MekaZ2010}. We then show that if we both choose the blocks randomly and perform the replacement blockwise according to a random sequence of the blocks, then the corresponding $\ex \left[\tau'(W)\right]$ is bounded above by $\tau(W)$. We thus obtain an invariance principle for polytopes with error at most $\poly(\log k \cdot \tau(W))$ where $\tau(W)=\max_{p \in [k]}\left(\sum_{i \in [n]} |W_{ip}|^4\right)$.
}


 \subsection{Related Work} \label{sec:relatedwork}
\ignore{Central limit theorems or invariance principles are a long-studied topic in the probability community with a variety of
quantitative formulations known under various settings. The most prominent among these is the}
 As mentioned earlier, the classical Berry-Ess\'een theorem \cite{Feller2} from probability, a quantitative version of the Central Limit Theorem, gives an invariance principle for the case of a single halfspace (i.e., $k=1$).  More precisely, for any $w \in \reals^n$, such that $\|w\| = 1$ and each coefficient of $w$ is at most $\epsilon$, the Berry-Ess\'een theorem states that 
\[ \left|\Pr_{x \in \{-1,1\}^n} \left[\iprod{w}{x} \geq t\right] - \Pr_{x \lfta \calN^{n}}\left[\iprod{w}{x} \geq t\right]\right| \leq O(\epsilon).\]  

 Bentkus \cite{Bentkus2003} proves a multidimensional Berry-Ess\'een theorem for sums of vector-valued random variables each with identity covariance matrix, whose error term depends on the Gaussian surface area of the test set. Although his paper deals with topics related to our work, his result seems to have no implications in our setting.


 There is a long history of research on approximately counting the number of solutions to integer programs, especially with regard to contingency tables \cite{JerrumS1997,CryanD2003}. However, not much is known in terms of {\em deterministic} algorithms, and we believe that our deterministic quasi-polynomial time algorithms for dense covering problems and dense set cover instances is the first result of its kind. 

Regarding contingency tables, Dyer \cite{Dyer2003} gave a randomized
relative-error approximation algorithm for counting solutions to
contingency tables that runs in time exponential in the number of
rows. In contrast, we obtain an algorithm that runs in
quasi-polynomial time in the number of rows (however, we do not give a
relative-error approximation). Although not stated explicitly before,
it is easy to see that the pseudorandom generator for small space
machines of Impagliazzo, Nisan and Wigderson \cite{ImpagliazzoNW1994} yields a deterministic algorithm
for counting $n\times k$ contingency tables with additive error at
most $\epsilon 2^n$ and run time $2^{O(\log^2(nk/\epsilon))}$. This is incomparable to our algorithm for contingency tables which has run time $2^{(\log n) \cdot \poly(\log k,1/\epsilon)}$.  In our case, we obtain a polynomial-time, black-box derandomization for contingency tables with a constant number of rows (for $\epsilon = O(1)$).


 For PRGs for intersections of halfspaces, recently Gopala et al.~\cite{GopalanOWZ2010} and Diakonikolas, Kane and Nelson \cite{DiakonikolasKN2010} gave results incomparable to ours. Gopalan et al.~give generators for arbitrary intersections of $k$ halfspaces with seed length linear in $k$ but logarithmic in $1/\delta$. Diakonikolas et al.~show that bounded independence fools intersections of quadratic threshold functions and in particular, get generators with seed length $O((\log n)\cdot\poly(k,1/\epsilon))$ fooling intersections of $k$ halfspaces. Due to the at least linear dependence on $k$, the results of the above works do not yield good algorithms for counting solutions to integer programs, as in this setting $k$ is typically large (e.g., $\poly(n)$).


\subsection{Discussion and Future Work}


One obvious weakness of our applications to noise sensitivity bounds (\tref{thm:mainlearn}) and PRGs over the hypercube (\tref{th:prgmain}) is the regularity requirement. Recent results on sensitivity bounds and PRGs for halfspaces and PTFs (\cite{DiakonikolasHKMRST2010,MekaZ2010}) use certain {\sl regularity lemmas} which allow one to ``reduce'' the problem for arbitrary functions to the regular case and then use invariance to handle the regular case. Unfortunately, applying the reductions to the regular case as in the above works leads to bounds that are at least linear in $k$, even when using our stronger bounds for the regular case. We (optimistically) believe that the above difficulty could be overcome and a better reduction to the regular case can be achieved.

\newcommand{\rect}{\mathsf{Rect}}
\newcommand{\ok}{\mathsf{1}_k}

\section{Notation and Preliminaries} \label{sec:notation}
We use the following notation.
\begin{enumerate}
\item For $W \in \reals^{n \times k}$, $\theta \in \reals^k$, $\ptp(W,\theta)$ denotes the polytope $\ptp(W,\theta) = \{x: W^T x \leq \theta\}$. We say a polytope $\ptp(W,\theta)$ as above has $k$ faces. 
\item Unless stated otherwise, we work with the same polytope $\ptp(W,\theta)$ and assume that the columns of the matrix $W$ have norm one. We often shorten $\ptp(W,\theta)$ to $\ptp$ if $W,\theta$ are clear from context. We assume that $k \geq 2$. 
\item For $A \in \reals^{m_1 \times m_2}$, $A^T$ denotes the transpose of $A$ and for $p \in [m_2]$, $A^p$ denotes the $p$'th column of $A$. 

\item The all ones vector in $\reals^k$ is denoted by $\ok$. 
\item For $u \in \reals^k$, define rectangle 
\[ \rect(u) = (-\infty,u_1] \times (-\infty, u_2] \times \cdots \times (-\infty,u_k].\]
Note that $x \in \ptp(W,\theta)$ if and only if $W^T x \in \rect(\theta)$. 
\item $\NN$ (where $\calN = \calN(0,1)$) denotes the standard multivariate spherical Gaussian distribution over $\reals^n$ with mean $0$ and identity covariance matrix. 
\item For a 4-times differentiable function $\psi:\reals^k \rgta \reals$, let 
\[ \psil = \sup \, \left\{\, \sum_{p,q,r,s \in
  [k]}|\, \partial_p\partial_q\partial_r\partial_s\,\psi(a_1,\ldots,a_k)\,|\,:
\, (a_1,\ldots,a_k) \in \reals^k\,\right\}.\] 
We call $\psi$ a {\em smooth} function, if the above quantity is finite.
\item We denote all universal constants by $c,C$, even when we have in mind different constants in the same equation. Also, if left unspecified, we write $\|u\|$ for $\|u\|_2$. 
\end{enumerate}

\ignore{
\begin{definition}[regularity]\label{def:regularity}
A vector $u \in \reals^n$ is $\epsilon$-regular if $\sum_i u_i^4 \leq \epsilon^2 \|u\|^2$. A matrix $W \in \reals^{n \times k}$ is $\epsilon$-regular if every column of $W$ is $\epsilon$-regular. A polytope $\ptp = \ptp(W,\theta)$ is $\epsilon$-regular if $W$ is $\epsilon$-regular~\footnote{``Regular polytopes'' have a different meaning in combinatorics, but for the purpose of this paper, we will abuse notation and say a polytope is $\epsilon$-regular if it is formed by the intersection of $\epsilon$-regular halfspaces as in \dref{def:regularity}.}. 
\end{definition}

We will require our polytopes to be sufficiently regular to apply our invariance principle.  This is necessary even in the case of a single halfspace; note that the function $1x_{1} + 0x_{2} + \ldots + 0x_{n} = x_{1}$ where each $x_{i} \in \{-1,1\}$ will never converge to a Gaussian (this linear function is highly non-regular).}

The main results of this paper are applicable to a large class of product distributions that satisfy the following two properties.

\begin{definition}[proper distributions]
A distribution $\mu$ over $\reals$ is proper if for $X \lfta \mu$, $\ex[X] = 0$, $\ex[X^2] = 1$ and $\ex[X^3] = 0$.
\end{definition}
\begin{definition}[hypercontractive distributions]
 A distribution $\mu$ over $\reals$ is hypercontractive, if there exists a constant $c_\mu$ such that the following holds. For any $m$, vector $u \in \reals^m$, and any $q \geq 2$,
\[ \left(\ex_{X\lfta \mu^m}\left[\,|\iprod{u}{X}|^q\,\right]\right)^{1/q} \leq c_\mu \sqrt{q} \left(\ex_{X \lfta \mu^m}\left[\,|\iprod{u}{X}|^2\,\right]\right)^{1/2}.\] 
\end{definition}

Two important examples of product distributions that are proper and hypercontractive are the uniform distribution over the hypercube $\dpm^n$ and the multivariate spherical Gaussian $\NN$. 

We also use the following hypercontractivity inequality for degree $d$ multilinear polynomials over the hypercube (see \cite{janson1997} for instance).   

\begin{lemma}[$(2,q)$-hypercontractivity]\label{lm:hypercon}
For any $q \in [2,\infty)$ and any degree $d$ multilinear polynomial $P:\dpm^n \rgta \reals$, 
\[ \left(\ex_{x \in_u \dpm^n}\left[\,|P(x)|^q\,\right]\right)^{1/q} \leq q^{d/2}\,\left(\ex_{x \in_u \dpm^n}\left[\,|P(x)|^2\,\right]\right)^{1/2}.\]
\end{lemma}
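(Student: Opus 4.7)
The plan is to prove this via the standard two-point-plus-tensorization route for Bonami-Beckner-type hypercontractivity, following O'Donnell's textbook presentation (as the lemma is cited from \cite{ODonnell2008}). The target inequality is the $(2,q)$-hypercontractivity bound $\|P\|_q \leq (q-1)^{d/2}\|P\|_2 \leq q^{d/2}\|P\|_2$ for multilinear degree-$d$ polynomials on $\dpm^n$, so we actually prove the slightly sharper $(q-1)^{d/2}$ version and then use $q-1 \leq q$.

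The first step is the one-variable ``two-point inequality'': for $\epsilon$ uniform on $\dpm$, any reals $a,b$, and any $q \geq 2$,
\[
\left(\ex_{\epsilon}\bigl[\,|a+b\epsilon|^q\,\bigr]\right)^{1/q} \;\leq\; \bigl(a^2 + (q-1)\,b^2\bigr)^{1/2}.
\]
This is a purely one-variable statement. After normalizing to $a=1$ (the inequality is homogeneous and trivial when $a=0$), it reduces to showing $\tfrac{1}{2}\bigl((1+x)^q+(1-x)^q\bigr) \leq \bigl(1+(q-1)x^2\bigr)^{q/2}$ for $|x|\leq 1$, which is a routine one-variable calculus check via termwise comparison of Taylor expansions (the coefficient of $x^{2j}$ on the left is $\binom{q}{2j}$ while on the right it is $\binom{q/2}{j}(q-1)^j$, and the latter dominates).

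The second step is induction on $n$ to lift the two-point inequality to the multilinear degree-$d$ polynomial $P$. Decompose $P(x_1,\ldots,x_n) = P_0(x_2,\ldots,x_n) + x_1 P_1(x_2,\ldots,x_n)$, where $\deg P_0 \leq d$ and $\deg P_1 \leq d-1$. Applying the two-point inequality in the $x_1$ coordinate gives
\[
\ex_{x_1}\bigl[\,|P|^q\,\bigr]^{1/q} \;\leq\; \bigl(P_0^2 + (q-1) P_1^2\bigr)^{1/2}.
\]
Taking $L^q$-norms of both sides over $(x_2,\ldots,x_n)$ and using Minkowski's inequality on the $L^{q/2}$-norm (valid since $q/2 \geq 1$) yields
\[
\|P\|_q^2 \;\leq\; \|P_0\|_q^2 + (q-1)\,\|P_1\|_q^2.
\]
The inductive hypothesis on $n-1$ variables gives $\|P_0\|_q \leq (q-1)^{d/2}\|P_0\|_2$ and $\|P_1\|_q \leq (q-1)^{(d-1)/2}\|P_1\|_2$. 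Substituting and using the Parseval identity $\|P\|_2^2 = \|P_0\|_2^2 + \|P_1\|_2^2$ (the cross term vanishes because $\ex[x_1]=0$) closes the induction with the constant $(q-1)^{d/2}$, and hence $q^{d/2}$.

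The only genuinely nontrivial step is the one-variable inequality; everything else is a clean induction that mechanically propagates the one-dimensional bound to $n$ dimensions, with the degree $d$ entering exactly through how many times the ``$(q-1)$-factor'' gets absorbed in going from $\|P_1\|_2$ (degree $d-1$) to $\|P_1\|_q$. Since both ingredients are classical, I expect no real obstacle — the proposal is essentially to cite the two-point inequality and run the standard induction.
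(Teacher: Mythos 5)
The paper does not prove \lref{lm:hypercon} at all --- it is stated with a citation to O'Donnell's survey --- so there is no in-paper argument to compare against. Your overall route (the one-coordinate two-point inequality, then tensorization by induction on $n$ via Minkowski's inequality in $L^{q/2}$ and Parseval, yielding the constant $(q-1)^{d/2}\le q^{d/2}$) is exactly the standard textbook proof, and the inductive step as you describe it is correct.

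The one claim that does not hold up is your justification of the two-point inequality itself. After normalizing to $\tfrac12\bigl((1+x)^q+(1-x)^q\bigr)\le\bigl(1+(q-1)x^2\bigr)^{q/2}$, you assert that the coefficient $\binom{q/2}{j}(q-1)^j$ of $x^{2j}$ on the right dominates the coefficient $\binom{q}{2j}$ on the left for every $j$. That is false for non-even $q$: once $j>q/2$ the generalized binomial $\binom{q/2}{j}$ alternates in sign, so the right-hand coefficient can be strictly negative while the left-hand one is nonnegative. Concretely, at $q=3$, $j=3$ the left coefficient is $\binom{3}{6}=0$ while the right coefficient is $\binom{3/2}{3}\cdot 2^3=-\tfrac12$, so the claimed termwise domination fails even though the target inequality is true (for $q=3$ one has $(1+2x^2)^3-(1+3x^2)^2=3x^4+8x^6\ge 0$). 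Termwise Taylor comparison is a valid proof only when $q$ is an even integer; for general real $q\ge 2$ the two-point inequality needs a more careful one-variable argument (a calculus/compactness argument, or proving the dual $(p,2)$ two-point inequality for $1<p\le 2$ and transferring by duality), or it should simply be cited as a known lemma, which is effectively what the paper does for the full statement. Everything else in your induction is sound; only the parenthetical ``termwise comparison'' justification needs to be replaced.
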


We shall also use the following classical large-deviation inequality for Lipschitz functions in Gaussian space. For a function $f:\reals^n \rgta \reals$, the Lipschitz constant of $f$ is defined as $\|f\|_{Lip} = \sup\{|f(x) - f(y)|/\|x - y\|_2: x \neq y \in \reals^n\}$.
\begin{theorem}[\cite{LedouxT}]\label{th:gaussianld}
  For $f:\reals^n \rgta \reals$ with a bounded Lipschitz constant, $\mu(f) = \ex_{y \lfta \mathcal{N}^n}\left[f(y)\right]$, and $t > 0$,
\[ \pr_{x \lfta \mathcal{N}^n}\left[\, |f(x) - \mu(f)| > t\,\right] \leq 2 \exp(-t^2/2\|f\|_{Lip}).\]
\end{theorem}

\subsection{Agnostic Learning} \label{sec:ag}
Here we describe the agnostic framework of learning (a generalization of PAC learning) and describe how noise-sensitivity bounds translate into learning algorithms.  First we define noise sensitivity:

\begin{definition}[noise sensitivity]\label{def:noisesen}
Let $f$ be a Boolean function $f:\dpmn \to \dpm$.  For any $\delta \in (0,1)$, let $X$ be a random element of the hypercube $\dpmn$ and $Z$ a $\delta$-perturbation of $X$ defined as follows: for each $i$ independently, $Z_i$ is set to $X_i$ with probability $1-\delta$ and $-X_i$ with probability $\delta$. The noise sensitivity of $f$, denoted $\NS_\delta(f)$, for noise $\delta$ is then defined as follows: $\NS_\delta(f) = \Pr\left[f(X) \neq f(Z)\right]$.
\end{definition}

Now we describe the learning model of focus in this paper, {\em agnostic learning}.  In the agnostic learning framework \cite{KearnsSS1994,Haussler1992}, the learner receives labelled examples $(x,y)$ drawn from a fixed distribution over example-label pairs.

\begin{definition}[Agnostic Learning]
Let $\D$ be any distribution on $X \times \R$ and let $\C$ be a concept class of functions.  Define
\[\mathsf{opt}=\min_{f\in \C} \Pr_{(x,y)\sim \D}\left[f(x) \neq y\right].\] That is, $\mathsf{opt}$ is the
error of the best fitting concept in $\C$ with respect to $\D$.

We say that an algorithm $A$ agnostically learns a concept class $\C$ over
$\D$ if the following holds: for any $\D$ on $X \times \R$ with marginal distribution $D_{X}$ on $X$,
if $A$ is given random examples drawn from $\D$, then with high probability 
$A$ outputs a hypothesis $h$ such that $\Pr_{(x,y)\sim\D }\left[{h(x) \neq y}\right]\leq opt+\delta$.
\end{definition}

Note that when $\mathsf{opt} = 0$, this corresponds to the PAC model of learning.  Successful agnostic learning corresponds to learning in the presence of ``adversarial'' noise.

 The following lemma, considered folklore (see \cite{KlivansOS2004}), shows that
noise stable functions are well-approximated by low-degree polynomials.

\begin{lemma} \label{lem:lowdegree}
 Let $\Pi=\Pi_1\times \Pi_2 \times \cdots \times \Pi_n$ 
be a product distribution over $\oo^n$, and 
let $f:\oo^n\rightarrow \R$ be a function such that $\|f\|=1$ and $\NS_{\delta}(f)\leq \alpha(\delta)$ for some increasing function $\alpha:[0,1/2] \rightarrow [0,1]$.
Then there exists a multilinear polynomial $p:\oo^n\rightarrow\R$ of degree $\frac{1}{\alpha^{-1}(\delta/2.32)}$
such that
\[\E_{x\sim \Pi} \left[(f-p)^2\right] < \delta.\]
\end{lemma}

\ignore{The ``Low-Degree Algorithm'' \cite{LMN-1989,Mansour-1994} shows that by estimating
all the low-degree Fourier coefficients of a function with good Fourier concentration
one can learn the function to high accuracy.
The Low-Degree Algorithm is known to also work under any constant-bounded 
product distributions \cite{FJS-1991} as well.
\begin{corollary}\label{cor:low-degree}
Let $\CCC$ be the class of non-negative submodular functions with $\norm{f}_2=1$ 
and let the distribution over inputs be any product distribution with minimum
probability bounded by $p_{min}$. 
Then the Low-Degree Algorithm
outputs a hypothesis $h$ such that
$\E\left[(f-h)^2\right]=O(\gamma)$ given random examples in time
$\poly(n^{2/(1-c)},1/\gamma)$ where $c=(2\rho - 1 + 2p_{min}( 1 - \rho))$.
\end{corollary}
}

The ``$L_1$ Polynomial Regression Algorithm'' due to Kalai et al.~\cite{KalaiKMS2008} 
shows that one can \emph{agnostically} learn low-degree polynomials. 
\begin{theorem}[\cite{KalaiKMS2008}] \label{thm:KKMS}
Fix distribution $\D$ on $X \times \R$ with marginal $\D_X$ on $X$.  Suppose that for any $f \in \C$,  $\E_{x\sim D_X} \left[(f-p)^2\right] < \delta^2$ for some degree $d$ polynomial $p$.  Then, with high probability, the $L_1$ Polynomial Regression Algorithm
outputs a hypothesis $h$ such that $\Pr_{(x,y)\sim\D}\left[h(x) \neq y\right]\leq opt+\delta$
in time $\poly(n^d/\delta)$.
\end{theorem}


\section{Invariance Principle for Polytopes}
Our main invariance principle for polytopes $\ptp(W,t)$ is as follows: 
\begin{theorem}[invariance principle for polytopes]\label{th:ipmain}
For any proper and hypercontractive distribution $\mu$ over $\reals$ and any $\epsilon$-regular $k$-polytope $\ptp$,
\begin{equation}\label{}
\left| \, \pr_{X \lfta \mu^n}\left[X \in \ptp\right] - \pr_{Y \leftarrow \NN}\left[Y \in \ptp\right] \,\right| \leq C \,c_\mu^2\,(\log^{8/5} k) \,(\epsilon\,\log(1/\epsilon))^{1/5}.
\end{equation}
\end{theorem}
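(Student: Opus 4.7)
The plan follows the two-step scheme sketched in Section~1.3. Write $\pr[X \in \ptp] = \pr[W^T X \in \rect(\theta)]$ and fix a smoothing scale $\lambda > 0$ to be chosen at the end. Appeal to Bentkus's construction to produce a smooth $\Psi_\lambda : \reals^k \to [0,1]$ with $\Psi_\lambda \equiv 1$ on $\rect(\theta - \lambda \ok)$, $\Psi_\lambda \equiv 0$ outside $\rect(\theta)$, and
\[
\sup_{u \in \reals^k} \sum_{p,q,r,s \in [k]} \bigl|\,\partial_p\partial_q\partial_r\partial_s \Psi_\lambda(u)\,\bigr| \;\leq\; \frac{C\,(\log k)^{c_1}}{\lambda^4}.
\]
Decompose the target quantity into three pieces, denoted $(\mathrm{II}), (\mathrm{I}), (\mathrm{II}')$:
\[
|\pr_X[W^T X \in \rect(\theta)] - \ex_X \Psi_\lambda(W^T X)| \,+\, |\ex_X \Psi_\lambda(W^T X) - \ex_Y \Psi_\lambda(W^T Y)| \,+\, |\ex_Y \Psi_\lambda(W^T Y) - \pr_Y[W^T Y \in \rect(\theta)]|.
\]

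\emph{Smooth invariance (I) via randomized blockwise hybrid.} Partition $[n]$ into $B$ blocks $B_1,\ldots,B_B$ using a $\Theta(\log k)$-wise independent hash, form the hybrid sequence $H_0 = Y, H_1,\ldots,H_B = X$ where $H_j$ uses the $X$-coordinates on $B_1\cup\cdots\cup B_j$ and the $Y$-coordinates elsewhere, telescope, and Taylor-expand $\Psi_\lambda$ to fourth order about the vector with the $j$-th block zeroed. Because $\mu$ is proper, the first three moments agree with those of $\calN$, so the contributions of orders $\leq 3$ cancel in expectation. The remaining fourth-order remainder is bounded by $\|\Psi_\lambda^{(4)}\|_1$ times
\[
\ex\!\Bigl[\,\bigl|\iprod{W^{p_1}}{X_{B_j}}\,\iprod{W^{p_2}}{X_{B_j}}\,\iprod{W^{p_3}}{X_{B_j}}\,\iprod{W^{p_4}}{X_{B_j}}\bigr|\,\Bigr],
\]
where $X_{B_j}$ denotes $X$ restricted to the coordinates in $B_j$. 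Hypercontractivity of $\mu$ on each block turns this into a product of block masses $\prod_t (\sum_{i \in B_j} W_{ip_t}^2)^{1/2}$, and taking expectation over the random partition effectively swaps $\max_p$ and $\sum_i$: in place of the standard per-coordinate Lindeberg bound $\sum_i \max_p W_{ip}^4$ (which can be $\Omega(k)$ even under $\epsilon$-regularity), we obtain one governed by $\max_p \sum_i W_{ip}^4 \leq \epsilon^2$, at the price of a $(\log k)^{O(1)}$ factor from the block structure and a $\log(1/\epsilon)$-type factor from a hypercontractivity-at-$q\approx\log(1/\epsilon)$ truncation that trades $\epsilon^2$ for $\epsilon\log(1/\epsilon)$. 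The net bound is $(\mathrm{I}) \leq C(\log k)^{c_2}\,\epsilon\log(1/\epsilon)/\lambda^4$.

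\emph{Boundary terms via Nazarov.} The integrand of $(\mathrm{II}')$ is supported on $W^T Y \in \rect(\theta)\setminus \rect(\theta - \lambda\ok)$, a subset of the $\lambda$-neighborhood of $\bdry \ptp$ in the $W^T$-induced $l_\infty$ sense. Nazarov's theorem $\surf(\ptp) = O(\sqrt{\log k})$ for Gaussian surface area of $k$-polytopes yields $(\mathrm{II}') \leq O(\sqrt{\log k})\,\lambda$. Since Nazarov is a purely Gaussian statement and does not apply under $\mu^n$, for $(\mathrm{II})$ we sandwich $\ind{W^T X \in \rect(\theta)}$ between smooth approximations of $\rect(\theta \pm \lambda \ok)$ and apply the smooth invariance of step (I) a second time, reducing $(\mathrm{II})$ to $(\mathrm{II}')$ plus an additional copy of $(\mathrm{I})$.

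\emph{Optimization and main obstacle.} The total error is at most $O(\sqrt{\log k})\,\lambda + C(\log k)^{c_2}\,\epsilon\log(1/\epsilon)/\lambda^4$. Balancing by taking $\lambda \sim ((\log k)^{c_2 - 1/2}\,\epsilon\log(1/\epsilon))^{1/5}$ gives a final bound $O((\log k)^{(c_2+2)/5}) \cdot (\epsilon\log(1/\epsilon))^{1/5}$; tracking Bentkus's constant $c_1$ (hence $c_2$) yields exactly the $\log^{8/5} k$ exponent claimed. The hard part is step (I): the standard Lindeberg swap per variable loses a factor of $k$ because its 4th-moment contribution collects a $\max_p$ inside the sum over $i$. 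The two observations that rescue us are (a) Bentkus's smooth AND-approximation controls $\Psi_\lambda^{(4)}$ in $l_1$ rather than merely $l_\infty$, so the sum over multi-indices $(p,q,r,s)$ can be handled globally, and (b) a random, $\Theta(\log k)$-wise independent partition into blocks effectively exchanges $\max_p$ and $\sum_i$ in expectation; making these two ingredients fit together through the multi-index fourth-order Taylor expansion with hypercontractivity is the core technical content, while the remaining steps (Taylor, Nazarov, and the final balancing) are standard though delicate.
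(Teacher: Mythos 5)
Your proposal follows the paper's own route essentially step for step: Bentkus's smooth approximation with $l_1$-control on the fourth derivative, the randomized $\Theta(\log k)$-wise-independent blockwise Lindeberg hybrid plus hypercontractivity to get the smooth invariance bound $C(\log^3 k)\psil\,\epsilon\log(1/\epsilon)$, Nazarov's Gaussian surface-area bound for the boundary term, the sandwich argument using two shifted Bentkus functions so that anti-concentration is needed only on the Gaussian side, and the final $\lambda$-balancing giving $(\log k)^{(c_2+2)/5} = \log^{8/5}k$ with $c_2 = c_1 + 3 = 6$. The only minor gloss is that Bentkus's construction approximates an $\ell_\infty$-ball (a symmetric box), so producing a function supported on a one-sided rectangle $\rect(\theta)$ requires the large-$T$ truncation trick (the paper's Corollary 3.6), but this is a routine technicality handled by choosing $T$ large enough that $\pr[\|A\|_\infty \geq T]$ and $\pr[\|B\|_\infty \geq T]$ are each at most $\Delta$.
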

The proof of the theorem can be divided into three parts. 
\begin{enumerate}
\item We establish an invariance principle for smooth functions on polytopes (\tref{th:ipsmooth}) using an extension of Lindeberg's method; \sref{sec:ipsmoothp} is devoted to proving this part. 
\item We prove that for random variables $A,B$ over $\reals^k$,
  closeness with respect to smooth functions and anti-concentration
  bounds for one of the variables imply closeness with respect to
  rectangles (\lref{lm:smtopoly}). To do so, we use a 
result of Bentkus \cite{Bentkus1990} on smooth approximations for the $l_\infty$ norm. 
\item We use a result of Nazarov \cite{Nazarov2003} on Gaussian surface area of polytopes to bound the Gaussian measure of ``$l_\infty$-neighborhoods'' of polytopes in $\reals^n$ (\lref{lm:acrect}).
\end{enumerate} \ignore{
$\pr_{x \lfta \NN} \left[ W^T x \in \rect(\theta+\lambda \ok)\setminus \rect(\theta)\right].$
Note that it is easy to obtain a bound of $O(k \lambda)$ for the above probability. The much better bound of $O(\sqrt{\log k} \, \lambda)$ follows from a result of Nazarov on the Gaussian surface area of polytopes.}


We begin by stating an {\em invariance principle for smooth functions} $\psi:\reals^k \rgta \reals$.  The proof is involved, making use of the randomized-blockwise-hybrid argument alluded to in the introduction. For clarity we present the proof in the next section (\sref{sec:ipsmoothp}).
\begin{theorem}[invariance principle for smooth functions]\label{th:ipsmooth}
For any proper and hypercontractive distribution $\mu$ over $\reals$ and any $\epsilon$-regular $W$ and smooth function $\psi:\reals^k \rgta \reals$, 
$$ \left|\,\ex_{X \lfta \mu^n}\left[\psi(W^T X)\right] - \ex_{Y \lfta \NN}\left[\psi(W^T Y)\right] \,\right| \leq C \,c_\mu^2\, \psil \,(\log^3 k) \,(\epsilon\,\log(1/\epsilon)).$$
\end{theorem}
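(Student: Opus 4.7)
The plan is to prove \tref{th:ipsmooth} via a \emph{randomized blockwise} Lindeberg argument, sharpening the coordinate-by-coordinate hybrid of \cite{MosselOO2005, Mossel2008}. First I would partition $[n]$ into $L$ blocks $B_1, \ldots, B_L$ using an assignment $\chi : [n] \to [L]$ drawn uniformly (or $O(\log k)$-wise independently) and define hybrids $Z^{(\ell)} \in \reals^n$ that equal $Y$ on $B_1 \cup \cdots \cup B_\ell$ and $X$ elsewhere. Telescoping,
$$\ex\psi(W^T X) - \ex\psi(W^T Y) = \sum_{\ell=1}^L \ex\bigl[\psi(W^T Z^{(\ell-1)}) - \psi(W^T Z^{(\ell)})\bigr],$$
and I would bound each summand with $\chi$ fixed, then take expectation over $\chi$.

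For each $\ell$, condition on all coordinates outside $B_\ell$ and decompose $W^T Z^{(\ell)} = \Lambda + h(z)$, where $\Lambda_p$ is frozen and $h_p(z) := \sum_{i \in B_\ell} W_{ip} z_i$ depends only on the block variables $z$. Taylor-expanding $\psi$ around $\Lambda$ to fourth order and using that $\mu$ is \emph{proper}---so its first three moments coincide with those of $\calN$---together with coordinate-independence inside $B_\ell$ shows that the expectations of the Taylor terms of order $\leq 3$ agree whether $z = X|_{B_\ell}$ or $z = Y|_{B_\ell}$; these cancel in the difference. Only the fourth-order remainder survives, and it is bounded pointwise by $\tfrac{1}{24}\psil \cdot \max_p |h_p|^4$. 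Setting $\sigma_{\ell,p}^2 := \sum_{i \in B_\ell} W_{ip}^2$ and applying $(2,q)$-hypercontractivity (\lref{lm:hypercon} and its analogue for $\calN$) with $q = \Theta(\log k)$ gives
$$\ex_z \max_p |h_p|^4 \;\leq\; C\,(\log k)^2\,\bigl(\max_p \sigma_{\ell,p}\bigr)^4,$$
so summing over blocks yields
$$\bigl|\ex\psi(W^T X) - \ex\psi(W^T Y)\bigr| \;\leq\; C\,\psil\,(\log k)^2 \sum_\ell (\max_p \sigma_{\ell,p}^2)^2.$$

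The heart of the argument is bounding this last sum in expectation over $\chi$. For each fixed $p$, $\sigma_{\ell,p}^2 = \sum_i W_{ip}^2\, \ind{\chi(i) = \ell}$ is a sum of independent bounded contributions with mean $1/L$; $\epsilon$-regularity forces $\max_i W_{ip}^2 \leq \epsilon$ and $\sum_i W_{ip}^4 \leq \epsilon^2$, so a Bernstein inequality combined with a union bound over $(p,\ell) \in [k] \times [L]$ yields $\max_{p,\ell} \sigma_{\ell,p}^2 \leq 1/L + O(\epsilon\sqrt{\log(kL)/L})$ with high probability. Integrating tails gives $\ex_\chi \sum_\ell (\max_p \sigma_{\ell,p}^2)^2 \lesssim 1/L + \epsilon^2 \log(kL)$. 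Choosing $L \asymp 1/(\epsilon\log(1/\epsilon))$ balances the two terms and, after collecting the $(\log k)$ factors, gives the claimed $\psil \cdot (\log^3 k) \cdot (\epsilon \log(1/\epsilon))$ bound. Because the Bernstein step only needs $\chi$ to be $O(\log(kL))$-wise independent, the same estimate survives with a pseudorandom $\chi$, which is what the PRG applications require.

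The main obstacle is exactly this joint $(p,\ell)$ bound: a naive coordinate-wise hybrid picks up $\sum_i \max_p W_{ip}^4$, which is $\Omega(k)\cdot\epsilon^2$ even for $\epsilon$-regular $W$, whereas randomizing the partition trades this for $\max_p \sum_i W_{ip}^4 \leq \epsilon^2$ at a cost of only $\mathrm{polylog}(k)$. Making the hypercontractive control of $\max_p |h_p|$ and the Bernstein concentration of block norms cooperate cleanly---while retaining compatibility with limited-independence partitions---is the delicate technical core.
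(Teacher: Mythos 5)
Your proposal is essentially the paper's own argument: a randomized blockwise Lindeberg hybrid, fourth-order Taylor expansion with low-order moments cancelling by properness of $\mu$, an $\ell_\infty\!\leq\!\ell_{\log k}$ plus hypercontractivity bound on the remainder, and concentration of the random block norms $\sigma_{\ell,p}^2$ to control $\ex_\chi\sum_\ell(\max_p\sigma_{\ell,p}^2)^2$. The only substantive divergence is the last step, and there your stated concentration bound is off: Bernstein does \emph{not} give $\max_{p,\ell}\sigma_{\ell,p}^2 \le 1/L + O(\epsilon\sqrt{\log(kL)/L})$. The variance of $\sigma_{\ell,p}^2$ is at most $\epsilon^2/L$, but the individual terms $W_{ip}^2\ind{\chi(i)=\ell}$ can be as large as $\epsilon$, and for the relevant deviation scale the exponential (boundedness) part of Bernstein dominates the Gaussian part, giving a deviation of order $\epsilon\log(kL)$; Hoeffding, using $\sum_i W_{ip}^4\le\epsilon^2$ directly, gives the sharper $O(\epsilon\sqrt{\log(kL)})$ with no $1/\sqrt{L}$ factor. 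Consequently $\ex_\chi\sum_\ell(\max_p\sigma_{\ell,p}^2)^2$ is $\lesssim 1/L + \epsilon\sqrt{\log(kL)} + L\epsilon^2\log(kL)$, not $1/L + \epsilon^2\log(kL)$; your factor of $L$ went missing. This is a fixable slip rather than a conceptual gap: with the corrected Hoeffding estimate and the same (or the paper's $L=1/\epsilon$) choice of $L$, all three terms are $O(\log k\,\epsilon\log(1/\epsilon))$ and the theorem follows. The paper sidesteps the need for a ``Bernstein under limited independence'' statement by computing the $(2\log k)$-th moment of $\sigma_{\ell,p}^2$ directly, exploiting that under $(2\log k)$-wise independence these moments agree exactly with the fully independent case and can then be bounded by integrating the Hoeffding tail of the i.i.d.\ surrogate; your ``high-probability bound plus integrate tails'' route is equivalent in spirit but should be phrased as a moment bound to make the limited-independence transfer airtight.
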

The following lemma shows that for two random variables $A,B$ over $\reals^k$, closeness with respect to smooth functions and {\em anti-concentration bounds} for the variable $B$ imply closeness with respect to rectangles. Note that to use the lemma we do not need anti-concentration bounds for the random variable $A$.
\begin{lemma}[smooth approximation of AND]\label{lm:smtopoly}
  Let $A,B$ be two random variables over $\reals^k$ satisfying the following conditions:
\begin{itemize}
 \item There exists $\Delta \geq 0$ such that for all smooth functions $\psi:\reals^k \rgta \reals$, $$\left|\ex\left[\psi(A)\right] - \ex\left[\psi(B)\right]\right| \leq \Delta \,\psil.$$ 
\item  There exists a function $g_k: [0,1] \rgta [0,1]$ such that the following holds: 
$$ \forall \lambda \in [0,1], \sup_{\theta \in \reals^k}\, \left\{\pr\left[\,B \in \rect(\theta+\lambda \ok) \setminus \rect(\theta)\,\right]\right\} \leq g_k(\lambda).$$
\end{itemize}
Then, 
$$\forall \theta \in \reals^k, \lambda \in (0,1), \left| \pr\left[A \in \rect(\theta)\right] - \pr\left[B \in \rect(\theta)\right]\,\right| \leq C \Delta \frac{\log^3 k}{\lambda^4}+ C g_k(\lambda).$$
\end{lemma}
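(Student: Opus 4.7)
The plan is to execute a classical sandwich argument, using a smooth approximation of the indicator of $\rect(\theta)$ whose fourth-derivative $\ell_1$-norm grows only as $\log^3 k / \lambda^4$. The indicator $\ind{\rect(\theta)}$ is, up to an affine change of coordinates, the logical AND of $k$ halfspace indicators applied to the coordinates of its argument. Since the hypothesis only lets me transfer expectations of smooth test functions from $A$ to $B$, I need to replace the discontinuous indicator by smoothed-out ``upper'' and ``lower'' envelopes, pay the cost of smoothing through the smooth invariance hypothesis, and pay the cost of moving the rectangle by $\lambda$ through the anti-concentration hypothesis on $B$.

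Concretely, I would invoke the result of Bentkus~\cite{Bentkus1990} alluded to in the introduction to produce, for every $\lambda \in (0,1)$, two smooth functions $\phi^{+}_\lambda, \phi^{-}_\lambda : \reals^k \rgta [0,1]$ with
\[
\ind{\rect(\theta - \lambda \ok)}(x) \;\leq\; \phi^{-}_\lambda(x) \;\leq\; \ind{\rect(\theta)}(x) \;\leq\; \phi^{+}_\lambda(x) \;\leq\; \ind{\rect(\theta + \lambda \ok)}(x),
\]
such that $\|\phi^{\pm}_\lambda{}^{(4)}\|_1 \leq C \log^3 k / \lambda^4$. This is the step where the polylogarithmic (rather than polynomial) dependence on $k$ enters: it is precisely Bentkus' construction of a smoothed AND whose mixed-partial sums are controlled in $\ell_1$, not in $\ell_\infty$. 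Concretely $\phi^{+}_\lambda$ can be taken as $\beta((\theta + \lambda \ok - x)/\lambda)$ for a suitable Bentkus smoother $\beta$ of the indicator of the positive orthant, and $\phi^{-}_\lambda$ as the analogous shift by $-\lambda$; the chain rule accounts for the $\lambda^{-4}$ factor.

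With the envelopes in hand, the proof reduces to chaining three inequalities. For the upper direction, the hypothesis gives
\[
\ex[\phi^{+}_\lambda(A)] - \ex[\phi^{+}_\lambda(B)] \;\leq\; \Delta \cdot \|\phi^{+}_\lambda{}^{(4)}\|_1 \;\leq\; C \Delta \log^3 k / \lambda^4,
\]
while the sandwich yields $\pr[A \in \rect(\theta)] \leq \ex[\phi^{+}_\lambda(A)]$ and $\ex[\phi^{+}_\lambda(B)] \leq \pr[B \in \rect(\theta + \lambda \ok)]$. The anti-concentration hypothesis applied with the rectangle $\rect(\theta)$ gives
\[
\pr[B \in \rect(\theta + \lambda \ok)] - \pr[B \in \rect(\theta)] \;=\; \pr[\, B \in \rect(\theta + \lambda \ok) \setminus \rect(\theta)\,] \;\leq\; g_k(\lambda).
\]
Combining these three, $\pr[A \in \rect(\theta)] \leq \pr[B \in \rect(\theta)] + C \Delta \log^3 k / \lambda^4 + g_k(\lambda)$. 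The lower direction is entirely symmetric, using $\phi^{-}_\lambda$ and applying the anti-concentration bound to the rectangle $\rect(\theta - \lambda \ok)$ instead; it yields $\pr[A \in \rect(\theta)] \geq \pr[B \in \rect(\theta)] - C \Delta \log^3 k / \lambda^4 - g_k(\lambda)$. Together these give the claim.

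The only real obstacle is the smoothing step: a naive product-of-sigmoids construction for the multivariate AND has $\|\psi^{(4)}\|_1$ scaling like $k^4 / \lambda^4$ in the worst case, which would wash out all the savings from Nazarov's $\sqrt{\log k}$ surface-area bound downstream. Everything else is a mechanical sandwich-and-triangle-inequality argument. Assuming Bentkus's approximator as a black box, the proof is essentially three lines of chaining, and the $\lambda^{-4}$ just reflects the scaling of the fourth derivatives of $\phi^\pm_\lambda$ under the $\lambda$-dilation; no further analytic work is required.
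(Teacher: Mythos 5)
Your high-level plan (sandwich between two shifted Bentkus smoothers, pay the smoothing cost through the first hypothesis and the $\lambda$-shift through the second) is exactly the paper's strategy, and your chaining is correct. The one place the argument does not close is the claim that Bentkus's construction directly supplies a smooth function $\phi^{+}_\lambda$ with $\ind{\rect(\theta)} \leq \phi^{+}_\lambda \leq \ind{\rect(\theta+\lambda\ok)}$ and $\psil \leq C\log^3 k/\lambda^4$. The result the paper actually cites (Theorem~\ref{bentkus}) is a smooth approximator of the bounded $\ell_\infty$-ball $\{a:\|a\|_\infty\leq\alpha\}$, not of the unbounded orthant $\{a:\max_l a_l\leq 0\}$. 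A pointwise upper envelope of $\ind{\rect(\theta)}$ must equal $1$ on an unbounded region, so it cannot be produced by an unmodified ball approximator, and the paper never asserts an orthant version. Your phrase ``a suitable Bentkus smoother $\beta$ of the indicator of the positive orthant'' is therefore an extra ingredient that you are assuming rather than obtaining from the cited result.

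The paper sidesteps this by truncating: it chooses $T$ large enough that $\pr[\|A\|_\infty\geq T]<\Delta$ and $\pr[\|B\|_\infty\geq T]<\Delta$, shifts the ball of radius $T/2$ to form the box $[-T+\theta_1,\theta_1]\times\cdots\times[-T+\theta_k,\theta_k]$ (Corollary~\ref{bentkus1}), replaces $\rect(\theta)$ by that truncated box at an extra cost of $O(\Delta)$ in the $A$- and $B$-probabilities, and only then runs your sandwich with the ball-type approximator. To patch your write-up you should either include this truncation step (it is only two extra inequalities), or explicitly argue that an orthant version of Theorem~\ref{bentkus} with the same $\psil$ bound holds --- say by observing that the shifted-cube approximators have $\psil$ bounded uniformly in $T$ and taking $T\to\infty$ along a subsequence --- neither of which is present in your current argument.
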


Finally, we use the following anti-concentration bound that follows from Nazarov's estimate on the Gaussian surface area of polytopes \cite{Nazarov2003}:

\begin{lemma}[anti-concentration bound for $l_\infty$-neighborhood of rectangles]\label{lm:acrect}
 For $0 < \lambda < 1$, and $W \in \reals^{n\times k}$ such that each
 of the columns have norm 1,
$$ \pr_{x \lfta \NN} \left[ W^T x \in \rect(\theta) \setminus \rect(\theta-\lambda\ok)\right] = O(\lambda\,\sqrt{\log k}).$$
\end{lemma}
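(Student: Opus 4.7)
The plan is to reduce the event to a Euclidean tube around the boundary of the polytope $\ptp(W,\theta)$ and then invoke Nazarov's bound on the Gaussian surface area of $k$-faced polytopes. Concretely, since every column $W^p$ is a unit vector (by our standing normalization), the quantity $\theta_p - \langle W^p, x\rangle$ is exactly the signed Euclidean distance from $x$ to the supporting hyperplane $H_p = \{y : \langle W^p, y\rangle = \theta_p\}$, taken positive when $x$ lies on the feasible side. Thus, for $x$ with $W^T x \in \rect(\theta)$, we have $x \in \ptp(W,\theta)$, and the additional condition $W^T x \notin \rect(\theta - \lambda \ok)$ is equivalent to the existence of some $p \in [k]$ with $\theta_p - \langle W^p, x\rangle < \lambda$, i.e., $\dist(x, H_p) < \lambda$. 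Therefore
\[
\{x : W^T x \in \rect(\theta) \setminus \rect(\theta - \lambda \ok)\} \subseteq T_\lambda := \{x \in \ptp(W,\theta) : \dist(x, \partial \ptp(W,\theta)) \le \lambda\},
\]
and it suffices to bound the Gaussian measure of the inner tube $T_\lambda$.

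Next I would invoke Nazarov's theorem \cite{Nazarov2003}, which asserts that any $k$-faced polytope $\ptp$ in $\reals^n$ has Gaussian surface area $\Gamma(\ptp) = O(\sqrt{\log k})$, independent of the ambient dimension $n$. Recall that $\Gamma$ decomposes as a sum of the $(n{-}1)$-dimensional Gaussian measures of the individual faces: $\Gamma(\ptp) = \sum_{p=1}^k \gamma_{n-1}(F_p)$, where $F_p = \ptp \cap H_p$. The task is then to convert this surface-area control into a bound on $\gamma_n(T_\lambda)$ that is linear in $\lambda$.

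For this step, I would bound $T_\lambda$ by the union, over faces $F_p$, of the inner slabs $S_p = \{x \in \ptp : 0 \le \theta_p - \langle W^p, x\rangle \le \lambda\}$. Because $W^p$ is a unit vector and $S_p \subseteq \{x : \theta_p - \lambda \le \langle W^p, x\rangle \le \theta_p\}$, a disintegration of the standard Gaussian along the $W^p$-direction and Fubini gives
\[
\gamma_n(S_p) \;\le\; \int_{\theta_p - \lambda}^{\theta_p} \gamma_{n-1}\bigl(F_p(t)\bigr)\,\phi(t)\,dt,
\]
where $F_p(t) = \ptp \cap \{\langle W^p, \cdot\rangle = t\}$ and $\phi$ is the one-dimensional Gaussian density. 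A standard comparison (or log-concavity argument) shows $\gamma_{n-1}(F_p(t))\phi(t) \le \gamma_{n-1}(F_p)$ uniformly for $t$ in a small interval around $\theta_p$, so $\gamma_n(S_p) \le \lambda \cdot \gamma_{n-1}(F_p) \cdot (1 + O(\lambda))$. Summing over $p$ and using Nazarov's bound yields $\gamma_n(T_\lambda) \le \lambda \sum_p \gamma_{n-1}(F_p) \cdot (1+O(\lambda)) = O(\lambda \sqrt{\log k})$ for $\lambda \in [0,1]$, which is the claim.

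The main obstacle is the passage from Nazarov's infinitesimal Gaussian surface area bound to a non-asymptotic tube estimate valid for a fixed (not arbitrarily small) $\lambda$. Nazarov's theorem is proved by integrating the one-dimensional Gaussian density along normals to each face, and so the slab-wise argument above is essentially a finite version of his proof; the technical care lies in handling the slight $(1+O(\lambda))$ slack from the Gaussian density varying across the slab, and in ensuring the face cross-sections $F_p(t)$ can be compared to the full face $F_p$. As this exact passage appears already in the work of Klivans, O'Donnell and Servedio \cite{KlivansOS2008} in the context of learning intersections of halfspaces under Gaussian measure, I would either cite their estimate directly or reproduce this short slab-wise computation in the proof.
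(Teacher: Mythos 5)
Your reduction to the inner tube $T_\lambda$ is correct (since the columns of $W$ are unit vectors, $\dist(x,\partial\ptp) = \min_p(\theta_p - \langle W^p,x\rangle)$ for $x$ in the interior of $\ptp$, so the annulus $\{W^Tx \in \rect(\theta)\setminus\rect(\theta-\lambda\ok)\}$ is exactly $T_\lambda$ up to boundary), and invoking Nazarov's bound is the right idea. However, the step where you compare cross-sections of the \emph{fixed} polytope $\ptp$ to its surface area contains a genuine gap. You claim that ``a standard comparison (or log-concavity argument)'' gives $\gamma_{n-1}(F_p(t))\,\phi(t) \le \gamma_{n-1}(F_p)\cdot(1+O(\lambda))$ for $t\in[\theta_p-\lambda,\theta_p]$, i.e., that the slab measure $\gamma_n(S_p)$ is at most $\lambda$ times face $p$'s contribution to $\Gamma(\ptp)$ (up to a $1+O(\lambda)$ factor). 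This is false already for a single halfspace $\{x: x_1 \le \theta\}$ with $\theta$ large and positive: there $F_p(t)$ is a full hyperplane for every $t$, so $\gamma_{n-1}(F_p(t))$ is constant, while $\phi(t)/\phi(\theta) = e^{\lambda\theta - \lambda^2/2}$ is unbounded as $\theta\to\infty$. Correspondingly, $\Phi(\theta)-\Phi(\theta-\lambda)$ can exceed $\lambda\,\phi(\theta)$ by an arbitrarily large factor. Log-concavity of the marginal density $t\mapsto \gamma_{n-1}(F_p(t))\phi(t)$ gives unimodality, not monotone decrease toward $\theta_p$, so it does not rescue the comparison.

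The fix, which is what the paper does, is to avoid ever comparing to $\Gamma(\ptp)$ for the single fixed polytope. Instead, sweep a nested family $\ptp_\rho = \{x : W^T x \le \theta - (\lambda-\rho)\ok\}$ for $\rho\in[0,\lambda]$. Because each $W^p$ is a unit vector, every bounding hyperplane moves at unit normal speed as $\rho$ increases, so the coarea formula gives
\[
\pr_{x\lfta\NN}\bigl[W^Tx\in\rect(\theta)\setminus\rect(\theta-\lambda\ok)\bigr]
= \int_0^\lambda \Gamma(\ptp_\rho)\,d\rho.
\]
Now Nazarov's theorem is applied \emph{uniformly over $\rho$}: each $\ptp_\rho$ is itself a polytope with at most $k$ faces, hence $\Gamma(\ptp_\rho)\le C\sqrt{\log k}$, and the integral is at most $C\sqrt{\log k}\,\lambda$. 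Note that $\Gamma(\ptp_\rho)$ only counts the part of each hyperplane actually on $\partial\ptp_\rho$, which is strictly less than your slab cross-sections $F_p(t)$ of the original $\ptp$; this is exactly what lets the uniform Nazarov bound apply and what your slab-wise decomposition loses. In short: the conclusion is true, but the comparison you invoke is not, and the proof really must apply Nazarov's bound to every polytope in the one-parameter family rather than to $\ptp$ alone.
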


We first prove \tref{th:ipmain} using the above three results and then prove Lemmas~\ref{lm:smtopoly} and \ref{lm:acrect} in Sections~\ref{sec:smtopoly} and \ref{sec:acrect}. \tref{th:ipsmooth} is then proved in \sref{sec:ipsmoothp}.

\begin{proof}[of \tref{th:ipmain}]
Let $X \lfta \mu^n$, $Y \lfta \NN$ and let random variables $A = W^T X$, $B = W^T Y$. Then, by  \lref{lm:acrect} and \tref{th:ipsmooth},
 \[ \pr\left[B \in \reals(\theta + \lambda \ok) \setminus \reals(\theta)\right] \leq C \sqrt{\log k}\, \lambda,\]
\[ \left|\ex\left[\psi(A)\right] - \ex\left[\psi(B)\right]\right| \leq C \,c_\mu^2\,(\log^3 k)\, \epsilon\,\log(1/\epsilon)\, \psil,\] where $\psi:\reals^k \rgta \reals$ is any smooth function, $\theta \in \reals^k$ and $\lambda \in (0,1)$. Therefore, by \lref{lm:smtopoly}, for $\theta \in \reals^k$, 
\[\left|\pr\left[A \in \rect(\theta)\right] - \pr\left[B \in \rect(\theta)\right] \right| \leq C \,(\log^6 k)\,\log(1/\epsilon)\epsilon/\lambda^4 + C \sqrt{\log k}\, \lambda. \]
The theorem now follows by setting $\lambda = (\log^{11/10} k)\,(\epsilon\log(1/\epsilon))^{1/5}$.
\end{proof}

\subsection{Smooth approximation of AND}\label{sec:smtopoly}
We now prove \lref{lm:smtopoly}. For this, we use the following 
result of Bentkus \cite{Bentkus1990} on smooth approximations for the $l_\infty$ norm.
\begin{theorem}[Bentkus \cite{Bentkus1990}]\label{bentkus}
  For every $\alpha > 0$ and $ 0 < \lambda < 1$, there exists a function $\psi \equiv \psi_{\alpha,\lambda}:\reals^k \rgta \reals$ such that $\psil \leq C \log^3 k /\lambda^4$ and
\[ \psi(a) =
\begin{cases}
  1 & if\;\; \|a\|_\infty \leq \alpha\\
  0 & if \;\; \|a \|_\infty > \alpha+\lambda\\
  \in [0,1] & otherwise
\end{cases}.\]
\end{theorem}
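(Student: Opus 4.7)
The plan is to construct $\psi$ as a composition $\psi(a) = \chi(M(a))$, where $M : \reals^k \to \reals$ is a smooth approximation of $\|a\|_\infty$ built from the log-partition / soft-max function, and $\chi : \reals \to [0,1]$ is a one-dimensional smooth cutoff. The whole difficulty is concentrated in bounding $\psil$, which I attack by a chain-rule decomposition combined with bounds on the $\ell_1$-norms of the derivative tensors of $M$.

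Concretely, I would set
\[
M(a) \;=\; \beta \,\log \sum_{i=1}^{k}\bigl(e^{a_i/\beta} + e^{-a_i/\beta}\bigr),
\qquad \beta = c\,\lambda/\log k,
\]
for a small constant $c > 0$. Standard inequalities for log-sum-exp give $\|a\|_\infty \le M(a) \le \|a\|_\infty + \beta\log(2k)$, so for $c$ small enough, $M(a) \le \|a\|_\infty + \lambda/4$. I then pick $\chi : \reals \to [0,1]$ smooth with $\chi(t)=1$ for $t \le \alpha + \lambda/4$, $\chi(t) = 0$ for $t \ge \alpha + \lambda$, and $\|\chi^{(j)}\|_\infty \le C_j/\lambda^j$ for $j=1,\dots,4$ (a standard mollified ramp). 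Setting $\psi := \chi \circ M$, the boundary conditions are immediate: if $\|a\|_\infty \le \alpha$ then $M(a) \le \alpha + \lambda/4$, forcing $\psi(a) = 1$; if $\|a\|_\infty > \alpha + \lambda$ then $M(a) > \alpha + \lambda$, forcing $\psi(a) = 0$; and $\psi \in [0,1]$ everywhere because $\chi$ does.

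The key technical lemma to prove is that, for each fixed $m \ge 1$, the $m$-th derivative tensor of $M$ has $\ell_1$-norm
\[
S_m(a) \;:=\; \sum_{p_1,\dots,p_m \in [k]} \bigl|\partial_{p_1}\cdots\partial_{p_m} M(a)\bigr| \;\le\; C_m\,\beta^{-(m-1)}.
\]
This I would prove by induction. The base case is $S_1 = \sum_p |\partial_p M(a)| \le 1$, since $\partial_p M = (e^{a_p/\beta} - e^{-a_p/\beta})/Z$ is at most the softmax probability mass at index $p$ (summing over $p$ to $1$). For the inductive step, one writes $M(a) = \beta G(y(a)/\beta)$ where $G(y) = \log\sum_j e^{y_j}$ on $\reals^{2k}$ and $y(a)$ is the duplication-with-signs map. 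Differentiating $G$ once produces softmax probabilities $p_j$; each further differentiation either pulls down a factor of $p_j$ from an exponent (multiplying by $\delta$'s) or differentiates the normalizer (multiplying by another $p_j$). The result is that $\partial^m G$ is a finite sum (with at most $B_m$ terms, a Bell number) of monomials in $p_j$'s paired with Kronecker deltas, each with $\ell_1$-mass $\le 1$ by $\sum_j p_j = 1$. The rescaling $M(a) = \beta G(\cdot/\beta)$ introduces the $\beta^{-(m-1)}$ factor. Since $y(a)$ is coordinate-wise $\pm 1$-linear, passing back to derivatives in $a$ preserves the $\ell_1$-bound up to the constant $C_m$.

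To finish, I apply multivariate Faà di Bruno: $\partial_{pqrs}\psi$ is a sum over partitions $\pi$ of $\{p,q,r,s\}$ of terms $\chi^{(|\pi|)}(M(a))\,\prod_{B\in\pi}\partial_B M(a)$. Summing $|\partial_{pqrs}\psi|$ over $p,q,r,s\in[k]$ factorizes block-by-block, so each partition of block-sizes $(m_1,\dots,m_\ell)$ (with $\sum m_j = 4$) contributes at most
\[
\|\chi^{(\ell)}\|_\infty \cdot \prod_{j=1}^{\ell} S_{m_j}(a) \;\le\; (C/\lambda^\ell)\cdot \prod_j C_{m_j}\beta^{-(m_j - 1)} \;=\; O\!\left(\lambda^{-\ell}\,\beta^{-(4-\ell)}\right).
\]
The worst case is $\ell=1$ (one block of size $4$), giving $O(\lambda^{-1}\beta^{-3}) = O(\log^3 k/\lambda^4)$ with our choice of $\beta$; all other partitions produce strictly lower powers of $\log k$. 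Summing the finite number of partition types yields $\psil \le C\log^3 k / \lambda^4$ as required. The main obstacle I anticipate is the careful bookkeeping for the inductive bound $S_m \le C_m \beta^{-(m-1)}$: the cumulant-style expansion of derivatives of log-sum-exp is classical, but verifying that each monomial has $\ell_1$-mass exactly $1$ (rather than a factor growing in $k$) requires tracking how Kronecker deltas collapse sums back onto single probability vectors.
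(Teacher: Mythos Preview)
The paper does not prove this theorem: it is quoted as a black-box result from Bentkus~\cite{Bentkus1990}, and the paper only uses it (via \cref{bentkus1}) inside the proof of \lref{lm:smtopoly}. So there is no ``paper's own proof'' to compare against.

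That said, your construction is the right one and is essentially the approach in Bentkus's original paper: smooth the $\ell_\infty$ norm via a log-sum-exp (soft-max) at temperature $\beta \asymp \lambda/\log k$, then compose with a one-dimensional bump. The crucial structural fact you identify---that the $\ell_1$-mass of the $m$-th derivative tensor of $G(y) = \log\sum_j e^{y_j}$ is $O_m(1)$ uniformly in $k$, because every term in the cumulant expansion is a product of softmax probabilities and Kronecker deltas whose total mass telescopes back to powers of $\sum_j p_j = 1$---is exactly the point. The scaling then gives $S_m \le C_m\beta^{-(m-1)}$, and your Fa\`a di Bruno bookkeeping correctly isolates the single-block partition as the dominant term, yielding $\psil = O(\lambda^{-1}\beta^{-3}) = O(\log^3 k/\lambda^4)$.

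One small remark on the base case: you write that $|\partial_p M| = |e^{a_p/\beta} - e^{-a_p/\beta}|/Z$ is ``at most the softmax probability mass at index $p$''; more precisely it is bounded by $(e^{a_p/\beta} + e^{-a_p/\beta})/Z$, the combined mass at the two duplicated indices, which still sums to $1$. The higher-order argument is cleanest if you carry out the cumulant expansion directly in the $2k$-dimensional variable $y$ and only collapse back to $a$-coordinates at the end, as you suggest; this avoids any sign-tracking and makes the $\ell_1$-mass-$1$ property of each monomial transparent.
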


\begin{corollary}\label{bentkus1}
For all $u \in \reals^k$, $0 < \lambda < 1$, $T > \|u\|_\infty$, there exists a function $\psi \equiv \psi_{u,\lambda,T}:\reals^k \rgta \reals$ such that $\psil \leq C \log^3 k /\lambda^4$ and 
\[ \psi(a) =
\begin{cases}
  1 & if\;\; \forall l \in [k] , -T + u_l \leq a_l \leq u_l\\
  0 & if \;\;  \exists l \in [k] , a_l > u_l+\lambda\\
  \in [0,1] & otherwise
\end{cases}.\]
\end{corollary}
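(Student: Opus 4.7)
The plan is to reduce directly to Bentkus' theorem (\tref{bentkus}) via an affine change of variables. Observe that the target box $\{a : -T + u_l \leq a_l \leq u_l \text{ for all } l\}$ is the axis-aligned cube of edge length $T$ centered at $u - (T/2)\,\ok$. First I would translate coordinates so that this cube becomes the symmetric $l_\infty$-ball of radius $T/2$ centered at the origin, then apply Bentkus' smooth indicator with parameter $\alpha = T/2$.

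Concretely, let $\psi_{T/2,\lambda}$ be the function supplied by \tref{bentkus} at $\alpha = T/2$, and define
\[
\psi(a) \;=\; \psi_{T/2,\lambda}\!\bigl(a - u + (T/2)\,\ok\bigr).
\]
The hypothesis $T > \|u\|_\infty \geq 0$ guarantees $\alpha = T/2 > 0$, so Bentkus' theorem is in force. Writing $b = a - u + (T/2)\,\ok$, I would verify each case coordinate-by-coordinate. If $a$ lies in the target box, then $b_l \in [-T/2, T/2]$ for every $l$, hence $\|b\|_\infty \leq \alpha$ and $\psi(a) = 1$. If $a_l > u_l + \lambda$ for some $l$, then the corresponding $b_l > T/2 + \lambda > 0$, so $\|b\|_\infty \geq b_l > \alpha + \lambda$ and $\psi(a) = 0$; crucially, this conclusion does not care what the other coordinates of $b$ do. In all remaining cases, $\psi(a) \in [0,1]$ follows automatically from the corresponding clause of \tref{bentkus}.

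The bound on $\|\psi^{(4)}\|_1$ is immediate: the map $a \mapsto b$ is a pure translation, its Jacobian is the identity, and the chain rule yields $\partial_p\partial_q\partial_r\partial_s\,\psi(a) = \partial_p\partial_q\partial_r\partial_s\,\psi_{T/2,\lambda}(b)$ pointwise. Taking the supremum over $a$ (equivalently over $b$) gives
\[
\|\psi^{(4)}\|_1 \;=\; \|\psi_{T/2,\lambda}^{(4)}\|_1 \;\leq\; C\,\log^3 k /\lambda^4.
\]

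There is essentially no obstacle here: the corollary is a bookkeeping step converting Bentkus' symmetric $l_\infty$-approximator into a one-sided shifted-box approximator that can be plugged into the smooth-function invariance principle. The substantive analysis already lives in \tref{bentkus}; a rigid translation preserves both the three-region behavior and every partial derivative, so nothing is lost in passing to the corollary.
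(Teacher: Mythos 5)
Your proof is correct and takes exactly the same approach as the paper: translate coordinates by $a \mapsto a - u + (T/2)\ok$ and apply \tref{bentkus} with $\alpha = T/2$. The paper leaves the case-by-case verification as "easy to check"; you have simply spelled it out.
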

\begin{proof}
Let $\psi_{T/2,\lambda}$ be the function from \tref{bentkus} with $\alpha = T/2$. Define $\psi \equiv \psi_{u,\lambda,T}:\reals^k \rgta \reals$ by 
\[ \psi_{u,\lambda,T}(a_1,\ldots,a_k) = \psi_{T/2,\lambda}(a_1 + T/2 - u_1, a_2 + T/2 - u_2,\ldots,a_k + T/2 - u_k).\]
It is easy to check that $\psi$ satisfies the conditions of the theorem.
\end{proof}

\begin{proof}[of \lref{lm:smtopoly}]
Fix $\theta \in \reals^k$, $0 < \lambda < 1$. Choose $T \in \reals$
large enough so that $T > \|\theta\|_\infty$, $\pr\left[\|A\|_\infty
  \geq T\right] < \Delta $ and $\pr\left[\|B\|_\infty \geq T\right] <
\Delta$. Then, by the choice of $T$
  \begin{align}\label{ip:eq3}
\left| \pr\left[A \in \rect(\theta)\right] - \pr\left[A \in \rect_{2T}(\theta)\right] \right| \leq \Delta,\nonumber\\
\left|\pr\left[B \in \rect(\theta)\right] - \pr\left[B \in \rect_{2T}(\theta)\right]\right| \leq \Delta,     
  \end{align}
where $\rect_T(\theta) = [-T+\theta_1,\theta_1] \times [-T+\theta_2,\theta_2] \times \cdots \times [-T+\theta_k,\theta_k] \subseteq \reals^k$.  
Let $\psi:\reals^k \rgta \reals$ be the function obtained from
applying \cref{bentkus1} to $\theta, \lambda, 2T$. Observe that from
the definition of $\psi$ in \cref{bentkus1} and \eref{ip:eq3}, we have
\[ \pr\left[A \in \rect(\theta)\right] \leq \ex\left[\psi(A)\right] + \Delta \leq \ex\left[\psi(B)\right] + \Delta \psil + \Delta .\]
Similarly, 
\begin{align*}
\ex\left[\psi(B)\right] &\leq \pr\left[B \in \rect(\theta+\lambda\ok)\right] \\
&= \pr\left[B \in \rect(\theta)\right] + 
\pr\left[B \in \rect(\theta+\lambda\ok)\setminus\rect(\theta)\right] \\&\leq \pr\left[B \in \rect(\theta)\right] + g_k(\lambda),
\end{align*}
where the last inequality follows from the definition of $g_k$. Combining the above two equations we get
\begin{multline*}
 \pr\left[A \in \rect(\theta)\right] \leq \pr\left[B \in \rect(\theta)\right] + 2 \Delta \psil + g_k(\lambda) \leq \\
\pr\left[B \in \rect(\theta)\right] + \frac{C \Delta \log^3 k}{\lambda^4} + g_k(\lambda).
\end{multline*}

Proceeding similarly for the function $\psi_L:\reals^k \rgta \reals$ obtained by applying \cref{bentkus1} to $\theta-\lambda\ok,\lambda,2T$, we get 
\begin{equation*}
 \pr\left[A \in \rect(\theta)\right] \geq \pr\left[B \in \rect(\theta)\right] - \frac{C \Delta \log^3 k}{\lambda^4} - g_k(\lambda).
 \end{equation*}
Therefore, 
\[ \left|\pr\left[A \in \rect(\theta)\right] - \pr\left[B \in \rect(\theta)\right]\right| \leq \frac{C \Delta \log^3 k}{\lambda^4} + g_k(\lambda).\]
\end{proof}

\subsection{Anti-concentration bound for $\l_\infty$-neighborhood of rectangles}\label{sec:acrect}

\lref{lm:acrect} follows straightforwardly from the following result of Nazarov \cite{Nazarov2003}. For a convex body $K \subseteq \reals^n$ with boundary $\partial K$, let $\Gamma(K)$ denote the Gaussian surface area of $K$ defined by 
\[ \Gamma(K) = \int_{y \in \partial K} e^{\frac{-\|y\|^2}{2}}\, d\sigma(y),\]
where $d\sigma(y)$ denotes the surface element at $y \in \partial K$.
\begin{theorem}[Nazarov (see {\cite[Theorem~20]{KlivansOS2008}})]\label{th:nazarov}
For a polytope $\ptp$ with at most $k$ faces,  $\Gamma(\ptp) \leq C \sqrt{\log k}$.
\end{theorem}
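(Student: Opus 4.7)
The plan is to express the Gaussian surface area as the density at $0$ of the maximum of $k$ correlated Gaussians, and then bound that density by $C\sqrt{\log k}$.

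First, I will reduce the problem to a density bound. Write $\ptp = \{x \in \rn : \langle w_i, x\rangle \leq \theta_i \text{ for all } i\}$ with $\|w_i\|=1$. For $X \lfta \NN$, set $Y_i := \langle w_i, X\rangle$ (jointly Gaussian, each marginally standard) and $M := \max_i (Y_i - \theta_i)$, so that $X \in \ptp \iff M \leq 0$. The Minkowski-content characterization of Gaussian surface area, applied to the outer slab enlargement $\ptp^+_\lambda := \{x : \langle w_i, x\rangle \leq \theta_i + \lambda \text{ for all } i\}$ (which by convexity of $\ptp$ and $\|w_i\|=1$ agrees with the true Minkowski $\lambda$-enlargement up to $o(\lambda)$ corrections localized at ridges and corners), yields
\[\Gamma(\ptp) = \lim_{\lambda \to 0^+}\frac{\pr[X \in \ptp^+_\lambda] - \pr[X \in \ptp]}{\lambda} = \lim_{\lambda \to 0^+}\frac{\pr[0 < M \leq \lambda]}{\lambda} = f_M(0),\]
where $f_M$ denotes the Lebesgue density of $M$. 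So it suffices to show $f_M(0) \leq C\sqrt{\log k}$.

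Next, I will decompose by which face carries the maximum. Conditioning on the identity of the argmax gives
\[ f_M(0) = \sum_{i=1}^k \phi(\theta_i)\,q_i, \qquad q_i := \pr\bigl[\,Y_j \leq \theta_j\ \forall j\neq i \,\big|\, Y_i = \theta_i\,\bigr], \]
where $\phi$ is the standard Gaussian density. Indices with $|\theta_i| > T := 3\sqrt{\log k}$ are controlled trivially: $\phi(\theta_i) \leq k^{-9/2}$, so with the crude bound $q_i \leq 1$ their total contribution is $o(1)$. The remaining sum is over the active indices with $|\theta_i| \leq T$, for each of which $\phi(\theta_i) \leq 1/\sqrt{2\pi}$, so the task reduces to proving $\sum_{i:\, |\theta_i| \leq T} q_i \leq C\sqrt{\log k}$.

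The main obstacle is this competition-among-faces bound for arbitrary covariance structure. The sharp intuition comes from the i.i.d.\ case with all $\theta_i = 0$, where $f_M(z) = k\phi(z)\Phi(z)^{k-1}$ peaks at $z \simeq \sqrt{2\log k}$ with value $\Theta(\sqrt{\log k})$, so the target bound is tight. To handle general correlations I will follow Nazarov's approach: use Gaussian integration by parts (Stein's identity against the outward normals $w_i$) to rewrite $\sum_i \phi(\theta_i)\,q_i$ as an integral over $\ptp$ of a bounded linear combination of indicators of face-regions, and then dominate this quantity by the i.i.d.\ model through a Slepian-type comparison. The explicit estimate $\max_z k\phi(z)\Phi(z)^{k-1} \leq C\sqrt{\log k}$ then closes out the argument in the i.i.d.\ case. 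The delicate step is carrying out the Slepian comparison without picking up an extra factor of $k$ along the way; this is precisely where the technical weight of Nazarov's original proof sits, and for the purposes of this paper the statement is invoked as a black box.
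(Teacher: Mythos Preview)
The paper does not prove this theorem at all: it is stated as a citation to Nazarov (via \cite{KlivansOS2008}) and used immediately as a black box to derive \lref{lm:acrect}. So there is no ``paper's own proof'' to compare your proposal against, and your closing sentence --- that for the purposes of this paper the statement is invoked as a black box --- is in fact the entirety of what the paper does here.

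As for the sketch you supply before that closing sentence: the reduction of $\Gamma(\ptp)$ to the density $f_M(0)$ of the maximum $M = \max_i(Y_i - \theta_i)$ is correct, as is the facet decomposition $f_M(0) = \sum_i \phi(\theta_i)\,q_i$ and the trivial disposal of indices with $|\theta_i| > 3\sqrt{\log k}$. But the substantive step --- bounding $\sum_i \phi(\theta_i)\,q_i$ by $C\sqrt{\log k}$ for arbitrary covariance --- is not actually carried out; you name ``Stein's identity'' and a ``Slepian-type comparison'' and then explicitly defer. This is not a proof but a plan, and moreover the plan does not quite match Nazarov's actual argument: his proof is not a Slepian comparison to the i.i.d.\ model but rather a direct pointwise bound. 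One writes $\phi(\theta_i)\,q_i$ as an integral over the $i$th facet and then observes that for each interior point $x \in \ptp$ the contributions it makes (via the function $\phi(t)/\Phi(t)$ evaluated at distances to the supporting hyperplanes) can be summed and bounded by $C\sqrt{\log k}$, because at most $k$ halfspaces are in play and the weight function decays appropriately. No comparison inequality is needed. If you intend to include a proof rather than a citation, that is the argument to write out; as it stands your proposal is an incomplete sketch that itself concedes the gap.
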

\begin{proof}[of \lref{lm:acrect}]
Consider an increasing (under set inclusion) family of polytopes $\ptp_\rho$ for $0 \leq \rho \leq \lambda$ such that $\ptp_0 = \{x: W^Tx \in \rect(\theta-\lambda\ok)\}$ and $\ptp_\lambda = \{x:W^Tx \in \rect(\theta)\}$. Then, 
\[ \pr_{x \lfta \NN} \left[ W^T x \in \rect(\theta)\setminus \rect(\theta-\lambda\ok)\right] = \int_{\rho = 0}^\lambda \Gamma(\ptp_\rho) d\rho \leq C \sqrt{\log k}\, \lambda,\]
where the last inequality follows from \tref{th:nazarov}.
\end{proof}



\section{Invariance principle for Smooth Functions over Polytopes}\label{sec:ipsmoothp}
We now prove \tref{th:ipsmooth}. The proof of the theorem is based on
the Lindeberg method for proving limit theorems with explicit error
bounds. Let $t = \lceil 1/\epsilon\rceil $ 
and let $\hh = \{h:[n] \rgta [t]\}$ be a family of $(2\log k)$-wise independent functions. That is, for all $I \subseteq [n], |I| \leq 2\log k$ and $b \in [t]^{I}$,  $\pr_{h \in_u \hh}[\, \forall i \in I,\; h(i) = b_i\,] = \frac{1}{t^{|I|}}$.

We remark that to prove \tref{th:ipsmooth} we could take the hash family to be the set of all functions. However, we work with a $(2\log k)$-wise independent family as the analysis is no more complicated and we need to work with such hash families while constructing pseudorandom generators. For $S \subseteq [n]$, let $W_S$ be the matrix formed by the rows of $W$ with indices in $S$ (thus $W^{i}_{S}$ is the $i$th column of the submatrix whose rows are given by the indices in $S$). Define
\[ \hh(W) \eqdef \sum_{i=1}^t \left(\,  \ex_{h}\left[\,\sum_{p=1}^k \|W_{h^{-1}(i)}^p\|^{4\log k}\right]\,\right)^{1/\log k}.\]  

\tref{th:ipsmooth} follows immediately from the following two lemmas. 
\begin{lemma}\label{lm:hhw}
  For $\epsilon$-regular $W$, $\hh(W) \leq C \,\log k\, (\epsilon\, \log(1/\epsilon))$. 
\end{lemma}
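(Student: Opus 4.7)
The plan is to bound each bucket's contribution via an $L_q$-moment estimate and sum over buckets. Set $q = 2\log k$, $t = 1/\epsilon$, and for each $p \in [k]$, $i \in [t]$ define $V_{p,i} = \|W^p_{h^{-1}(i)}\|^2 = \sum_j (W^p_j)^2 Z_j$ where $Z_j = \mathbf{1}[h(j) = i]$ is Bernoulli$(1/t)$. Since $\|W^p_{h^{-1}(i)}\|^{4\log k} = V_{p,i}^q$, the quantity to bound unpacks as
\[
  \hh(W) \;=\; \sum_{i=1}^{t} \Bigl(\,\ex_h\bigl[\,\textstyle\sum_p V_{p,i}^q\,\bigr]\,\Bigr)^{\!2/q}.
\]

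The heart of the proof is a uniform $L_q$-moment bound on $V_{p,i}$. Decompose $V_{p,i} = \mu + U_{p,i}$, with $\mu = \ex V_{p,i} = \epsilon \|W^p\|^2 = \epsilon$ and centered part $U_{p,i} = \sum_j (W^p_j)^2 (Z_j - 1/t)$. The key observation is that $U_{p,i}$ is a degree-$1$ form in the $Z_j$'s, so $\ex[U_{p,i}^q]$ involves only joint moments of at most $q$ of the $Z_j$'s; by the $q=2\log k$-wise independence of $h$, these agree exactly with those under full independence, and hence Rosenthal/Bennett-type moment inequalities apply. Using $\epsilon$-regularity of $W^p$ we have $\Var(U_{p,i}) \leq (1/t)\sum_j (W^p_j)^4 \leq \epsilon^3$, $\max_j (W^p_j)^2 \leq \epsilon$, and $\|Z_j - 1/t\|_q \leq (2\epsilon)^{1/q}$; these yield
\[
  \|U_{p,i}\|_q \;\leq\; C\bigl(\sqrt{q\,\epsilon^3} \;+\; q\,\epsilon\cdot \epsilon^{1/q}\bigr),
\]
so $\|V_{p,i}\|_q \leq \mu + \|U_{p,i}\|_q$. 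Plugging in $q = 2\log k$ and optimizing the Gaussian vs.\ subexponential regimes against the $\epsilon^{1/q} = \exp(-\log(1/\epsilon)/(2\log k))$ factor gives a bound of the form $\|V_{p,i}\|_q \leq C\,\epsilon\,\sqrt{\log k \cdot \log(1/\epsilon)}$.

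Assembling: since $k^{2/q} = k^{1/\log k} = O(1)$, $\bigl(\sum_p \|V_{p,i}\|_q^q\bigr)^{2/q} \leq k^{2/q}\,\|V_{p,i}\|_q^2 = O(\epsilon^2 \log k \log(1/\epsilon))$. Summing over the $t = 1/\epsilon$ buckets produces $\hh(W) \leq O(\log k \cdot \epsilon \log(1/\epsilon))$, as claimed.

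Main obstacle: obtaining the sharp $q$-th moment bound on a sum of $q$-wise independent Bernoulli variables with tiny mean $\epsilon$. A plain $(2,q)$-hypercontractivity inequality for Rademacher bits is too crude because $Z_j - 1/t$ is heavily sub-exponential (variance $\approx \epsilon$ while its sup norm is $1$). One must separate into the Gaussian regime (where $\sqrt{q\sigma^2}$ dominates) and the sub-exponential regime (where the max-term $qM$ appears) via a Rosenthal/Bennett estimate, and then exploit the $\epsilon^{1/q}$ factor in $\|Z_j - 1/t\|_q$; this $\epsilon^{1/q}$ is precisely what converts an extra $\log k$ into the $\log(1/\epsilon)$ in the stated bound, and it crucially relies on using $q$-wise (rather than, say, $O(1)$-wise) independence of the hash family.
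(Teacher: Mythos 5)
The overall framework here is the same as the paper's: unpack $\hh(W)$, pass to i.i.d.\ indicators by exploiting that $V_{p,i}$ is degree $1$ in the indicator variables $Z_j$ (so its $q$-th moment at $q = 2\log k$ coincides under $(2\log k)$-wise independence and under full independence), bound $\|V_{p,i}\|_q$ uniformly in $p,i$, then assemble using $k^{2/q} = O(1)$ and the $t = 1/\epsilon$ buckets. The moment estimate is where you depart from the paper, and that is where a genuine gap appears.

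Your Rosenthal/Bennett decomposition uses the \emph{true} variance $\sigma^2 = \Var(U_{p,i}) \le \epsilon^3$ and pays for it with a sub-exponential term $q\,\epsilon\cdot\epsilon^{1/q}$. That second term is \emph{not} bounded by $C\epsilon\sqrt{\log k\,\log(1/\epsilon)}$. With $q = 2\log k$, write $a = \log(1/\epsilon)$, $b = \log k$; then
\[
\frac{q\,\epsilon\,\epsilon^{1/q}}{\epsilon\sqrt{\log k\,\log(1/\epsilon)}}
= 2\sqrt{\frac{b}{a}}\;e^{-a/(2b)},
\]
which is unbounded as $a/b \to 0$ (i.e.\ when $\log(1/\epsilon)$ is small compared to $\log k$ --- a regime the lemma must cover, since it places no lower bound on $\epsilon$). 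So the claimed intermediate bound $\|V_{p,i}\|_q \le C\epsilon\sqrt{\log k\,\log(1/\epsilon)}$ does not follow, and tracing this through, neither does the final $O(\log k \cdot \epsilon\log(1/\epsilon))$. The phrase ``optimizing the Gaussian vs.\ subexponential regimes'' cannot rescue this: $q$ is fixed at $2\log k$, there is nothing to optimize, and the sub-exponential term is simply too large.

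The paper avoids this by \emph{not} using the true variance. It applies Hoeffding's inequality to $Z_p = \sum_j (W_{jp})^2 Y_j$, whose variance proxy is $\sum_j (W_{jp})^4 \le \epsilon^2$ --- larger than $\sigma^2 \approx \epsilon^3$ by a factor $1/\epsilon$, but regularity hands this to you for free. Hoeffding then gives a \emph{pure sub-Gaussian} tail $\Pr[|Z_p - \epsilon| \ge \gamma] \le 2\exp(-2\gamma^2/\epsilon^2)$ with no linear/sub-exponential regime at all, hence $\|V_{p,i}\|_q \le \epsilon + C\sqrt{q}\,\epsilon$, which plugs in to give the lemma (and in fact would give the slightly stronger $\hh(W) = O(\log k\cdot\epsilon)$; the paper's extra $\log(1/\epsilon)$ comes only from a coarse truncation in the tail-to-moment step, not from any intrinsic sub-exponential contribution). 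In short: here Bernstein/Rosenthal with the true variance is strictly worse than Hoeffding with the regularity-driven variance proxy, and the sub-exponential correction you introduced is precisely what breaks the argument. Replace the Rosenthal step by Hoeffding applied with $\sum_j (W_{jp})^4 \le \epsilon^2$ and the rest of your proof goes through.
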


\begin{lemma}\label{lm:foolsmooth}
For any smooth function $\psi:\reals^k \rgta \reals$, 
$$\left| \ex_{X \lfta \mu^n}\left[ \psi(W^T X)\right] - \ex_{Y \lfta \NN}\left[\psi(W^T Y)\right] \right| \leq 4\,c_\mu^2\, (\log^2 k) \,\hh(W) \psil.$$
\end{lemma}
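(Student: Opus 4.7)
The plan is to prove \lref{lm:foolsmooth} via a \emph{randomized blockwise} Lindeberg hybrid argument, in which the blocks are the level sets of a hash function $h \in \hh$ and are interchanged one at a time. Fix any $h \in \hh$, set $B_i := h^{-1}(i)$ for $i \in [t]$, and let $X \lfta \mu^n$ and $Y \lfta \NN$ be drawn independently. Define hybrid vectors $Z^{(0)} := X, Z^{(1)}, \ldots, Z^{(t)} := Y$, where $Z^{(i)}$ equals $Y$ on $B_1 \cup \cdots \cup B_i$ and equals $X$ elsewhere. Telescoping yields, for every $h$,
$$\ex[\psi(W^TX)] - \ex[\psi(W^TY)] \;=\; \sum_{i=1}^t \Bigl(\ex[\psi(W^TZ^{(i-1)})] - \ex[\psi(W^TZ^{(i)})]\Bigr),$$
and since the LHS does not depend on $h$, I can bound the absolute value of the RHS for each $h$ and then average over $h$.

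For the $i$-th summand, write $Z^{(i-1)} = U + V$ and $Z^{(i)} = U + V'$, where $U$ agrees with both hybrids off $B_i$ and is zero on $B_i$, while $V, V'$ are supported on $B_i$ and distributed as $X_{B_i}$ and $Y_{B_i}$ respectively, each independent of $U$. Expanding $\psi(W^TU + W^TV)$ in fourth-order Taylor around $W^TU$ and using that $\mu$ is proper (so the first three moments of the coordinates of $V$ match those of $V'$, and by independence the mixed moments up to third order match as well) kills the zeroth through third order contributions in the difference. The leftover fourth-order remainder at any evaluation point $\xi$ is bounded by
$$\tfrac{1}{24}\sum_{p,q,r,s \in [k]} \bigl|\partial_p\partial_q\partial_r\partial_s \psi(\xi)\bigr|\cdot\bigl|(W^TV)_p(W^TV)_q(W^TV)_r(W^TV)_s\bigr| \;\leq\; \tfrac{1}{24}\psil \cdot \bigl(\max_{p\in[k]}|(W^TV)_p|\bigr)^4,$$
crucially using that $\psil$ is a \emph{sum} of fourth partials so only an $\ell_\infty$-style bound on the products is needed; the same bound holds for $V'$.

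The crux is bounding $\ex[(\max_p |L_p|)^4]$, with $L_p := \iprod{W^p_{B_i}}{V}$, without losing a $\poly(k)$ factor. The naive estimate $\max_p \leq (\sum_p |\cdot|^4)^{1/4}$ costs a factor of $k$; instead I use $(\max_p |L_p|)^4 \leq (\sum_p |L_p|^{4\log k})^{1/\log k}$, Jensen (since $x\mapsto x^{1/\log k}$ is concave), and the $(2,q)$-hypercontractivity of $\mu$ at $q = 4\log k$, to obtain
$$\ex\bigl[(\max_p |L_p|)^4\bigr] \;\leq\; \Bigl(\sum_p \ex[L_p^{4\log k}]\Bigr)^{\!1/\log k} \;\leq\; C(\log^2 k) \cdot \Bigl(\sum_{p\in[k]} \|W^p_{B_i}\|^{4\log k}\Bigr)^{\!1/\log k},$$
with the identical estimate for $V'$ by Gaussian hypercontractivity. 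Summing over $i$, taking $\ex_h$ outside the telescoping inequality, and applying Jensen once more in the form $\ex_h[Z^{1/\log k}] \leq (\ex_h[Z])^{1/\log k}$ to pull the hash expectation inside the concave power makes the right-hand side collapse exactly to $\hh(W)$, producing the target bound $4(\log^2 k)\,\hh(W)\,\psil$.

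The main obstacle, and the whole purpose of the $L^{\log k}$-trick combined with hypercontractivity, is preventing the maximum over the $k$ halfspaces from costing a $\poly(k)$ factor; working at exponent $4\log k$ is the balance point that turns $k$ into $\log^2 k$. The $(2\log k)$-wise independence of $\hh$ plays no role in the argument above -- it is only needed downstream in \lref{lm:hhw}, where $\ex_h[\sum_p \|W^p_{h^{-1}(i)}\|^{4\log k}]$ expands as a polynomial of degree at most $2\log k$ in the indicators $\ind{h(j)=i}$ and thus is evaluated exactly under $(2\log k)$-wise independence.
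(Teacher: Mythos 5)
Your proposal is correct and mirrors the paper's proof step for step: the randomized blockwise Lindeberg hybrid over the level sets of $h$, the $\ell_\infty\le\ell_{4\log k}$ bound on the fourth-order Taylor remainder (exploiting that $\psil$ is the $\ell_1$-norm of the fourth partials), hypercontractivity at exponent $4\log k$ combined with Jensen to obtain the per-block bound $C(\log^2 k)\bigl(\sum_p\|W^p_{h^{-1}(i)}\|^{4\log k}\bigr)^{1/\log k}$, and a final average over $h$. The only cosmetic difference is that you spell out the second Jensen step $\ex_h[\,\cdot^{1/\log k}\,]\le(\ex_h[\,\cdot\,])^{1/\log k}$ that the paper leaves implicit in ``taking expectation with respect to $h$.''
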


\begin{proof}[of \lref{lm:hhw}]
Fix a $l \in [t]$, $p \in [k]$. For $i \in [n]$, let $X_i$ be the indicator random variable that is $1$ if $h(i) = l$ and $0$ otherwise. Then, $\pr[X_i = 1] = 1/t$ and the variables $X_1,\ldots,X_n$ are $(2\log k)$-wise independent. Further, 
\[  Z_p' \equiv \|W^p_{|h^{-1}(l)}\|^{2} = \sum_{i=1}^n W_{ip}^2 X_i .\]
Let $Y_i$ be i.i.d indicator random variables with $\pr[Y_i = 1] = 1/t$ and let $Z_p = \sum_{i=1}^n W_{ip}^2 Y_i$. Observe that $Z_p'$ and $Z_p$ have identical $d$'th moments for $d \leq 2\log k$. Moreover, by Hoeffding's inequality applied to $Z_p$, for any $\gamma > 0$,
\begin{equation*}
  \label{eq:sm2}
  \pr\left[\, \left|Z_p - \frac{1}{t}\right| \geq \gamma \,\right] \leq 2 \exp\left(-\frac{2\gamma^2}{\sum_{i=1}^n W_{ip}^4}\right) \leq 2 \exp\left(-\frac{2\gamma^2}{\epsilon^2}\right) = 2 \exp(-2t^2\gamma^2).
\end{equation*}
The above tail bound for $Z_p$ implies strong bounds on the moments of $Z_p$ by standard arguments. Setting $\gamma = \sqrt{2 \log k \,\log t}/t$ in the above equation, we get
\begin{equation*}
  \label{eq:sm6}
  \pr\left[\, |Z_p| \geq \frac{\sqrt{3 \log k\,\log t}}{t}\right ] \leq \frac{1}{t^{2\log k}}.
\end{equation*}
Therefore, from the above equation and the fact that $Z_p \leq 1$
\begin{align*}
  \ex\left[Z_p^{2\log k}\right] &\leq \frac{(3 \log k\,\log t)^{\log k}}{t^{2\log k}} + \pr\left[\, |Z_p| \geq \frac{\sqrt{3 \log k\,\log t}}{t}\right ]\\
&\leq  \frac{(4 \log k\,\log t)^{\log k}}{t^{2\log k}}.
\end{align*}

Therefore, 
\begin{equation*}
  \label{eq:sm4}
  \ex_{h \in_u \hh}\left[\,\|W^p_{|h^{-1}(l)}\|^{4\log k}\,\right] = \ex\left[\,(Z_p')^{2\log k}\right] = \ex\left[\,Z_p^{2\log k}\,\right] \leq \frac{(4\log k \log t)^{\log k}}{t^{2\log k}}.
\end{equation*}
Therefore, from the definition of $\hh(W)$ and the above equation, 
\begin{multline*}
 \hh(W) = \sum_{i=1}^t \left(\,  \sum_{p=1}^k \ex_{h}\left[\,\|W_{h^{-1}(i)}^p\|^{4\log k}\right]\,\right)^{1/\log k} \leq t \frac{4 \log k \log t}{t^2} =\\ 4 (\log k)(\epsilon \log(1/\epsilon)).  
\end{multline*}
\end{proof}
The proof of \lref{lm:foolsmooth} uses a blockwise hybrid argument and careful applications of hypercontractivity as sketched in the proof outline in the introduction. To gain some intuition of the advantage of our randomized blockwise hybrid argument over the standard Lindeberg method, it might be helpful to compare both arguments for the following cases:\\

{\bf Example 1}: The bounding hyperplanes of $\ptp$ are oriented majorities: $W \in \{1/\sqrt{n}, -1/\sqrt{n}\}^{n \times k}$. In this case, the standard Lindeberg method in conjunction with Bentkus's smoothing function and Nazarov's surface area bound as used in Lemmas \ref{lm:smtopoly}, \ref{lm:acrect} can be adapted (without having to do a blockwise hybrid argument) to get a bound as in \tref{th:ipmain}.\\

{\bf Example 2}: The bounding hyperplanes of $\ptp$ are oriented majorities on disjoint sets of variables: For $m = n/k$ and each $p \in [k]$, $m = n/k$, $W^p_i = 1/\sqrt{m}, (p-1)m + 1 \leq i \leq pm$ and $W^p_i = 0$ otherwise. In this case, however, when $m \geq 1/\epsilon^2$ (so each bounding hyperplane is still regular), it is easy to see that the standard Lindeberg method (even when used in conjunction with Lemmas \ref{lm:smtopoly}, \ref{lm:acrect}) leads to an error bound that is at least linear in $k$.\\

We use the following form of the standard Taylor series expansion (the interested reader can find more about the multivariate Taylor theorem on the wikipedia page for ``Taylor's Theorem''). For a smooth function $\psi:\reals^k \rgta \reals$, $x \in \reals^k$ and $p_1,\ldots,p_r \in [k]$, let $\partial_{p_1,\ldots,p_r} \psi(x) = \partial_{p_1}\partial_{p_2}\cdots\partial_{p_r} \,\psi(x)$. For indices $p_1,\ldots,p_r \in [k]$, let $(p_1,\ldots,p_r)! = s_1! s_2 ! \ldots s_k!$, where, for $l \in [k]$,  $s_l$ denotes the number of occurrences of $l$ in $(p_1,\ldots,p_r)$. 
\begin{fact}[Multivariate Taylor's Theorem]\label{lm:taylor}
  For any smooth function $\psi:\reals^k \rgta \reals$, and $x, y \in \reals^k$, 
  \begin{multline*}
 \psi(x+y) = \psi(x) + \sum_{p \in [k]} \partial_p \psi(x)\, y_p + \sum_{p,q \in [k]} \frac{1}{(p,q)!}\, \partial_{p,q} \psi(x)\, y_p y_q 
  + \\\sum_{p,q,r \in [k]} \,\frac{1}{(p,q,r)!}\, \partial_{p,q,r} \psi(x)\, y_p y_q y_r + \mathsf{err}(x,y),
  \end{multline*}
where $|\mathsf{err}(x,y)| \leq \psil \cdot \max_{p \in [k]} |y_p|^4$.
\end{fact}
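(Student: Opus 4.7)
The plan is to reduce the multivariate statement to the standard one-variable Taylor theorem with Lagrange remainder, using the chain rule to identify the derivatives.

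First I would fix $x, y \in \reals^k$ and define the one-variable auxiliary function $g(t) = \psi(x + ty)$ for $t \in \reals$. Because $\psi$ is smooth, so is $g$, and iterating the multivariate chain rule gives
\[
g^{(r)}(t) \;=\; \sum_{p_1,\ldots,p_r \in [k]} \partial_{p_1,\ldots,p_r}\psi(x+ty)\cdot y_{p_1}y_{p_2}\cdots y_{p_r}
\]
for every $r \geq 0$. A clean way to verify this is by induction on $r$, differentiating under the sum and using $\frac{d}{dt}\partial_{p_1,\ldots,p_r}\psi(x+ty) = \sum_{p_{r+1}} \partial_{p_1,\ldots,p_{r+1}}\psi(x+ty)\,y_{p_{r+1}}$.

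Next, I would apply univariate Taylor's theorem with Lagrange remainder to $g$ expanded about $t=0$ and evaluated at $t=1$:
\[
g(1) \;=\; g(0) + g'(0) + \tfrac{1}{2}g''(0) + \tfrac{1}{6}g'''(0) + \tfrac{1}{24}g^{(4)}(\xi)
\]
for some $\xi \in (0,1)$. Substituting the chain-rule formulas for $g^{(r)}(0)$, the left side becomes $\psi(x+y)$ and the first three derivative terms become exactly the first-, second-, and third-order sums in the claimed expansion; one just has to match the coefficient $\tfrac{1}{r!}$ with the equivalent form $\tfrac{1}{(p_1,\ldots,p_r)!}$ by regrouping ordered tuples sharing the same underlying multi-index (the ratio of $r!$ to the multinomial count is precisely $s_1!\cdots s_k! = (p_1,\ldots,p_r)!$). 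This regrouping step is the only piece of bookkeeping in the proof; everything else is mechanical.

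Finally, I would set $\mathsf{err}(x,y) := \tfrac{1}{24} g^{(4)}(\xi)$ and bound it. Using the chain-rule expression for $g^{(4)}$ and the elementary inequality $|y_{p}y_{q}y_{r}y_{s}| \leq \max_{\ell\in[k]}|y_\ell|^4$ valid for every 4-tuple, I can pull this maximum out of the sum:
\[
|g^{(4)}(\xi)| \;\leq\; \max_{\ell \in [k]}|y_\ell|^4 \;\cdot\; \sum_{p,q,r,s \in [k]} \bigl|\partial_{p,q,r,s}\psi(x+\xi y)\bigr| \;\leq\; \max_{\ell \in [k]}|y_\ell|^4 \cdot \psil,
\]
where the second inequality is just the definition of $\psil$ (which takes a supremum over all points of $\reals^k$, so in particular bounds the value at $x+\xi y$). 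Dividing by $24$ gives the stated bound with room to spare. I do not anticipate any real obstacle here: the only care needed is in the combinatorial relabeling between the ordered-tuple sum coming naturally from the chain rule and the multi-index coefficients $1/(p_1,\ldots,p_r)!$ that appear in the statement.
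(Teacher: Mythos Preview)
The paper states this as a \emph{Fact} without proof, treating it as a standard form of multivariate Taylor expansion; there is no proof in the paper to compare against. Your approach---reducing to the one-variable Taylor theorem via $g(t)=\psi(x+ty)$, computing $g^{(r)}$ by the chain rule, and bounding the Lagrange remainder using $|y_{p}y_{q}y_{r}y_{s}|\le \max_\ell |y_\ell|^4$ together with the definition of $\psil$---is exactly the standard argument and is correct. The only point worth a small caution is the bookkeeping you flag: the paper's sums $\sum_{p,q\in[k]}$, $\sum_{p,q,r\in[k]}$ should be read as sums over \emph{unordered} tuples (multisets) so that the coefficient $1/(p_1,\ldots,p_r)!$ matches the usual multi-index form $1/\alpha!$; your regrouping from the ordered chain-rule sum $\tfrac{1}{r!}\sum_{(p_1,\ldots,p_r)\in[k]^r}$ to that form is correct.
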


\begin{proof}[of \lref{lm:foolsmooth}]
Let $\overline{X} \lfta \mu^n$ and $\overline{Y} \lfta \NN$. We first partition $[n]$ into blocks using a random hash function $h \in_u \hh$ and then use a blockwise-hybrid argument. Fix a hash function $h \in \hh$. View $\overline{X}$ as $X^1,\ldots,X^t$, where each $X^l = \overline{X}_{h^{-1}(l)}$ is chosen independently and uniformly from $\mu^{|h^{-1}(l)|}$. Similarly, view $\overline{Y}$ as $Y^1,\ldots,Y^t$ where each $Y^l = \overline{Y}_{h^{-1}(l)}$ is chosen independently and uniformly from $\mathcal{N}^{|h^{-1}(l)|}$. We prove the claim via a hybrid argument where we replace the blocks $X^1,\ldots,X^t$ with $Y^1,\ldots,Y^t$ one at a time.

For $0 \leq i \leq t$, let $Z^i$ be the distribution with $Z^i_{|h^{-1}(j)} = X^j$ for $i < j \leq t$ and $Z^i_{|h^{-1}(j)} = Y^j$ for $1 \leq j \leq i$. Then, $Z^0$ is distributed as $\mu^n$ and $Z^{t}$ is distributed as $\NN$. For $l \in [t]$, let  
\[ h(W,l) = \left( \sum_{p = 1}^k \|W^p_{h^{-1}(l)}\|^{4\log k}\right)^{1/\log k}.\]
\begin{claim}
  For $1 \leq l \leq t$, and fixed $h \in \hh$, $$ \left|\,\ex_{\overline{X},\overline{Y}}\left[\psi(W^TZ^l)\right] - \ex_{\overline{X},\overline{Y}}\left[\psi(W^T Z^{l-1})\right]\,\right| \leq C\,c_\mu\,\log^2 k\, \psil\,h(W,l).$$
\end{claim} 
\begin{proof}
    Without loss of generality, suppose that $h^{-1}(l) = \{1,\ldots,m\}$. Note that $Z^l,Z^{l-1}$ have the same random variables in positions $m+1,\ldots,n$. Let $Z^{l-1} = (X_1,\ldots,X_m,Z_{m+1},\ldots,Z_n)$ and 

\noindent $Z^{l} = (Y_1,\ldots,Y_m,Z_{m+1},\ldots,Z_n)$ where $(X_1,\ldots,X_m)$ is uniform over $\mu^m$ and $(Y_1,\ldots,Y_m) $ is uniform over $\ugs^m$. Note that $(Z_{m+1},\ldots,Z_n)$ is independent of $(X_1,\ldots,X_m)$, $(Y_1,\ldots,Y_m)$. 

 Let $W_1 \in \reals^{m\times k}$ be the matrix formed by the first $m$ rows of $W$ and similarly let $W_2 \in \reals^{(n-m) \times k}$ be the matrix formed by the last $n-m$ rows of $W$. Lastly, let $V= W_2^T (Z_{m+1},\ldots,Z_n)$ and $U $ be one of $X = (X_1,\ldots,X_m)$ or $Y = (Y_1,\ldots,Y_m)$. Now, by using a Taylor expansion of $\psi$ at $V$ as in \newref[Fact]{lm:taylor},

\begin{multline}\label{prg:eq1}
  \psi(W^T (U_1,\ldots,U_m,Z_{m+1},\ldots,Z_n)) = \psi(\,W_1^T U + V\,)\\
= \psi(V) + \sum_{p \in [k]} \partial_p \psi(V) \,\iprod{W^p_1}{U} + \,\sum_{p,q \in [k]}\frac{1}{(p,q)!}\, \partial_{p,q} \psi(V)\, \iprod{W^p_1}{U}\,\iprod{W^q_1}{U}\\
+ \sum_{p,q,r \in [k]}\frac{1}{(p,q,r)!}\, \partial_{p,q,r} \psi(V) \,\iprod{W^p_1}{U} \,\iprod{W^q_1}{U} \,\iprod{W^r_1}{U}  + \mathsf{err}(V,W_1^TU).
\end{multline}

Now, using the fact that $\|z\|_{\infty} \leq \|z\|_{\log k}$ for $z \in \reals^{k}$,
\begin{equation}
  \label{prg:eq2} 
  \left|\mathsf{err}(V,W_1^TU)\right| \leq \psil \cdot \max_{p \in [k]} |\iprod{W^p_1}{U}|^4 \leq \psil \left(\sum_{p=1}^k |\iprod{W^p_1}{U}|^{4\log k}\right)^{1/\log k}.
\end{equation}

Now, by hypercontractivity of $\mu$, 
\begin{align}\label{ips:eq4}
\ex_X\left[  \left(\sum_{p=1}^k |\iprod{W^p_1}{X}|^{4\log k}\right)^{1/\log k}\right] &\leq \left(\ex_X\left[\sum_{p=1}^k |\iprod{W^p_1}{X}|^{4\log k}\right]\right)^{1/\log k}\nonumber\\
& \text{\hspace{0.5in} (by power-mean inequality)}\nonumber\\
&= \left(\sum_{p=1}^k \ex_X\left[|\iprod{W^p_1}{X}|^{4\log k}\right] \right)^{1/\log k}\nonumber\\
&\leq  \left(\sum_{p=1}^k (c_\mu \log k)^{2\log k} \,\|W^p_1\|^{4\log k}\right)^{1/\log k} \nonumber\\
&\text{\hspace{0.2in} (by hypercontractivity of $\mu$)}\nonumber\\
&\leq C c_\mu^2(\log^2 k) \, h(W,l).
\end{align}

Similarly, by hypercontractivity of $\mathcal{N}$, 
\begin{equation}
  \label{ips:eq5}
  \ex_Y\left[  \left(\sum_{p=1}^k |\iprod{W^p_1}{Y}|^{4\log k}\right)^{1/\log k}\right] \leq C (\log^2 k) \, h(W,l).
\end{equation}

Since $\mu$ is proper, for any $u^1,u^2,u^3 \in \reals^m$,
\[ \ex\left[\iprod{u^1}{X}\right] = \ex\left[\iprod{u^1}{Y}\right],\;\;\; \ex\left[\iprod{u^1}{X}\,\iprod{u^2}{X}\right] = \ex\left[\iprod{u^1}{Y}\,\iprod{u^2}{Y}\right]\]
\[\ex\left[\iprod{u^1}{X}\,\iprod{u^2}{X}\,\iprod{u^3}{X}\right] = \ex\left[\iprod{u^1}{Y}\,\iprod{u^2}{Y}\,\iprod{u^3}{Y}\right].\]

\ignore{
\begin{align*}
\ex[\iprod{u^1}{X}] &= \ex\left[\iprod{u^1}{Y}\right] \\
\ex[\iprod{u^1}{X}\,\iprod{u^2}{X}] &= \ex\left[\iprod{u^1}{Y}\,\iprod{u^2}{Y}\right]\\
\ex[\iprod{u^1}{X}\,\iprod{u^2}{X}\,\iprod{u^3}{X}] &= \ex\left[\iprod{u^1}{Y}\,\iprod{u^2}{Y}\,\iprod{u^3}{Y}\right].
\end{align*}}
From the above equations, Equations \eqref{prg:eq1}, \eqref{prg:eq2}, \eqref{ips:eq4}, \eqref{ips:eq5} and the fact that $X,Y,V$ are independent of one another, it follows that 
\[\left|\ex\left[\psi(W^T Z^l) - \psi(W^TZ^{l-1})\right] \right| \leq C c_\mu^2(\log^2 k)\, \psil h(W,l).\]
\end{proof}
\lref{lm:foolsmooth} now follows from the above claim, summing from $l=1,\ldots,t$, and taking expectation with respect to $h \in_u \hh$.
\end{proof}

\section{Lowerbound for Error}\label{sec:lbound}
We now show that \tref{th:ipmain} is essentially optimal by showing that the error bound in any such result cannot be $o(\epsilon \cdot \sqrt{\log k})$. We do so by constructing a $(1/\sqrt{\log k})$-regular $k$-polytope for which the error is $1 - o(1)$.

Let $k = 2^r$ and let $\ptp = \{x: \sum_{i=1}^r |x_i| < r\} \subseteq \reals^r$ be the $\ell_1$-ball of dimension $r$ and radius $r$. We will use the following simple facts.

\begin{fact}
  $\ptp$ is a $(1/\sqrt{r})$-regular $k$-polytope.
\end{fact}
\begin{proof}
  Note that $\ptp = \{x \in \reals^r: \iprod{x}{y} < r, \forall y \in \dpm^r\}$. As $y \in \dpm^r$ are $(1/\sqrt{r})$-regular, the claim follows.
\end{proof}
\begin{fact}
  For $X \in_u \dpm^r$, $Y \lfta \mathcal{N}^r$, $|\pr\left[X \in \ptp\right] - \pr\left[Y \in \ptp\right]| = 1 - o(1)$.
\end{fact}
\begin{proof}
  Clearly, $\pr\left[X \in \ptp\right] = 0$. We next show that $\pr\left[Y \in \ptp\right] = 1 - \exp(\Omega(r))$. Observe that $Y \in \ptp$ if and only if $\|Y\|_1 < r$. By linearity of expectation, 
$$\ex\left[\|Y\|_1\right] = r \cdot \ex_{y \lfta \mathcal{N}(0,1)}\left[ |y|\right] = r c,$$ 
where $c=\sqrt{2/\pi}$ is a constant strictly less than $1$. Note that by Cauchy-Schwarz, the $\ell_1$ norm has Lipschitz constant $\sqrt{r}$. Therefore, by \tref{th:gaussianld},
\[ \pr\left[ \|Y\|_1 \geq r\right] \leq \pr\left[\, |\|Y\|_1 - c r| \geq (1-c)r\,\right] \leq 2 \exp(-(1-c)^2 r^2/r) = \exp(-\Omega(r)).\]
Thus, $\pr\left[Y \in \ptp\right] = \pr\left[\|Y\|_1 < r\right] = 1 - \exp(-\Omega(r))$. The claim now follows.
\end{proof}

The above two claims show that any invariance principle as in \tref{th:ipmain} must incur an error of $\Omega(\epsilon \sqrt{\log k})$, which matches the bound of \tref{th:ipmain} up to a polylogarithmic factor in $k$ and a polynomial factor in $\epsilon$.
\section{Noise Sensitivity of Intersections of Regular Halfspaces}
We now describe how our invariance principle yields a bound on the
average and noise sensitivity of intersections of regular
halfspaces. (see \dref{def:noisesen} for definition of (Boolean) noise
sensitivity). 

Let $f^1,\ldots,f^k:\dpm^n \rgta \dpm$ be halfspaces with $f^p(x) = \sign(\iprod{W^p}{x} - \theta_p)$ and let $\fak:\dpm^n \rgta \dpm$ be their intersection, $\fak = f^1 \wedge f^2 \wedge \ldots \wedge f^k$. 

\begin{theorem} \label{thm:ns}
For $\fak$ $\epsilon$-regular, $\NS_\delta(\fak) \leq C (\log^{1.6} (k/\delta))\,(\epsilon^{1/6} + \delta^{1/2})$.
\end{theorem}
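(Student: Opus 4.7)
The plan is to split the noise-sensitivity event into (i) a ``shell'' event that $X$ lies near the boundary of $\ptp$, controlled by Theorem~\ref{th:ipmain} together with Lemma~\ref{lm:acrect}, and (ii) a ``noise-tail'' event controlled by Bernstein's inequality coordinatewise. This decomposition is what sidesteps the naive $\Omega(k\sqrt{\delta})$ union bound across the $k$ halfspaces of $\fak$.

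By symmetry of the noise process, $\NS_\delta(\fak) = 2\,\Pr[X \in \ptp \wedge Z \notin \ptp]$. Introduce a buffer $\lambda \in (0,1)$ and the $\lambda$-shrunk polytope $\ptp^- := \ptp(W, \theta - \lambda\ok) \subseteq \ptp$. The key observation is that if $X \in \ptp^-$ and $|\iprod{W^p}{X-Z}| \leq \lambda$ for every $p \in [k]$, then $\iprod{W^p}{Z} \leq (\theta_p - \lambda) + \lambda = \theta_p$ for every $p$, so $Z \in \ptp$. Therefore
\[
\NS_\delta(\fak) \;\leq\; 2\,\Pr\!\bigl[X \in \ptp \setminus \ptp^-\bigr] \;+\; 2\sum_{p=1}^{k}\Pr\!\bigl[|\iprod{W^p}{X-Z}| > \lambda\bigr].
\]

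The two terms are then handled separately. For the shell term, $\ptp$ and $\ptp^-$ share the weight matrix $W$ and are both $\epsilon$-regular $k$-polytopes, so applying Theorem~\ref{th:ipmain} to each and subtracting yields $\Pr[X \in \ptp \setminus \ptp^-] \leq \Pr[G \in \ptp \setminus \ptp^-] + 2\tau$, where $G \lfta \NN$ and $\tau = C(\log^{8/5} k)(\epsilon\log(1/\epsilon))^{1/5}$. The Gaussian shell is exactly $\{W^T G \in \rect(\theta)\setminus\rect(\theta-\lambda\ok)\}$, which Lemma~\ref{lm:acrect} bounds by $C\lambda\sqrt{\log k}$. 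For the noise-tail term, each $(X-Z)_i$ is independent across $i$, with mean zero, variance $4\delta$, and $|(X-Z)_i| \leq 2$; combined with $|W_{ip}| \leq \sqrt{\epsilon}$ (from $W_{ip}^4 \leq \sum_i W_{ip}^4 \leq \epsilon^2$), Bernstein's inequality gives $\Pr[|\iprod{W^p}{X-Z}| > \lambda] \leq 2\exp(-c\lambda^2/\delta)$ in the regime $\lambda = O(\delta/\sqrt{\epsilon})$. Choosing $\lambda = C\sqrt{\delta\log(k/\delta)}$ and union-bounding over $p$ controls the noise term by $\delta$. Plugging this $\lambda$ into the shell bound and simplifying the polylogarithmic factors (absorbing $\sqrt{\log(k/\delta)\log k}$, $\log^{8/5}k$, and $\log^{1/5}(1/\epsilon)$ into $\log^{8/5}(k/\delta)$, and relaxing $\epsilon^{1/5}$ to $\epsilon^{1/6}$, which holds for $\epsilon$ small compared to $\log^{-6}(k/\delta)$) yields the stated $C\log^{1.6}(k/\delta)(\epsilon^{1/6}+\sqrt{\delta})$ bound.

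The main obstacle is precisely to avoid a naive union-bound blow-up by a factor of $k$: Peres' single-halfspace bound combined with a $k$-fold union yields only $k\sqrt{\delta}$. The decomposition above circumvents this by treating the shell as a \emph{single} anti-concentration statement about the whole polytope $\ptp$, paying only $\mathrm{polylog}(k)$ via Theorem~\ref{th:ipmain} and Nazarov's $\sqrt{\log k}$ surface-area bound (Lemma~\ref{lm:acrect}); the union bound is then only incurred on the noise-tail side, where Bernstein's exponential decay is fast enough that the resulting $\log k$ factor is absorbable into the choice of buffer $\lambda$.
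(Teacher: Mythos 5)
Your decomposition into a shell event (controlled by \tref{th:ipmain} and \lref{lm:acrect}) and a noise-tail event (controlled by coordinate-wise concentration plus a union bound over the $k$ faces) is the same as the paper's. The paper's \clref{clm:ns6} is your shell estimate; its \clref{clm:ns5} is your noise-tail estimate, obtained there by a two-stage argument (Hoeffding to show $\|W^p_S\|^2$ concentrates around $\delta$ for the random flip set $S$, then Pinelis's subgaussian tail bound conditionally on $S$) rather than by Bernstein applied directly to $\sum_i W_{ip}(X_i-Z_i)$ as you do. Your Bernstein calculation is a perfectly good substitute, but it is where the gap lies.

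Bernstein for $\sum_i W_{ip}(X_i-Z_i)$ with total variance $4\delta$ and increment bound $2\sqrt{\epsilon}$ gives a tail of order $\exp\bigl(-c\min\{\lambda^2/\delta,\;\lambda/\sqrt{\epsilon}\}\bigr)$. The subgaussian branch $\exp(-c\lambda^2/\delta)$ that you invoke only controls the tail when $\lambda \lesssim \delta/\sqrt{\epsilon}$, i.e.\ when $\epsilon\log(k/\delta) \lesssim \delta$, and this is \emph{not} implied by the restriction $\epsilon \lesssim \log^{-6}(k/\delta)$ you later impose (e.g.\ take $\delta$ much smaller than $\epsilon\log(k/\delta)$ while keeping $\epsilon$ polylogarithmically small). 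In the complementary regime the sub-exponential branch dominates and $\lambda = C\sqrt{\delta\log(k/\delta)}$ is too small to push each face's tail below $\delta/k$. The repair is harmless: take $\lambda = C\bigl(\sqrt{\delta\log(k/\delta)} + \sqrt{\epsilon}\,\log(k/\delta)\bigr)$, which kills both Bernstein branches. The extra contribution to the shell term is then $\sqrt{\log k}\cdot\sqrt{\epsilon}\log(k/\delta) \leq \log^{1.5}(k/\delta)\sqrt{\epsilon} \leq \log^{1.6}(k/\delta)\,\epsilon^{1/6}$, so the claimed bound survives. This added $\sqrt{\epsilon}$ piece is exactly the $\log^{3/4}(k/\delta)\,\epsilon^{1/2}$ term the paper carries in the $\lambda$ of \clref{clm:ns5}; omitting it is the one real error in your write-up.
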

We prove the theorem by first reducing bounding noise sensitivity of $\fak$ to bounding the Boolean volume of $l_\infty$-neighborhoods of polytopes. We then use our invariance principle, \tref{th:ipmain}, to prove the required bounds on the Boolean volume of boundaries of polytopes. 

As mentioned before, the above theorem implies a $n^{\log^{O(1)}k}$ algorithm for learning intersections of regular halfspaces in the agnostic model for any constant error rate.


We use the following tail bound that follows from Pinelis's subgaussian tail estimates~\cite{Pinelis1994}.

\begin{fact}\label{lm:stail}
There exist absolute constants $c_1,c_2 > 0$ such that all $w \in \reals^m$, $t > 0$, $$\pr_{x \in_u \dpm^m} \left[\,|\iprod{w}{x}| > t \|w\|\,\right] \leq c_1 \exp(-c_2 t^2).$$
\end{fact}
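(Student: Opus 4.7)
The statement is the classical subgaussian tail bound for Rademacher sums (a weak form of Hoeffding's inequality); Pinelis's result is a much sharper bound, but what we need here is just the qualitative subgaussian decay. The plan is to give a direct Chernoff-style argument.

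First I would normalize: by homogeneity, it suffices to prove the bound for unit vectors $w \in \reals^m$ with $\|w\| = 1$, since the event $|\iprod{w}{x}| > t\|w\|$ depends only on the direction of $w$. Next I would reduce the two-sided bound $\pr[|\iprod{w}{x}| > t]$ to a one-sided bound via a union bound (absorbing the factor $2$ into $c_1$), and observe that by symmetry of the Rademacher distribution $\pr[\iprod{w}{x} > t] = \pr[\iprod{-w}{x} > t]$, so one one-sided estimate suffices.

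The core step is a standard moment-generating-function computation. For any $s > 0$, by Markov applied to $\exp(s \iprod{w}{x})$,
\begin{equation*}
\pr_{x \in_u \dpm^m}[\iprod{w}{x} > t] \leq e^{-st}\, \ex_x[\exp(s \iprod{w}{x})] = e^{-st}\, \prod_{i=1}^m \cosh(s w_i),
\end{equation*}
using independence of the coordinates $x_i$ and the fact that $\ex[\exp(s w_i x_i)] = \cosh(s w_i)$. I would then invoke the elementary inequality $\cosh(y) \leq \exp(y^2/2)$ (provable by comparing Taylor coefficients: $(2k)! \geq 2^k k!$), giving
\begin{equation*}
\prod_{i=1}^m \cosh(s w_i) \leq \exp\!\Bigl(\tfrac{s^2}{2}\sum_{i=1}^m w_i^2\Bigr) = \exp(s^2/2).
\end{equation*}
Optimizing in $s$ by taking $s = t$ yields $\pr[\iprod{w}{x} > t] \leq \exp(-t^2/2)$. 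Combining with the symmetric case gives the claim with $c_1 = 2$ and $c_2 = 1/2$.

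There is essentially no obstacle — the only mildly delicate point is the inequality $\cosh(y) \leq e^{y^2/2}$, which is routine. The Pinelis reference in the paper suggests the authors want the sharper constants or the refined inequality $\pr[|\iprod{w}{x}| > t\|w\|] \leq 2(1-\Phi(t))$ that matches the Gaussian tail exactly, but the statement as written only requires the qualitative subgaussian form, which the Chernoff argument above already delivers.
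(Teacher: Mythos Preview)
Your argument is correct: the Chernoff/MGF bound using $\cosh(y)\leq e^{y^2/2}$ is the textbook route and yields the statement with $c_1=2$, $c_2=1/2$.

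As for comparison with the paper: there is nothing to compare against, since the paper does not prove this fact at all --- it is stated as a Fact and attributed to Pinelis's subgaussian tail estimates without further argument. Your direct Hoeffding-style proof is strictly more informative than what the paper provides. You are also right that the Pinelis citation points to a sharper result (essentially the Gaussian-tail comparison $\pr[|\iprod{w}{x}|>t\|w\|]\leq 2(1-\Phi(t))$), but the fact as stated only asks for the qualitative subgaussian bound, which your argument delivers.
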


The following claim says that for $W$ $\epsilon$-regular, random $x \in_u \dpm^n$, and a $\delta$-perturbation $y$ of $x$, $W^Tx$ is close to $W^Ty$ in $l_\infty$ distance.

\begin{claim}\label{clm:ns5}
For $x \in \dpm^n$, let $y(x)$ be a random $\delta$-perturbation of $y(x)$ of $x$. Then, 
\[ \pr_{x \in_u \dpm^n,y(x)}\left[ \,\|W^Tx - W^T y(x)\|_\infty \geq \lambda \,\right] \leq 2 \delta,\]
where $\lambda = C \log(k/\delta)^{1/2} \delta^{1/2} + C \log (k/\delta)^{3/4} \epsilon^{1/2}$.
\end{claim}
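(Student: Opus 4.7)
The plan is to couple the noise as $y(x)_i = \eta_i x_i$ with $\eta_i \in \{1,-1\}$ equal to $-1$ with probability $\delta$ independently. Setting $b_i = \mathbf{1}[\eta_i = -1] \sim \mathrm{Bern}(\delta)$ and $S = \{i : b_i = 1\}$, we have
\[
\iprod{W^p}{x - y(x)} \;=\; 2\sum_{i\in S} W^p_i x_i \qquad\text{for every } p\in[k].
\]
Thus $\|W^T x - W^T y(x)\|_\infty = 2\max_{p\in[k]} |\sum_{i\in S} W^p_i x_i|$, and the goal reduces to controlling a Rademacher sum whose index set $S$ is random.

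First I would fix (condition on) the set $S$ and, for each $p$, apply the subgaussian tail \newref[Fact]{lm:stail} to the Rademacher sum $\sum_{i\in S} W^p_i x_i$, which has $\ell_2$--norm $\sqrt{V_p(S)}$ with $V_p(S) := \sum_{i\in S}(W^p_i)^2$. This yields
\[
\Pr_{x}\!\left[\,\Big|\sum_{i \in S} W^p_i x_i\Big| \,>\, s \,\Big|\, S\right] \;\leq\; c_1 \exp\!\left(-c_2 s^2 / V_p(S)\right).
\]
A union bound over $p\in[k]$, coupled with $\max_p V_p(S)$ being small, will give the desired control on $\|W^T(x-y(x))\|_\infty$ via $s = O\bigl(\sqrt{\max_p V_p(S)\,\log(k/\delta)}\bigr)$.

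Next I would bound $\max_p V_p(S)$ with high probability over $S$. For each $p$, $V_p(S) = \sum_i (W^p_i)^2 b_i$ is a sum of independent bounded nonnegative random variables with mean $\delta \|W^p\|^2 = \delta$. The $\epsilon$-regularity of $W^p$ gives $\max_i (W^p_i)^2 \leq \epsilon$ and $\sum_i (W^p_i)^4 \leq \epsilon^2$, so the variance of $V_p(S)$ is at most $\delta\epsilon^2$. A Bernstein-type moment bound (equivalently, the $L_q$-norm bound $\|V_p(S)-\delta\|_q \leq C(\sqrt{q\delta\epsilon^2} + q\epsilon)$) combined with Markov's inequality at $q = \log(k/\delta)$ yields
\[
V_p(S) \;\leq\; \delta \;+\; O\bigl(\epsilon \sqrt{\delta\log(k/\delta)}\bigr) \;+\; O\bigl(\epsilon\log(k/\delta)\bigr)
\]
with probability at least $1 - \delta/k$; a union bound over $p$ absorbs a factor of $k$ in the failure probability.

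Finally, on the good event, $\sqrt{\max_p V_p(S)} \leq \sqrt{\delta} + O\bigl(\sqrt{\epsilon}\,\log(k/\delta)^{1/4}\bigr)$ after using $\sqrt{a+b}\le\sqrt{a}+\sqrt{b}$ and noting $\delta\le 1$. Plugging into the subgaussian bound with $s = C\sqrt{\log(k/\delta)}\cdot\sqrt{\max_p V_p(S)}$ and taking a union bound over $p\in[k]$ gives $\|W^T(x-y(x))\|_\infty \leq \lambda$ with the stated $\lambda = C\log(k/\delta)^{1/2}\delta^{1/2} + C\log(k/\delta)^{3/4}\epsilon^{1/2}$, except on an event of probability at most $\delta$ (from the $V_p(S)$ concentration) plus $\delta$ (from the Rademacher subgaussian union bound), which totals $2\delta$ as claimed. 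The main obstacle is the careful bookkeeping of the two competing regimes in Bernstein (the $\sqrt{q\sigma^2}$ versus $qM$ terms): one must keep track of which term dominates after the square root is taken and multiplied by $\sqrt{\log(k/\delta)}$ from the subgaussian step, in order to land on the $\log^{3/4}$ exponent rather than a looser $\log^{1}$ bound.
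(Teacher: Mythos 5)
Your overall decomposition is exactly the paper's: condition on the flipped set $S$, control $V_p(S)=\|W^p_{|S}\|^2$ by concentration, then apply the subgaussian tail \newref[Fact]{lm:stail} and union bound over $p$. The gap is in the concentration step, and it is precisely the place where you yourself flag uncertainty at the end. From the Bernstein/Rosenthal moment bound $\|V_p(S)-\delta\|_q \le C(\sqrt{q\delta}\,\epsilon + q\epsilon)$ with $q=\Theta(\log(k/\delta))$ you obtain
\[
V_p(S)\;\le\;\delta\;+\;O\!\left(\epsilon\sqrt{\delta\log(k/\delta)}\right)\;+\;O\!\left(\epsilon\log(k/\delta)\right),
\]
and taking square roots gives $\sqrt{V_p(S)}\le \sqrt{\delta}+O\!\left(\sqrt{\epsilon}\,\log(k/\delta)^{1/2}\right)$, \emph{not} $\sqrt{\delta}+O\!\left(\sqrt{\epsilon}\,\log(k/\delta)^{1/4}\right)$ as you assert; the linear term $q\epsilon$ in Bernstein (coming from the range $\max_i W_{ip}^2\le\epsilon$) dominates for small $\delta$, and $\delta\le 1$ does nothing to kill it. After multiplying by the subsequent $\sqrt{\log(k/\delta)}$ you land on $\lambda \approx \log(k/\delta)^{1/2}\delta^{1/2}+\log(k/\delta)\,\epsilon^{1/2}$, i.e.\ a $\log^{1}$ exponent on the $\epsilon$-term instead of the claimed $\log^{3/4}$.

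The fix is to use Hoeffding's inequality rather than Bernstein, which is what the paper does. Hoeffding applied to $V_p(S)=\sum_i W_{ip}^2 b_i$ uses the sum of squared ranges $\sum_i W_{ip}^4\le\epsilon^2$ (this is exactly where $\epsilon$-regularity enters) rather than the single max range $\epsilon$, and gives
\[
\pr\!\left[\,|V_p(S)-\delta|\ge\gamma\,\right]\;\le\;2\exp\!\left(-\tfrac{2\gamma^2}{\epsilon^2}\right),
\]
so $\gamma = O(\epsilon\sqrt{\log(k/\delta)})$ already suffices for a $\delta/k$ failure probability. This yields $V_p(S)\le\delta+O(\epsilon\sqrt{\log(k/\delta)})$, hence $\sqrt{V_p(S)}\le\sqrt{\delta}+O(\sqrt{\epsilon}\,\log(k/\delta)^{1/4})$, and the rest of your argument (subgaussian tail at $s=C\sqrt{\log(k/\delta)}\cdot\sqrt{V_p(S)}$, union bounds over $p$, total failure probability $2\delta$) goes through and produces the stated $\lambda$. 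In short: Bernstein is the wrong tool here because its large-deviation term does not see the $\ell_4$-regularity; Hoeffding does, and that is the whole point of the $\epsilon$-regularity hypothesis in this claim.
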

\begin{proof}
  Let $Y = (Y_1,\ldots,Y_n)$ be i.i.d indicator variables with $\pr\left[Y_i = 1\right] = \delta$. Let $S(Y) = support(Y)$. Now, for $p \in [k]$, $\|W^p_{S(Y)}\|^2 = \sum_{i = 1}^n W_{ip}^2 Y_i$ and $\ex\left[\|W^p
_{S(Y)}\|^2\right] = \delta$. Further, since $W$ is $\epsilon$-regular, by Hoeffding's inequality, for all $t > 0$,
\[ \pr\left[\,|\|W^p_{S(Y)}\|^2 -\delta|\,\geq\, \gamma \right] \leq 2\, \exp\left(\frac{-2\gamma^2}{\sum_i W_{ip}^4}\right) \leq 2 \,\exp\left(\frac{-2\gamma^2}{\epsilon^2}\right).\]
Thus, by a union bound
\begin{equation}
  \label{eq:ns1}
   \pr_Y\left[\,\exists p \in [k],\,\|W^p_{S(Y)}\|^2 \geq \delta + 2 \sqrt{\log(k/\delta)}\, \epsilon\,\right] \leq \delta.
\end{equation}

Note that for a fixed $Y$ and sufficiently large $C$, by \newref[Fact]{lm:stail} and a union bound,
\[ \pr_{x \in_u \dpm^n}\left[\,\exists p \in [k],\,|\iprod{W^p_{S(Y)}}{x_{S(Y)}}| \geq C\sqrt{\log (k/\delta)}\, \|W^p_{S(Y)}\|\,\right] \leq \delta.\]
From \eref{eq:ns1} and the above equation, we get that for a sufficiently large constant $C$
\begin{equation}
  \label{eq:ns4}
 \pr_{x \in_u \dpm^n,Y}\left[\,\exists p \in [k],\,|\iprod{W^p_{S(Y)}}{x_{S(Y)}}| \geq C \log(k/\delta)^{1/2} \delta^{1/2} + C \log (k/\delta)^{3/4} \epsilon^{1/2} \,\right] \leq 2 \delta.  
\end{equation}
Now, observe that that for $x \in \dpm^n$, to generate a $\delta$-perturbation of $x$, $y(x)$, we can first generate a random $Y$ as above and flip the bits of $x$ in the support of $Y$. Thus, from \eref{eq:ns4},
\begin{multline*}\label{}
 \pr_{x \in_u \dpm^n,Y}\left[\,\exists p \in [k]\,|\iprod{W^p}{x} - \iprod{W^p}{y(x)}| \geq \lambda \,\right] =\\  \pr_{x \in_u \dpm^n,Y}\left[\,\exists p \in [k]\,|\,\iprod{W^p_{S(Y)}}{x_{S(Y)}}| \geq \lambda\,\right] \leq 2 \delta,
\end{multline*}
where $\lambda = C \log(k/\delta)^{1/2} \delta^{1/2} + C \log (k/\delta)^{3/4} \epsilon^{1/2}$.
Therefore, 
\begin{equation*}
  \label{eq:ns5}
  \pr_{x \in_u \dpm^n,Y}\left[ \,\|W^Tx - W^T y(x)\|_\infty \geq \lambda \,\right] \leq 2 \delta.
\end{equation*}
\end{proof}

The following claim can be seen as an anti-concentration bound for regular polytopes over the hypercube and may be of independent interest:

\begin{claim}\label{clm:ns6}
For $\epsilon$-regular $W \in \rnk$, $\theta \in \reals^k$, and $0 < \lambda < 1$,
\begin{multline*} \pr_{x \in_u \dpm^n}\left[\,W^T x \in \rect(\theta + \lambda \, \ok) \setminus \rect(\theta - \lambda \,\ok)\,\right] \leq \\ C (\log^{1.6} k)\,(\epsilon\,\log(1/\epsilon))^{1/5} +  \sqrt{\log k}\, \lambda.
\end{multline*}
\end{claim}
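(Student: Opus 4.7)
The key observation is that the event $\{W^T x \in \rect(\theta+\lambda\ok) \setminus \rect(\theta-\lambda\ok)\}$ can be written as the difference of two polytope events, namely
\[
\pr_{x \in_u \dpm^n}\!\bigl[W^T x \in \rect(\theta+\lambda\ok) \setminus \rect(\theta-\lambda\ok)\bigr]
= \pr[x \in \ptp(W,\theta+\lambda\ok)] - \pr[x \in \ptp(W,\theta-\lambda\ok)],
\]
and both polytopes on the right are $\epsilon$-regular $k$-polytopes. So the plan is to transfer each term from the hypercube to Gaussian space using our main invariance principle (\tref{th:ipmain}), and then bound the Gaussian measure of the resulting slab using Nazarov's estimate (\lref{lm:acrect}).

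Executing this: applying \tref{th:ipmain} separately to $\ptp(W,\theta+\lambda\ok)$ and $\ptp(W,\theta-\lambda\ok)$ (the uniform distribution on $\dpm$ is proper and hypercontractive), each incurs error at most $C(\log^{8/5} k)(\epsilon\,\log(1/\epsilon))^{1/5}$. Subtracting, we obtain
\[
\pr_{x \in_u \dpm^n}\!\bigl[W^T x \in \rect(\theta+\lambda\ok) \setminus \rect(\theta-\lambda\ok)\bigr]
\leq \pr_{y \lfta \NN}\!\bigl[W^T y \in \rect(\theta+\lambda\ok) \setminus \rect(\theta-\lambda\ok)\bigr] + 2C(\log^{8/5} k)(\epsilon\,\log(1/\epsilon))^{1/5}.
\]
For the Gaussian term I would split the symmetric slab as the disjoint union
\[
\rect(\theta+\lambda\ok) \setminus \rect(\theta-\lambda\ok)
= \bigl(\rect(\theta+\lambda\ok) \setminus \rect(\theta)\bigr) \cup \bigl(\rect(\theta) \setminus \rect(\theta-\lambda\ok)\bigr),
\]
and apply \lref{lm:acrect} (with $\theta$ shifted appropriately for the first piece) to each half, each contributing $O(\sqrt{\log k}\,\lambda)$, for a total of $O(\sqrt{\log k}\,\lambda)$.

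Combining the two bounds gives the desired inequality. There is no real obstacle here: the two ingredients have already been established, and matching $8/5 = 1.6$ in the exponent of $\log k$ from \tref{th:ipmain} is exactly what the claim requires. The only minor point to verify is that \lref{lm:acrect} as stated (for the $\rect(\theta) \setminus \rect(\theta-\lambda\ok)$ version) is equivalent, by a translation of $\theta$, to a bound on $\rect(\theta+\lambda\ok) \setminus \rect(\theta)$, which is immediate since the bound is uniform in $\theta$.
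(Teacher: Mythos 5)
Your proposal is correct and matches the paper's approach, which simply asserts the claim "follows directly from \tref{th:ipmain} and \lref{lm:acrect}"; you have filled in exactly the intended details (write the slab probability as a difference of two $\epsilon$-regular polytope probabilities, transfer each to Gaussian space via \tref{th:ipmain}, then bound the Gaussian slab via \lref{lm:acrect}).
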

\begin{proof}
Follows directly from \tref{th:ipmain} and \lref{lm:acrect}.
\end{proof}

We can now prove \tref{thm:ns}.
\begin{proof}[of \tref{thm:ns}]
Note that for $x,  y \in \reals^n$, $\fak(x) \neq \fak(y)$ implies that $W^T x \in \rect(\theta + \gamma \ok) \setminus \rect(\theta - \gamma\ok)$, where $\gamma = \|W^Tx - W^T y\|_\infty$. Hence,
\begin{align*}
  \NS_\delta(\fak) &= \pr_{x \in_u \dpm^n,Y}\left[\,\fak(x) \neq \fak(y(x))\,\right]\\
  &\leq \pr_{x \in_u \dpm^n,Y}\left[\,\fak(x) \neq \fak(y(x))\,|\,\|W^Tx - W^T y(x)\|_\infty \leq \lambda\,\right] + 2 \delta \\
&\text{\hspace{.5in} (\clref{clm:ns5})}\\
  &\leq \pr_{x \in_u \dpm^n}\left[\,W^T x \in \rect(\theta + \lambda \, \ok) \setminus \rect(\theta - \lambda \,\ok)\,\right] + 2\delta\\
  &\leq C (\log^{1.6} k)\,(\epsilon\,\log(1/\epsilon))^{1/5} +  \sqrt{\log k}\, \lambda + 2 \delta.\\
& \text{\hspace{.5in} (\clref{clm:ns6})}
\end{align*}
The theorem now follows.
\end{proof}

Applying \lref{lem:lowdegree} and \tref{thm:KKMS} with \tref{thm:ns}, we immediately obtain our main result for learning intersections of halfspaces, namely \tref{thm:mainlearnstate}.




\section{Pseudorandom Generators for Polytopes} 
We now prove our main theorems for constructing pseudorandom generators for polytopes with respect to a variety of distributions (Theorems \ref{th:prgmain}, \ref{th:prgnormal},  and \ref{th:prgspherical}).




The results in this section are based on a recent PRG construction due to Meka and Zuckerman\cite{MekaZ2010} for polynomial threshold functions using the invariance principle of Mossel et al.~\cite{MosselOO2005}.  A closer look at their construction reveals a general program for constructing PRGs from invariance principles.  Given this observation, it is natural to ask if our invariance principle can be used to construct PRGs for regular polytopes.  Indeed it can, and we use the Meka and Zuckerman generator but with a different setting of its parameters.  The analysis, however, is a little more complicated in our setting (even given our invariance principle) and requires a careful application of hypercontractivity.



\subsection{Main Generator Construction}
We begin by describing the construction of the PRG we use; it is a
slightly modified version of the PRG used by \cite{MekaZ2010} to fool regular {\em halfspaces} (i.e., the case $k=1$).

Give $\delta \in (0,1)$, let $\epsilon = \Omega(\delta^6/\log^{9.6} k)$ be such that $\log^{1.6} k (\epsilon \log(1/\epsilon))^{1/5} = \delta$.
Let $t = 1/\epsilon$ and let $\hh = \{h: h:[n] \rgta [t]\}$ be a $(2\log k)$-wise independent family of hash functions. That is, for all $I \subseteq [n], |I| \leq 2\log k$ and $b \in [t]^{I}$, \[\pr_{h \in_u \hh}\left[\, \forall i \in I,\; h(i) = b_i\,\right] = \frac{1}{t^{|I|}}.\]
Efficient constructions of hash families $\hh$ as above with $|\hh| = O(n^{2\log k})$ are known. To avoid some technical issues that can be overcome easily, we assume that every hash function $h \in \hh$ is equi-distributed in the following sense: for all $j \in [t]$, $|\{i:h(i) = j\}| = n/t$. 

Let $m = n/t$ and let $G_0:\zo^s \rgta \dpm^m$ generate a $(4\log k)$-wise independent distribution over $\dpm^m$. That is, for all $I \subseteq [n], |I| \leq 2\log k$ and $b \in \dpm^{I}$, \[\pr_{x = G_0(z), z \in_u \zo^s}\left[\,\forall i \in I,\; x_i = b_i\,\right] = \frac{1}{2^{|I|}}.\]  Efficient constructions of generators $G_0$ as above with $s = O(\log k\,\log n)$ are known \cite{NaorN1993}.

Given a hash family and generator $G_0$ as above, we consider the following generator. Define $G:\hh \times (\zo^s)^t \rgta \dpm^n$ by $$\greg(h,z^1,\ldots,z^t) = x, \text{ where $x_{|h^{-1}(i)} = G_0(z^i)$ for $i \in [t]$.}$$

\subsection{Pseudorandom Generators for Regular Polytopes} \label{sec:prgreg}
We now argue that the generator $G$ defined in the last section fools regular polytopes and prove \tref{th:prgmain}. 
\begin{proof}[of \tref{th:prgmain}]
The bound on the seed length of the generator $G$ follows from the construction. The following statement follows from an argument similar to that of the proof of \tref{th:ipsmooth}: for any smooth function $\psi:\reals^k \rgta \reals$ and $\epsilon$-regular $W$,
\begin{equation}
  \label{eq:prgrp1}
 \left|\,\ex_{y \in_u \zo^r}\left[\psi(W^T G(y))\right] - \ex_{Y \lfta \NN}\left[\psi(W^T Y)\right] \,\right| \leq C \log^3 k \,(\epsilon\log(1/\epsilon))\, \psil.  
\end{equation}
Indeed, to observe that \lref{lm:hhw} holds for any $(2\log k)$-wise independent family of hash functions and the proof of \lref{lm:foolsmooth} relies only on two key properties of $X \lfta \mu^n$: (1) For a fixed hash function $h$, the blocks $X_{h^{-1}(1)},X_{h^{-1}(2)},\ldots,X_{h^{-1}(t)}$ are independent of one another. (2) For a fixed hash function $h$, and $j \in [t]$, the distribution of each block $X_{h^{-1}(j)}$ satisfies $(2,2\log k)$-hypercontractivity for all $j \in [t]$. In other words, we used the property that for all $j \in [t]$, $u \in \reals^{|h^{-1}(j)|}$, 
  \begin{equation}
    \label{eq:prghci}
\ex\left[|\iprod{u}{X_{h^{-1}(j)}}|^{4\log k}\right] \leq (C \log k)^{2\log k}\,\|u\|^{4\log k}.    
  \end{equation}
Note that $X$ generated according to the generator $G$ satisfies both the above conditions: 1) For a fixed function $h$, the blocks are independent by definition and 2) the hypercontractivity inequality \eqref{eq:prghci} only involves the first $(4\log k)$-moments of the distribution of $X_{h^{-1}(j)}$. As a consequence, inequality \eqref{eq:prghci} holds for any $(4\log k)$-wise independent distribution over $\dpm^{|h^{-1}(j)|}$. 

We can now move from closeness in expectation to closeness in cdf distance by an argument similar to the proof of \tref{th:ipmain}, where we use \eref{eq:prgrp1} instead of \tref{th:ipsmooth}, to get 
\[ |\pr_{y \in_u \zo^r}\left[G(y) \in \ptp\right] - \pr_{Y \lfta \NN}\left[Y \in \ptp\right]| \leq \delta.\]
The theorem now follows from the above equation and \tref{th:ipmain}.
\end{proof}

\subsubsection{Approximate Counting for Integer Programs}
The PRG from \tref{th:prgmain} coupled with enumeration over all possible seeds immediately implies a quasi-polynomial time, deterministic algorithm for approximately counting, within a small additive error, the number of solutions to ``regular'' $\zo$-integer programs.
It turns out that ``regular'' integer programs correspond to a broad class of well-studied combinatorial problems.
For example, we obtain deterministic, approximate counting algorithms for {\em dense} set cover problems and $\{0,1\}$-contingency tables.  We obtain quasi-polynomial time algorithms even when there are a polynomial number of constraints (or polynomial number of rows in the contingency table setting).  As far as we know, there is no prior work giving nontrivial {\em deterministic} algorithms for counting solutions to integer programs with many constraints. 

\ignore{ when the domain is restricted to be $\zo$ or more generally for $\{0,1,\ldots,c\}$ for some constant $c \in \mathbb{Z}$
\subsubsection{Approximate Counting for Integer Programs}
The PRG from \tref{th:prgmain} immediately implies quasi-polynomial time deterministic algorithms for approximately counting the number of solutions to a large class of integer programs as we can enumerate our all possible inputs to the PRG.  Prior to this work, we are not aware of any algorithms for (deterministically) approximately counting the number of solutions to integer programs that run in time subexponential in the number of constraints.  Here we discuss a few notable special cases of our approximate counting algorithms: counting solutions to contingency tables and counting solutions to {\em dense} set cover instances.

{\bf Counting Contingency Tables}. The problem of counting contingency tables is the following. Given, positive integers $n, k$ $n > k$, ${\bf r} = (r_1,\ldots,r_n), {\bf c} = (c_1,\ldots,c_k)$ and a discrete set $S \subseteq \mathbb{Z}$, count the number of elements of $CT(r_1,\ldots,r_n,c_1,\ldots,c_n) = \{A \in S^{n \times k}\;:\;\sum_j \frac{1}{\sqrt{n}}\,A_{ij} = \frac{r_i}{\sqrt{n}}, \forall i \in [n],\sum_i \frac{1}{\sqrt{k}}A_{ij} = \frac{c_j}{\sqrt{k}}, \forall j \in [k]\}$ (that is the number of $n \times k$ matrices $A \subseteq S^{n \times k}$ with row and column sums given by ${\bf r}, {\bf c}$). 

Note that the constraints defining the integer program are $(1/\sqrt{k})$-regular. Let $\mu_S$ denote the uniform distribution over the set $S$ and suppose that $\mu_S$ is proper and hypercontractive. It follows from the results of \cite{Wolff2007} that for many important sets $S$ such as $\{0,1\},\{0,1,\ldots,C\}$ (for a bounded constant $C$), the uniform distribution $\mu_S$ over $S$ would (after a translation, if needed) be both proper and hypercontractive.

It now follows from setting the parameters of the generator in \tref{th:prgmain} appropriately that for every $\delta > \log^{1.6}k/k^{1/10}$, there exists a generator $G_{CT}:\zo^s \rgta \dpm^{n \times k}$ with $s = O(\log n \log^8 k/\delta^5)$ such that 
\[ |\pr_{y \in_u \zo^s}\left[G(y) \in CT({\bf r},{\bf c})\right] - \frac{|CT({\bf r},{\bf c})|}{|S|^{nk}}| \leq \delta.\]

Thus, by enumerating over all strings in $\zo^s$ we get a deterministic algorithm that approximates the fraction $|CT({\bf r},{\bf c})|/|S|^{nk}$ to within additive error $\delta$ and runs in time $\exp((\log n) \poly(\log k, 1/\delta))$.}

\newcommand{\ctrc}{\mathsf{CT}({\bf r},{\bf c}}

Here we discuss the case of {\em dense} set cover instances and remark that we get similar results for the special case of counting contingency tables. Covering integer programs are a fundamental class of integer programs and can be formulated as follows.
\begin{align}\label{eq:ipcovering}
\min &\sum_i X_i\nonumber\\
\text{s.t. }\sum_i a_{ij} X_i &\geq c_j,\; j = 1,\ldots,k,\\
X &\in \{0,1\}^n,\nonumber
\end{align}
where the coefficients of the constraints $a_{ij}$ and $c_{j}$ are all non-negative. An important special class of covering integer programs is set cover, which in turn is a generalization of many important problems in combinatorial optimization such as edge cover and multidimensional $\{0,1\}$-knapsack.

In the standard set cover problem, the input is a family of sets
$S_1,\ldots,S_n$ over a universe $U$ of size $k$ and an integer $t$.
The goal is to find a subfamily of sets $\calC$ such that $|\calC|
\leq t$ and the union of all the sets in $\calC$ equals $U$.  This
corresponds to a covering program (as given below) with $k$ constraints and
$n$ unknowns from $\{0,1\}$. 
\begin{align}\label{eq:setcovering}
\min &\sum_{i=1}^n X_i\nonumber\\
\text{s.t. }\sum_{i: j \in S_i} X_i &\geq 1,\; j \in U,\\
X &\in \{0,1\}^n,\nonumber
\end{align}
Call an instance of set cover $\epsilon$-dense if each element in $U$
appears in at least $1/\eps^2$ of the different sets $S_{i}$.
Clearly, all the linear constraints that appear in
\eref{eq:setcovering} are $\eps$-regular if the set cover instance is
$\epsilon$-dense. These constraints continue to be $\eps$-regular even
after translating from $\{0,1\}$ to $\dpm$ and appropriate
normalization. Thus, using the generator from \tref{th:prgmain} and
enumerating over all seeds to the generator, we have the following:

\vspace{-1ex}
\begin{theorem}
There exists a deterministic algorithm that, given instance of an
$\eps$-dense set covering problem with $k$ constraints over a universe
of size $n$, approximates the number of solutions to within an additive
error of at most $\delta 2^n$ in time $n^{\poly(\log k, 1/\delta)}$ as long as $\epsilon \leq \delta^{5}/(\log^{8.1} k)(\log (1/\delta))$.
\end{theorem}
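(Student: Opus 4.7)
The plan is to reduce the theorem to a direct invocation of \tref{th:prgmain}.  First I would recast $\epsilon$-dense set cover counting as membership in an $\epsilon$-regular polytope on the hypercube.  An instance with sets $S_1,\ldots,S_n$ over a universe $U$, $|U|=k$, and cover-size target $t$ has as its solutions the vectors $X\in\zo^n$ satisfying the $k+1$ linear inequalities $\sum_{i:\,j\in S_i} X_i \geq 1$ (one per $j\in U$) together with $\sum_i X_i \leq t$.  Applying the affine substitution $X_i = (1+x_i)/2$ converts each constraint into a halfspace on $\dpm^n$; let $\ptp$ be their intersection, with each bounding hyperplane normalized to unit norm.

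Next I would verify that $\ptp$ is $\epsilon$-regular in the sense of \dref{def:regularity}.  Fix an element $j$ appearing in $m_j$ sets; by $\epsilon$-density, $m_j \geq 1/\epsilon^2$.  The corresponding (unnormalized) constraint normal is supported on precisely those $m_j$ coordinates with equal magnitude, so after normalization each nonzero entry has absolute value $1/\sqrt{m_j}$, giving $\sum_i u_i^4 = m_j \cdot (1/m_j)^{2} = 1/m_j \leq \epsilon^2 = \epsilon^2 \|u\|^2$.  The cardinality constraint is $1/\sqrt{n}$-regular, hence $\epsilon$-regular in the regime of interest.

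With $\ptp$ established as $\epsilon$-regular, I would apply \tref{th:prgmain} with $\mu$ the uniform distribution on $\dpm$, which is both proper and hypercontractive.  The hypothesis $\epsilon \leq \delta^{5}/((\log^{8.1} k)(\log(1/\delta)))$ is exactly the condition required by that theorem to $\delta$-fool $\ptp$ using an explicit PRG $G:\zo^r \rgta \dpm^n$ of seed length $r = O((\log n \log k)/\epsilon) = O(\log n)\cdot\poly(\log k, 1/\delta)$.  The deterministic counting algorithm then enumerates all $2^r$ seeds, evaluates $G$ on each, tests membership in $\ptp$ in time $O(nk)$ per seed by checking each of the $k+1$ halfspaces, and outputs $2^n$ times the empirical fraction of seeds whose output lies in $\ptp$.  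By the PRG guarantee this estimates $|\ptp \cap \dpm^n|$ to additive error $\delta\cdot 2^n$, and the total running time is $2^r\cdot \poly(n,k) = n^{\poly(\log k,\,1/\delta)}$.

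The argument is essentially routine once \tref{th:prgmain} is in hand; the only checkpoints are the regularity verification under the $\epsilon$-density hypothesis and the parameter bookkeeping for the exponent polynomial.  No genuine obstacle remains, since all the substantive difficulty has been absorbed into proving the invariance principle and the PRG construction.
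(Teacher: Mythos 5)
Your proposal is correct and follows the same route the paper takes: recast the $\epsilon$-dense instance as an $\epsilon$-regular polytope over $\dpm^n$ (noting that density $\geq 1/\epsilon^2$ gives $\|u\|_4^4 = 1/m_j \leq \epsilon^2$ after normalization), invoke \tref{th:prgmain} with $\mu$ uniform on $\dpm$, and enumerate the $2^r = n^{\poly(\log k, 1/\delta)}$ seeds. The paper states these steps only in outline, so your regularity verification and seed-length bookkeeping simply supply detail the authors leave implicit.
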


We now elaborate on approximately counting the number of $\zo$ contingency tables. The problem of counting $\zo$-contingency tables is the following. Given, positive integers $n, k$ $n > k$, ${\bf r} = (r_1,\ldots,r_n) \in \mathbb{Z}^n$, ${\bf c} = (c_1,\ldots,c_k) \in \mathbb{Z}^k$ we wish to count the number of solutions, $\ctrc)$, to the following integer program whose solutions are matrices $X \in \zo^{ n \times k}$ with row and column sums given by ${\bf r}, {\bf c}$.
\begin{align*}
   \text{Find }X &\in \zo^{n \times k}\\
\text{s.t. }\sum_j X_{ij} &= r_i, 1 \leq i \leq n,\\
\sum_i X_{ij} &= c_j, 1\leq j \leq k.
\end{align*}

Observe that, after translating from $\zo$ to $\dpm$ and appropriately normalizing, solutions to the above integer program correspond to points from $\dpm^{n \times k}$ that lie in an intersection of $2(n+k)$-halfspaces each of which is $(1/\sqrt{k})$-regular (recall that the notion of regularity does not depend on the value of the $c_{i}$'s or $r_{j}$'s). 
Thus, as with dense instances of set cover, we can use \tref{th:prgmain} to count the number of $\zo$-contingency tables:
\begin{theorem}
There exists a deterministic algorithm that on input ${\bf r} \in \mathbb{Z}^n$, ${\bf c} \in \mathbb{Z}^k$, approximates $\ctrc)/2^{nk}$, the fraction of $\{0,1\}$-contingency tables with sums ${\bf r}, {\bf c}$, to within additive error $\delta$, and runs in time $n^{\poly(\log k, 1/\delta)}$.
\end{theorem}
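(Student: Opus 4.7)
The plan is to express the $\{0,1\}$-contingency table count as the Boolean volume of a regular polytope on the hypercube and then apply \tref{th:prgmain} with exhaustive seed enumeration. First, I would pass to $\dpm$-variables via $Y_{ij} = 2X_{ij} - 1$, converting each row-sum equality $\sum_j X_{ij} = r_i$ into the pair of halfspaces $\sum_j Y_{ij} \leq 2r_i - k$ and $\sum_j Y_{ij} \geq 2r_i - k$, and similarly for the column equalities. The resulting polytope $\ptp \subseteq \reals^{nk}$ has exactly $K = 2(n+k)$ faces, and its points in $\dpm^{nk}$ are in bijection with $\ctrc)$, so $\ctrc)/2^{nk} = \pr_{Y \in_u \dpm^{nk}}[Y \in \ptp]$.

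Next, I would verify regularity. A row constraint, unit-normalized as a vector in $\reals^{nk}$, has exactly $k$ nonzero entries each of magnitude $1/\sqrt{k}$, so its $\ell_4^4$ mass equals $k \cdot (1/\sqrt{k})^4 = 1/k$ and it is $(1/\sqrt{k})$-regular in the sense of \dref{def:regularity}; each column constraint likewise has $n$ nonzero entries of magnitude $1/\sqrt{n}$ and is $(1/\sqrt{n})$-regular. Thus, assuming $k \leq n$, every bounding halfspace of $\ptp$ is $(1/\sqrt{k})$-regular. Choosing the target error $\delta$ so that $1/\sqrt{k}$ satisfies the regularity hypothesis $\epsilon \leq \delta^5/((\log^{8.1} K)\log(1/\delta))$ of \tref{th:prgmain}---i.e., working in the regime where $k$ is at least polylogarithmic in $n+k$ divided by $\delta^{10}$, with very small $k$ handled separately by direct enumeration over each row's slice $\{Y \in \dpm^k : \sum_j Y_j = 2r_i - k\}$---\tref{th:prgmain} produces an explicit PRG $G:\zo^r \rgta \dpm^{nk}$ with $r = O((\log(nk)\log(n+k))/\epsilon)$ that $\delta$-fools $\ptp$ with respect to the uniform distribution on $\dpm^{nk}$.

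Finally, the algorithm enumerates all $2^r$ seeds, evaluates $G$ on each seed, and returns the empirical fraction of outputs that satisfy every row and column equality. By the fooling property, this fraction lies within additive error $\delta$ of $\ctrc)/2^{nk}$. In the regime $k \leq n$, substituting $\epsilon = \delta^5/\poly(\log(n+k))$ and using $\log(nk), \log(n+k) = O(\log n)$ yields $r = O(\log n) \cdot \poly(\log k, 1/\delta)$, so the total running time is $2^r \cdot \poly(nk) = n^{\poly(\log k, 1/\delta)}$, as claimed. The only non-routine step is this final parameter bookkeeping, ensuring that the $\log(n+k)$ factors appearing in the PRG seed length collapse into $\poly(\log k)$ together with a single $\log n$ factor; all the structural content is supplied by \tref{th:prgmain}.
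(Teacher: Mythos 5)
Your approach is the same as the paper's: translate the $\{0,1\}$-contingency table equalities to $\dpm$-halfspaces, observe that after normalization each of the $2(n+k)$ constraints is $(1/\sqrt{k})$-regular (with the column constraints being even more regular, $(1/\sqrt{n})$-regular, and $n\geq k$), and then apply \tref{th:prgmain} with exhaustive seed enumeration. Your regularity computation is exactly what the paper intends, and your seed-length bookkeeping ($\log(nk), \log(n+k) = O(\log n)$ so the seed is $O(\log n)\cdot\poly(\log k, 1/\delta)$) is the same as the paper's.

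One caveat worth flagging. You correctly notice that \tref{th:prgmain} only applies once $1/\sqrt{k}\leq\delta^5/((\log^{8.1}(n+k))\log(1/\delta))$, and you propose to ``handle very small $k$ by direct enumeration over each row's slice.'' That patch does not work as stated: the row slices $\{Y\in\dpm^k:\sum_j Y_j = 2r_i-k\}$ are coupled through the column-sum constraints, so enumerating them independently does not count tables, and the natural dynamic program over partial column sums runs in time $n^{O(k)}$, which is not $n^{\poly(\log k,1/\delta)}$ in the intermediate range $\omega(1)\leq k\leq \poly(\log n)/\delta^{10}$ where the PRG hypothesis fails. However, this is really a gap in the paper's own statement: the companion dense set cover theorem explicitly carries the side condition $\epsilon\leq\delta^5/((\log^{8.1}k)\log(1/\delta))$, and the contingency table theorem should carry the analogous constraint $\delta\geq\Omega((\log^{1.62}(n+k))\cdot\log^{0.2}(1/\delta)/k^{1/10})$ rather than being stated for all $\delta$; your proposal is more careful than the paper's (one-line) argument in surfacing this.
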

We remark that using results of Wolff \cite{Wolff2007}, who shows hypercontractivity for various discrete distributions, we can approximately count number of solutions to dense set cover instances and contingency tables over most natural domains.

\ignore{
We can also extend the above algorithm to counting contingency tables over domains other than $\zo$ as well. Let $S \subseteq \mathbb{Z}$. The problem of $S$-contingency tables is the following. Given positive integers $n > k$, ${\bf r} = (r_1,\ldots,r_n), {\bf c} = (c_1,\ldots,c_k)$ estimate 
\[ \ctrc,S) = |\{X \in S^{n \times k} : \sum_j X_{ij} = r_i, i \in [n],\; \sum_i X_{ij} = c_j, j \in [k]\}|.\]

Let $\mu_S$ denote the uniform distribution over the set $S$. Observe that, we can use \tref{th:prgmain} to deterministically approximate $\ctrc,S)$ as in the $\zo$ case above, provided $\mu_S$ is both proper and hypercontractive. Moreover, it follows from the results of \cite{Wolff2007} that for many natural sets $S$ such as $\{0,1,2,\ldots, C\}$ for $C$ a constant, $\mu_S$ (after a translation, if needed) is both proper and hypercontractive. Thus, we can deterministically approximate the number of $S$-contingency tables, $\ctrc,S)$ in quasi-polynomial time for various discrete sets $S$.
}


\newcommand{\genn}{G_{\mathcal{N}}}

\subsection{Pseudorandom Generators for Polytopes in Gaussian Space}\label{sec:prgnormal}
We now prove \tref{th:prgnormal}.  We use an idea of Ailon and Chazelle \cite{AilonC2009} and the invariance of the Gaussian measure to unitary rotations to obtain PRGs with respect to $\NN$ for {\em all} polytopes.  Similar ideas were used by Meka and Zuckerman to obtain PRGs for spherical caps (i.e., the case of one hyperplane).   In our setting, we must prove that, with respect to a random rotation, {\em all} of the bounding hyperplanes become regular with high probability.  Such a tail bound requires applying hypercontractivity. 

Let $H \in \reals^{n \times n}$ be the normalized Hadamard matrix with $HH^T = I_n$ and $H_{ij} \in \{1/\sqrt{n},-1/\sqrt{n}\}$. Ailon and Chazelle show that for any $w \in \rn$, and a random diagonal matrix $D$ with uniformly random $\{1,-1\}$ entries, the vector $HDw$ is regular with high probability. We derandomize their observation using hypercontractivity. For a vector $x \in \rn$, let $D(x) \in \rnn$ be the diagonal matrix with diagonal entries $x$. 

\begin{lemma}\label{lm:prgn1}
There exists a constant $C> 0$ such that the following holds. For any $w \in \rn$, $\|w\| = 1$ , $0 < \delta < 1$ and any $(C\log(k/\delta))$-wise independent distribution $\calD$ over $\dpm^n$, \[\pr_{x \lfta \calD}\left[\,\|HD(x)w\|_4^4 \geq C\log^2 (k/\delta)/n\,\right] \leq \delta/k.\]
\end{lemma}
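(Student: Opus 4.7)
The plan is to view each coordinate of $HD(x)w$ as a linear form in $x$ with a unit-$\ell_{2}$ coefficient vector and then bound $\|HD(x)w\|_{4}^{4}$ via Markov applied to a sufficiently high moment. Writing $H'_{ji}=\sqrt{n}\,H_{ji}\in\{-1,1\}$ and setting $v^{j}_{i}=H'_{ji}\,w_{i}$, we have $(HD(x)w)_{j}=\iprod{v^{j}}{x}/\sqrt{n}$, and, crucially, $\|v^{j}\|_{2}=\|w\|_{2}=1$ since the $H'_{ji}$ are just signs. Letting $Y_{j}=\iprod{v^{j}}{x}$, the target bound $\|HD(x)w\|_{4}^{4}\geq C\log^{2}(k/\delta)/n$ is equivalent to $\sum_{j=1}^{n}Y_{j}^{4}\geq Cn\log^{2}(k/\delta)$.

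To attack this I would take $q=2\log(k/\delta)$ and control $\ex_{x\lfta\calD}\bigl[\bigl(\sum_{j}Y_{j}^{4}\bigr)^{q}\bigr]$. The point is that $\bigl(\sum_{j}Y_{j}^{4}\bigr)^{q}$ is a polynomial in $x$ of degree $4q$, so as soon as the constant $C$ in the statement is chosen large enough that $\calD$ is $(4q)$-wise independent (concretely, $C\geq 8$), this expectation coincides with its counterpart under the uniform distribution on $\dpmn$. The heart of the argument is then to estimate this Rademacher moment, which I would do by combining Minkowski's inequality on the nonnegative summands $Y_{j}^{4}$ with $(2,4q)$-hypercontractivity (\lref{lm:hypercon}) applied to each linear form $Y_{j}$:
\[
\Bigl\|\sum_{j=1}^{n}Y_{j}^{4}\Bigr\|_{q}\;\leq\;\sum_{j=1}^{n}\|Y_{j}\|_{4q}^{4}\;\leq\;\sum_{j=1}^{n}(4q)^{2}\;=\;16nq^{2}.
\]

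Raising to the $q$-th power and applying Markov then gives
\[
\pr_{x\lfta\calD}\Bigl[\sum_{j}Y_{j}^{4}\geq Cn\log^{2}(k/\delta)\Bigr]\;\leq\;\Bigl(\frac{16q^{2}}{C\log^{2}(k/\delta)}\Bigr)^{q}\;=\;\Bigl(\frac{64}{C}\Bigr)^{2\log(k/\delta)},
\]
which is at most $\delta/k$ once the universal constant $C$ is chosen larger than $64\,e^{1/2}$. The one technical point requiring a little care is the degree accounting for $\bigl(\sum_{j}Y_{j}^{4}\bigr)^{q}$: after reducing using $x_{i}^{2}=1$ on $\dpmn$, every surviving monomial has degree at most $4q$, which is precisely why $O(\log(k/\delta))$-wise independence of $\calD$ suffices to replace $\calD$ by the uniform distribution in the moment computation. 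Beyond this bookkeeping, Minkowski together with hypercontractivity for linear forms is the only real input, and I do not anticipate further obstacles.
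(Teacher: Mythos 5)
Your proof is correct, and it reaches the same moment-plus-Markov bound by a cleaner route than the paper's own argument. The paper first computes $\ex[Z^2] = O(1/n^2)$ for $Z=\|HD(x)w\|_4^4$ by expanding the square, applying Cauchy--Schwarz across pairs of coordinates, and then $(2,8)$-hypercontractivity on the individual linear forms; it then lifts this to $\ex[|Z|^q]\leq q^{2q}(\ex[Z^2])^{q/2}$ by appealing to $(2,q)$-hypercontractivity for $Z$ viewed as a degree-$4$ multilinear polynomial (\lref{lm:hypercon} with $d=4$), and finishes with Markov at $q=2\log(k/\delta)$. You skip the $\ex[Z^2]$ computation entirely: writing $\sum_j Y_j^4 = n^2 Z$ and using Minkowski on the nonnegative terms, $\bigl\|\sum_j Y_j^4\bigr\|_q \leq \sum_j\|Y_j\|_{4q}^4 \leq 16nq^2$ by hypercontractivity applied \emph{only} at the level of the degree-$1$ forms $Y_j$, which is exactly the intermediate bound $\|Z\|_q = O(q^2/n)$ that the paper arrives at in two steps. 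Both proofs use the same value of $q$, the same observation that $(4q)$-wise independence lets one pass from $\calD$ to the uniform measure in the moment computation, and the same Markov finish; your version buys a shorter argument by never invoking hypercontractivity of a degree-$4$ polynomial, at the (negligible) cost of an extra factor coming from Minkowski rather than Cauchy--Schwarz. The degree-bookkeeping you flag — that $(\sum_j Y_j^4)^q$ reduces modulo $x_i^2=1$ to a multilinear polynomial of degree $\le 4q$, so $C\ge 8$ suffices for the independence — is exactly the right thing to check and matches the paper's implicit constraint $4q\le C\log(k/\delta)$.
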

\begin{proof}
Fix a $w \in \rn$ and a $C\log(k/\delta)$-wise independent distribution $\calD$ for constant $C$ to be chosen later. Let random variable $Z = \|HD(x)w\|_4^4=\sum_i\left(\sum_l H_{il}x_lw_l\right)^4$ for $x \lfta \calD$. Note that $x$ satisfies $(2,q)$-hypercontractivity for $q \leq C\log(k/\delta)$. Now,
\begin{align*}
  \ex\left[Z^2\right] &= \sum_{i,j} \ex\left[\left(\sum_{l} H_{il} x_{l} w_l\right)^4\left(\sum_{l'} H_{jl'} x_{l'} w_{l'}\right)^4\right]\\
&\leq \sum_{ij} \sqrt{\ex\left[\left(\sum_l H_{il}x_lw_l\right)^8\right]\cdot \ex\left[\left(\sum_l H_{jl}x_lw_l\right)^8\right]}\\
&\text{\hspace{.5in}Cauchy-Schwarz inequality}\\
&\leq \sum_{i,j} 8^4 \,\left(\ex\left[\left(\sum_l H_{il} x_l w_l\right)^2\right]\right)^2\left(\ex\left[\left(\sum_l H_{jl} x_l w_l\right)^2\right]\right)^2\\
& \text{\hspace{.5in}$(2,8)$-hypercontractivity}\\
&= 8^4 \sum_{i,j} \frac{1}{n^4} =\frac{c}{n^2}.
\end{align*}
The last equality follows from the fact that $E\left[x_{i}x_{j}\right] = 0$ for $i \neq j$ and that each $H_{ij}^{2} = 1/n$. 
Observe that $Z$ is a degree $4$ multilinear polynomial over $x_1,\ldots,x_n$. Therefore, by $(2,q)$-hypercontractivity, \lref{lm:hypercon}, applied to the random variable $Z$, for $q \leq C \log(k/\delta)/4$,
\[ \ex\left[|Z|^q\right] \leq q^{2q} (\ex\left[Z^2\right])^{q/2} \leq \frac{c^{q/2}\,q^{2q}}{n^q}.\]
Hence, by Markov's inequality, for $\gamma > 0$,
\[ \pr\left[\,|Z| > \gamma\,\right] = \pr\left[\,|Z|^q > \gamma^q\,\right] \leq \left(\frac{c^{1/2}\,q^2}{\gamma n}\right)^q.\]
The lemma now follows by taking $q = 2 \log(k/\delta)$ and $\gamma = 2\,c^{1/2}\, q^2/n$.
\end{proof}

Let $G:\zo^r \rgta \dpm^n$ be the generator from \tref{th:prgmain} for $r = O((\log n \log k)/\epsilon)$. Let $G_1:\zo^{r_1} \rgta \dpm^n$ generate a $C \log(k/\delta)$-wise independent distribution, for constant $C$ as in \lref{lm:prgn1}. Generators $G_1$ as above with $r_1 = O(\log(k/\delta) \log n)$ are known. Define $\genn:\zo^{r_1}\times \zo^{r} \rgta \rn$ as follows:
\[ \genn(x,y) = D(G_1(x))H G(y).\]

We claim that $\genn$ $\delta$-fools all polytopes with respect to $\ugs^n$. 

\begin{proof}[of \tref{th:prgnormal}]
Recall that $\epsilon = \Omega(\delta^{5.1}/\log^{8.1}k) > 1/n^{.51}$. The seed length of $\genn$ is $r_1 + r = O(\log n \log k/\epsilon)$. Fix $W \in \rnn$. Observe that  $W^T \genn(x,y) = (HD(G_1(x))W)^T G(y)$. Now, from \lref{lm:prgn1} and a union bound it follows that
\begin{equation}\label{eq:prggns1}
 \pr_{x \in_u \zo^{r_1}}\left[\, \text{$HD(G_1(x))W$ is not $\epsilon$-regular}\,\right] \leq \delta.  
\end{equation}
Further, from the invariance of $\NN$ with respect to unitary rotations, for any $x \in \zo^{r_1}$, 
\[ \pr_{z \lfta \NN}\left[(HD(G_1(x))W)^T z \in \rect(\theta)\right] = \pr_{z \lfta \NN}\left[W^T z \in \rect(\theta)\right].\]
Thus, from \tref{th:prgmain} applied to $\ugs$, we get that for $HD(G_1(x))W$ $\epsilon$-regular, 
\begin{equation}\label{eq:prggns2}
\left|\pr_{y \in_u \zo^r}\left[\, (HD(G_1(x))W)^T G(y) \in \rect(\theta)\,\right] - \pr_{z \lfta \NN}\left[W^T z \in \rect(\theta)\right]\right| \leq \delta.  
\end{equation}

The theorem now follows from Equations \eqref{eq:prggns1}, \eqref{eq:prggns2}.
\end{proof}

\subsection{Pseudorandom Generators for Intersections of Spherical Caps}\label{sec:prgsphere}
\tref{th:prgspherical} follows from \tref{th:prgnormal} and the following new invariance principle for polytopes over $\spn$:  The proof uses Nazarov's bound on Gaussian surface area and the large deviation bound from \tref{th:gaussianld}.
\begin{lemma}\label{lm:ipspherical}
  For any polytope $\ptp$ with $k$ faces, 
\[ \left|\pr_{X \in_u \spn}\left[X \in \ptp\right] - \pr_{Y \lfta \NN}\left[Y/\sqrt{n} \in \ptp\right] \right| \leq \frac{C\log n\, \log k}{\sqrt{n}}.\]
\end{lemma}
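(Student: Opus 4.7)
The plan is to couple the two distributions through the standard decomposition $Y = \|Y\| \cdot U$, where $Y \lfta \NN$ and $U := Y/\|Y\|$ is uniform on $\spn$ (and is independent of $\|Y\|$). Setting $\alpha := \|Y\|/\sqrt{n}$, the two points of interest become $U$ and $Y/\sqrt{n} = \alpha U$. Two standard concentration facts will drive the argument. First, $\chi$-concentration gives $|\alpha - 1| \leq \beta := C\sqrt{\log n}/\sqrt{n}$ with probability at least $1 - 1/n$. Second, each coordinate $\iprod{W^p}{Y} \sim \calN(0,1)$ (since $\|W^p\|=1$), so a Gaussian tail bound combined with $\|Y\| \geq \sqrt{n}/2$ and a union bound over $p \in [k]$ yields $\|W^T U\|_\infty \leq M := C\sqrt{\log k}/\sqrt{n}$ with probability at least $1 - 1/k$.

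On the joint good event, the $\ell_\infty$-discrepancy between $W^T U$ and $W^T(Y/\sqrt{n})$ is
$$\|W^T U - W^T(\alpha U)\|_\infty = |\alpha-1|\,\|W^T U\|_\infty \leq \beta M = O\!\left(\sqrt{\log n \log k}/n\right).$$
Consequently, if exactly one of $U$ and $Y/\sqrt{n}$ lies in $\ptp$, then some face $p$ must have $\theta_p$ sandwiched between $\iprod{W^p}{U}$ and $\iprod{W^p}{Y/\sqrt{n}}$, which forces $W^T(Y/\sqrt{n}) \in \rect(\theta + \lambda \ok)\setminus \rect(\theta - \lambda \ok)$ with $\lambda := \beta M$.

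It then remains to bound the probability of this boundary event. Rescaling by $\sqrt{n}$ (so that $Y/\sqrt{n}$ lies in the $\lambda$-boundary of $\ptp(W,\theta)$ iff $Y$ lies in the $\sqrt{n}\lambda$-boundary of $\ptp(W,\sqrt{n}\theta)$), \lref{lm:acrect} gives this probability at most $C\sqrt{n}\lambda \sqrt{\log k} = O(\sqrt{\log n}\,\log k/\sqrt{n})$. Adding the $O(1/n + 1/k)$ failure probability of the two concentration events then yields the claimed $O(\log n \log k/\sqrt{n})$ bound (with a $\sqrt{\log n}$ factor of slack). The main obstacle is that the naive estimate $|\iprod{W^p}{U}| \leq 1$ produces boundary width $\beta = \Theta(\sqrt{\log n/n})$, which after the $\sqrt{n}$ rescaling is $\Theta(\sqrt{\log n \log k})$ and gives a completely useless $O(1)$ bound; the $1/\sqrt{n}$ improvement $\|W^T U\|_\infty = O(\sqrt{\log k/n})$ coming from spherical (equivalently, Gaussian) concentration is what makes the argument succeed.
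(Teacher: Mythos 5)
Your proposal takes essentially the same route as the paper: decompose $Y \lfta \NN$ into its angular part $U = Y/\|Y\|$ (uniform on $\spn$) and radial part $\|Y\|$, use $\chi^2$-concentration of $\|Y\|$ and Gaussian tail bounds on the coordinates of $W^T Y$ to bound the $\ell_\infty$-discrepancy between $W^T U$ and $W^T(Y/\sqrt n)$, and finish by rescaling and invoking the anti-concentration estimate \lref{lm:acrect} (Nazarov's surface-area bound). The only cosmetic difference is that the paper controls $\|\theta\|_\infty$ after a WLOG reduction (if $\|\theta\|_\infty$ is large both probabilities are simultaneously close to $0$ or $1$), whereas you control $\|W^T U\|_\infty$ directly; both quantities are bounded by $O(\sqrt{\log k/n})$ by the same concentration estimates and both produce the discrepancy $\lambda = O(\sqrt{\log n\,\log k}/n)$.

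There is, however, a genuine parameter slip. You set the failure probability of the event $\|W^T U\|_\infty \le M := C\sqrt{\log k}/\sqrt n$ to be $1/k$, and then add $O(1/n + 1/k)$ to the final bound. But for small $k$ (say $k$ a constant) the term $1/k$ is $\Theta(1)$, while the claimed bound $C\log n\,\log k/\sqrt n$ tends to $0$, so the argument as written does not establish the lemma. The fix costs nothing: take $M = C\sqrt{\log(kn)}/\sqrt n$, which makes each face fail with probability at most $1/(kn)$ and the union over $k$ faces fail with probability at most $1/n$. This only inflates $\lambda$ by a factor of $\sqrt{1 + \log n/\log k}$, so the boundary term becomes $O\!\bigl(\sqrt{\log n(\log k + \log n)\log k}/\sqrt n\bigr) = O(\log n\,\log k/\sqrt n)$, exactly absorbing the $\sqrt{\log n}$ slack you already noted. (The paper handles the same issue by using a per-face failure probability of $\delta/k$ with $\delta = c/\sqrt n$.) With that correction your argument is complete.
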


\ignore{
\begin{lemma}\label{lm:ldrandvec}
For $Y \lfta \NN$,
\[ \pr\left[\,|\|Y\|-\sqrt{n}| > t\,\right] \leq a \exp(-b\, t^2),\]
where $a,b > 0$ are universal constants.
\end{lemma}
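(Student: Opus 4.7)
The plan is to couple the two distributions via polar decomposition: sample $Y \lfta \NN$ and set $X = Y/\|Y\|$, which by rotational invariance of $\NN$ is uniform on $\spn$ and independent of $\rho := \|Y\|/\sqrt{n}$. Then $Y/\sqrt{n} = \rho X$ lies on the same ray through the origin as $X$, so
\[\left|\pr[X \in \ptp] - \pr[Y/\sqrt{n} \in \ptp]\right| \leq \pr\bigl[\{X \in \ptp\} \triangle \{\rho X \in \ptp\}\bigr].\]
Since a face of $\ptp$ can separate two points on a common ray only when the ray crosses it strictly between them, disagreement on face $j$ forces $\theta_j$ to lie strictly between $\langle w_j, X\rangle$ and $\rho\langle w_j, X\rangle$, which rearranges, after multiplying by $\|Y\|$, to the crucial estimate
\[|\langle w_j, Y\rangle - \sqrt{n}\theta_j| \leq |\sqrt{n}-\|Y\||\cdot|\theta_j|.\]

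I will then condition on the norm-concentration event $|\sqrt{n}-\|Y\||\leq t$; by \lref{lm:ldrandvec} its complement contributes at most $a\exp(-bt^2)$, which is negligible for $t = C\sqrt{\log n}$. Next I split the faces into \emph{small} ($|\theta_j|\leq L := C_0\sqrt{\log k}/\sqrt{n}$) and \emph{large} ($|\theta_j|>L$). If a large face $j$ is responsible for the disagreement, the displayed inequality forces $|\langle w_j, Y\rangle| \geq (\sqrt{n}-t)|\theta_j|\gtrsim \sqrt{\log k}$; summing the standard Gaussian tail over the $k$ large faces yields total probability $O(k^{1-C_0^2/8}) = O(1/k)$ for $C_0$ sufficiently large.

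For the remaining disagreements (all responsible faces are small), a short case analysis on whether $Y\in\sqrt{n}\ptp$ or not shows, using the displayed inequality together with the linear constraints $\langle w_i, Y\rangle\leq \|Y\|\theta_i$ coming from $X \in \ptp$, that $Y$ must lie in the annular region
\[R := \{y : W^T y \in \rect(\sqrt{n}\theta + tL\cdot\ok)\setminus\rect(\sqrt{n}\theta - tL\cdot\ok)\}.\]
Applying \lref{lm:acrect}, after shifting the argument to $\sqrt{n}\theta+tL\cdot\ok$ and using width $\lambda = 2tL$, yields $\pr[Y\in R]\leq O(tL\sqrt{\log k}) = O(t\log k/\sqrt{n})$. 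Setting $t = C\sqrt{\log n}$ gives a total error of $O(\sqrt{\log n}\,\log k/\sqrt{n}) + n^{-\Omega(1)} + O(1/k)$, well within the claimed bound $C\log n\,\log k/\sqrt{n}$.

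The main obstacle I anticipate is the case analysis verifying that disagreement on a small face forces $Y\in R$ when $Y\notin\sqrt{n}\ptp$: several constraints could be violated simultaneously, and I must check that each violation obeys $\langle w_i, Y\rangle - \sqrt{n}\theta_i\leq tL$. This fails precisely when some \emph{large} face $i$ has $|\langle w_i, Y\rangle|$ atypically large, which is the rare sub-event already absorbed into the $O(1/k)$ estimate. Once this bookkeeping is in place, the remaining pieces are routine applications of \lref{lm:acrect} and Gaussian tail bounds, and the crucial point is that the tube width $tL$ carries an extra factor of $\sqrt{\log k}/\sqrt{n}$ relative to the naive tube width $t$ arising from the coupling, which is precisely the source of the $1/\sqrt{n}$ savings over the direct Nazarov bound.
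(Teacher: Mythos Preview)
Your proposal does not address the stated lemma at all. \lref{lm:ldrandvec} is the classical subgaussian concentration inequality for $\|Y\|$ when $Y \lfta \NN$; the paper does not prove it but simply cites it as a known fact (referring to Tao's exposition of Talagrand's inequality). A direct proof would proceed, for instance, by showing that $y \mapsto \|y\|$ is $1$-Lipschitz and applying Gaussian concentration of measure, or by a bare-hands moment computation for $\|Y\|^2 - n$.

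What you have written is instead a proof of \lref{lm:ipspherical}, the invariance principle comparing the sphere to the scaled Gaussian. You even invoke \lref{lm:ldrandvec} as an ingredient. Your argument for \lref{lm:ipspherical} is in fact quite close to the paper's own proof of that lemma: both couple $X$ and $Y/\sqrt{n}$ via $X = Y/\|Y\|$, both truncate the $\theta_j$'s at a threshold of order $\sqrt{\log k}/\sqrt{n}$ (the paper simply observes that faces with $|\theta_j|$ larger than this are essentially irrelevant by a Gaussian tail bound, which is your ``large face'' case), both condition on $|\|Y\|-\sqrt{n}|\leq t$ via \lref{lm:ldrandvec}, and both finish by bounding the probability that $W^T Y$ lands in the annulus $\rect(\sqrt{n}\theta+\lambda\ok)\setminus\rect(\sqrt{n}\theta-\lambda\ok)$ using \lref{lm:acrect}. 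So had the target been \lref{lm:ipspherical}, your approach would have been essentially the paper's, with slightly more explicit bookkeeping. But as a proof of \lref{lm:ldrandvec} it is circular.
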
}

\begin{proof}
Fix a polytope $\ptp(W,\theta)$. Let $X \in_u \spn$ and $Y \lfta \NN$. Note that $Y/\|Y\|$ is uniformly distributed over $\spn$. Fix $\delta = c/n^{1/2}$ for a constant $c$ to be chosen later. Observe that for $Y \lfta \NN$, and $u \in \reals^n$, $\|u\| = 1$, $\iprod{u}{Y}$ is distributed as $\ugs$. Hence, for any $u \in \reals^n, \|u\| = 1$, 
\[ \pr\left[\,|\iprod{u}{Y}| \geq \sqrt{\log(k/\delta)}\,\right] \leq \frac{\delta}{k}.\]
Therefore, by a union bound,
\[ \pr\left[\,\|W^T Y\|_\infty > \sqrt{\log(k/\delta)}\,\right] \leq \delta.\]
Further, by applying \tref{th:gaussianld} to the Euclidean norm (which has Lipschitz constant $1$) and the fact that $\ex\left[\|Y\|\right] = \Omega(\sqrt{n})$, we get
\begin{multline*}
 \pr\left[\, |W^T Y\|_\infty/\|Y\| > C\sqrt{\log (k/\delta)}/\sqrt{n}\,\right] \leq \\
\pr\left[\,\|W^T Y\|_\infty > \sqrt{\log(k/\delta)}\,\right] + \pr\left[\, \|Y\|_2 < \sqrt{n}/C\,\right] \leq \delta + 2\exp(-\Omega(n)) \leq 2 \delta,
\end{multline*}
for a sufficiently large constant $C$ and large $n$. Therefore, as $Y/\|Y\|$ is uniformly distributed over $\spn$, 
 
\[ \pr\left[\,\|W^T X\|_\infty > \sqrt{C \log(k/\delta)}/\sqrt{n}\,\right] \leq 2\delta,\]
From the above two equations, it follows that to prove the theorem we can assume that 
\begin{equation*}\label{eq:prgsp3}
 \|\theta\|_\infty < \sqrt{C \log(k/\delta)/n}.
\end{equation*} 
Now, applying \tref{th:gaussianld} (again to the Euclidean norm) and the above equation it follows that 
\begin{equation}\label{eq:prgsp2}
 \pr\left[\,|\|Y\| - \sqrt{n}|\,\|\theta\|_\infty \geq \sqrt{C\log(1/\delta)\log(k/\delta)/n}\,\right] \leq 2\delta.  
\end{equation}
Let $\lambda = \sqrt{C\log(1/\delta)\log(k/\delta)/n}$. Then, since $Y/\|Y\| \in_u \spn$
\begin{align*}\label{}
&\left|\,\pr\left[X \in \ptp\right] - \pr\left[Y/\sqrt{n} \in \ptp\right] \,\right| \\
&= \left|\,\pr\left[W^T X \in \rect(\theta)\right] - \pr\left[W^T Y/\sqrt{n} \in \rect(\theta)\right] \,\right|\\
&= \left|\,\pr\left[W^T Y \in \|Y\|\, \rect(\theta)\right] - \pr\left[W^T Y \in \sqrt{n}\, \rect(\theta)\right] \,\right|\\
&\leq \pr\left[\,|\|Y\| - \sqrt{n}|\,\|\theta\|_\infty \geq \lambda\,\right] \\
&+ \pr\left[\,W^TY \in \rect(\sqrt{n}\theta + \lambda \ok) \setminus \rect(\sqrt{n}\theta - \lambda \ok)\,\right] \\
&\leq  2\delta + O(\,\lambda \sqrt{\log k}\,).\text{\hspace{.5in}(\eref{eq:prgsp2}, \lref{lm:acrect})}
\end{align*}
The lemma now follows by choosing $\delta = c /n^{1/2}$ for a sufficiently large constant $c$. 
\ignore{
Fix $\delta = O(1/n^{1/4})$ to be chosen later. Note that we can assume without loss of generality that
\begin{equation}\label{eq:prgsp3}
 \|\theta\|_\infty < \sqrt{C \log(k/\delta)/n}, 
\end{equation}
for a sufficiently large constant $C$. For, if not $\pr_{X \in_u \spn}\left[X \in \ptp\right], \pr_{Y \lfta \NN}\left[Y/\sqrt{n} \in \ptp\right] = O(\delta) = O(1/n^{1/4})$. We now use the well known fact that $\|Y\|$ is concentrated about $\sqrt{n}$: for a sufficiently large constant $C > 0$,
\[ \pr\left[\,|\|Y\| - \sqrt{n}| \geq C \log^{1/2}(1/\delta)n^{1/4}\,\right] \leq \delta.\]
From \eref{eq:prgsp3} and the above equation it follows that 
\begin{equation}\label{eq:prgsp2}
 \pr\left[\,\|\,(\|Y\| - \sqrt{n})\theta\,\|_\infty \geq C \sqrt{\log(k/\delta)\log(1/\delta)}/n^{1/4}\,\right] \leq \delta.  
\end{equation}

From Equations \eqref{eq:prgsp1}, \eqref{eq:prgsp2} it follows that for $\lambda =  C(\log(k/\delta)\,\log(1/\delta))^{1/2}/n^{1/4}$, 
\begin{align*}
  \left|\pr_{X \in_u \spn}\left[X \in \ptp\right] - \pr_{Y \lfta \NN}\left[ W^T Y \in \rect(\sqrt{n} \theta)\right]\right| &\leq  \pr\left[\,\|\,(\|Y\| - \sqrt{n})\theta\,\|_\infty \geq \lambda\,\right] \\
&+ \pr_{Y \lfta \NN}\left[W^TY \in \rect(\sqrt{n}\theta + \lambda \ok) \setminus \rect(\sqrt{n}\theta - \lambda \ok)\right]\\
&\leq \delta + C \lambda \sqrt{\log k}.\text{\hspace{.5in}(\eref{eq:prgsp2}, \lref{lm:acrect})}
\end{align*}
The lemma now follows by setting $\delta = 1/n^{1/4}$. 
}
\end{proof}

\begin{proof}[of \tref{th:prgspherical}]
Define $\gsp:\zo^{r_1} \times \zo^r\rgta \spn$ by $\gsp(x,y) = G_\ugs(x,y)/\sqrt{n}$. It follows from \tref{th:prgnormal} and \lref{lm:ipspherical} that $\gsp$ fools polytopes over $\spn$ as in the theorem.
\end{proof}

\ignore{
\section{Misc proofs}
\begin{proof}[ of \clref{cl:ldtom}]
The claim follows from simple integration as follows. Without loss of generality suppose that $r$ is odd.
\begin{align*}
  \ex[|Z|^r] &= \int_{y= 0}^\infty \, pr\left[|Z|^r \geq y\right] dy\\
&\leq 2\int_{y=0}^{\infty} \exp(-2t^2\,y^{2/r}) dy \\
& = 2 \int_{z=0}^\infty \exp(-2 t^2 z^2) \,r z^{r-1} dz \text{\hspace{.5in} (substituting $z^r = y$)}\\
& = 2 r \left(\frac{1}{2t}\right)^{r-1} \frac{(r-1)!}{2^{(r-1)/2} ((r-1)/2)!} \text{\hspace{.5in} (the $(r-1)$'th moment of $\mathcal{N}(0,1/2t)$)}\\
&\leq \left(\frac{1}{2t}\right)^{r-1} r^{r/2}.
\end{align*}
\end{proof}
}

\section*{Acknowledgments}
Thanks to Fedja Nazarov for helping us compute an integral.  We had
useful conversations with Carly Klivans, Ryan O'Donnell, Alistair
Sinclair, Eric Vigoda, and David Zuckerman.  

\bibliographystyle{amsalpha}
\bibliography{poly}


\end{document}